\RenewDocumentCommand{\title}{om}{%
   \IfNoValueTF{#1}
     {\gdef\shorttitle{Resonance expansion for quantum walks}}% 
     {\gdef\shorttitle{#1}}%
   \gdef\@title{#2}%
}
\newtheorem{theorem}{Theorem}
\newtheorem{lemma}{Lemma}[section]
\newtheorem{proposition}[lemma]{Proposition}
\newtheorem{definition}[lemma]{Definition}
\newtheorem{remark}[lemma]{Remark}
\newtheorem{corollary}[lemma]{Corollary}
\newtheorem*{assumption}{Assumption}
\def\square{\hbox{\vrule\vbox{\hrule\phantom{o}\hrule}\vrule}}
\newcommand{\be}{\begin{equation}}
\newcommand{\ee}{\end{equation}}
\newcommand{\ben}{\begin{equation*}}
\newcommand{\een}{\end{equation*}}
\newcommand{\til}[1]{\widetilde{#1}}
\numberwithin{equation}{section}
\newcommand{\N}{\mathbb{N}}
\newcommand{\Z}{\mathbb{Z}}
\newcommand{\R}{\mathbb{R}}
\newcommand{\C}{\mathbb{C}}
\newcommand{\cH}{{\mathcal H}}
\newcommand{\p}{\partial}
\newcommand{\e}{\varepsilon}
\newcommand{\dl}{\delta}
\newcommand{\pphi}{\varphi}
\newcommand{\im}{{\rm Im}\hskip 1pt }
\newcommand{\ord}{{\mathcal O}}
\newcommand{\ope}[1]{{\operatorname{#1}}}
\newcommand{\mc}[1]{{\mathcal{#1}}}
\newcommand{\qtext}[1]{\quad\text{#1 }\ }
\newcommand{\out}{{\sharp}}
\newcommand{\inc}{{\flat}}
\newcommand{\refup}[1]{$\ref{#1}$}
\numberwithin{equation}{section}
\begin{document}

\title{%Resonances in finitely perturbed quantum walks, resonance  expansion and generic simplicity
Resonance expansion for quantum walks and its applications to the long-time behavior
%Long-time behavior of finitely perturbed quantum walks via resonance expansion
}
\author{Kenta Higuchi}
\author{Hisashi Morioka}
\author{Etsuo Segawa}

%%%%%%%%%%%%%%%%%%%%%%%%%%%%%%%%%%%%%%%%%%%%

\maketitle

\begin{abstract}
In this paper, resonances are introduced to a class of quantum walks on $\mathbb{Z}$. Resonances are defined as poles of the meromorphically extended resolvent of the unitary time evolution operator. In particular, they appear inside the unit circle.  Some analogous properties to those of quantum resonances for Schr\"odinger operators are shown. Especially, the resonance expansion, an analogue of the eigenfunction expansion, indicates the long-time behavior of quantum walks. The decaying rate, the asymptotic probability distribution, and the weak limit of the probability density are described by resonances and associated (generalized) resonant states. The generic simplicity of resonances is also investigated.
%We study long time behavior of finitely perturbed discrete-time quantum walks on $\Z$ by introducing the resonance expansion. We show that each initial state is written as a linear combination of (generalized) resonant states whose time evolution is trivial. The decaying rate on each compact interval and the probability density are given in terms of the resonances. We also note the generic simplicity of resonances and the relation between the distribution of resonances and quasi periodic dynamics in the double barrier problem.

%We define resonances for finitely perturbed quantum walks as poles of the scattering matrix in the lower half plane. 
%We show a resonance expansion which describes the time evolution in terms of resonances and corresponding Jordan chains. 
%In particular, the decay rate of the survival probability is given by the imaginary part of resonances and the multiplicity. 
%We prove generic simplicity of the resonances, although there are quantum walks with multiple resonances.
\end{abstract}
{\it Keywords:} resonance; resonance expansion; 
%multiplicity of resonances; scattering matrix; 
quantum walk.
\vskip 0.5cm
{\it 2020 Mathematics Subject Classification:} 47A40; 47A70; 47D06; 47N50; 60F05; 81U24.

%===============================		Introduction		==============================================
\section{Introduction}
Resonances, a generalization of eigenvalues, are known as characteristic quantities to observe the long-time behavior in various problems. The real and imaginary parts of a resonance are interpreted as the rate of oscillations and that of decay of a physical state, respectively. For example, in the study of Schr\"odinger equations, the imaginary part of each resonance gives the reciprocal of the half-life time of an associated state. %considered quantum system. 
%We are interested in the long-time behavior of quantum walks. 
In this manuscript, we introduce resonances to the discrete-time quantum walk on $\Z$. We observe some properties such as decaying rate and probability distribution of quantum walkers from the resonance expansion, which is similar to the eigenfunction expansion. We first briefly explain our results in the following three Subsections, and then mention backgrounds, motivations, and related works in Subsection~\ref{sec:Background}.

\subsection{Motivating example}\label{Sec:Moti-Exam}
Let us introduce a discrete-time quantum walk on $\Z$. For a given initial state $\psi\in\cH:=l^2(\Z;\C^2)$, the state at the time $n\in\Z$ is given by
\be
\psi_n:=U^n\psi.
\ee
Here, the unitary operator $U=SC$ is defined for a given sequence $(C(x))_{x\in\Z}$ of $2\times2$ unitary matrices as follows; 
\ben
(C\psi)(x)=C(x)\psi(x),\quad
(S\psi)(x)=
\begin{pmatrix}1&0\\0&0\end{pmatrix}\psi(x+1)+\begin{pmatrix}0&0\\0&1\end{pmatrix}\psi(x-1)\quad(x\in\Z,\ \psi\in\cH).
\een
We call $C$ and $S$ coin and shift, respectively. 

\begin{figure}
\centering
\includegraphics[bb=0 0 1129 221, width=15cm]{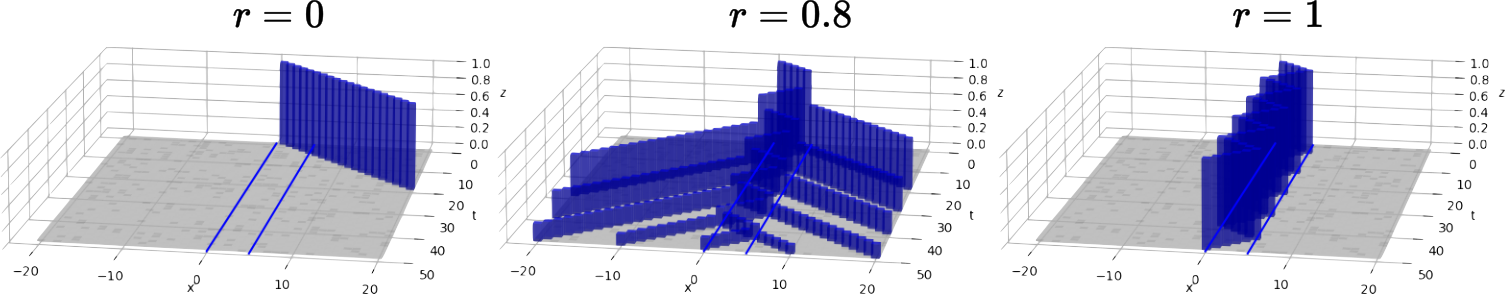}
\caption{\small Time evolution of $\|U_r^t\psi(x)\|_{\C^2}$ with $k=5$, $\psi(1)={}^t(0,1)$, $\psi(x)=0$ $(x\neq1)$}
\label{Fig:Evo}
\end{figure}
As a motivating example, let us start with the simplest case called double barrier problem (see Figure~\ref{Fig:Evo} and Proposition~\ref{prop:DoubleBarrier}). 
Let $k$ be a positive integer and let $0\le r\le1$. Define $U_r=SC_r$ by $C_r(x)=I_2$ (the identity matrix) for $x\in\Z\setminus\{0,k\}$ and %$=B_r$ with 
\ben
C_r(0)=C_r(k)=\begin{pmatrix}\sqrt{1-r^2}&r\\-r&\sqrt{1-r^2}\end{pmatrix}.
\een
Suppose that $\psi(x)=0$ holds except for a finite number of $x\in\Z$. 
Then for the free quantum walk $U_0=S$, the first (resp. second) entry of each vector $\psi(x)\in\C^2$ shifts to left (resp. right) each time $U_0$ is applied, and we do not find any quantum walker in each compact set after a certain time passing. Contrary, for $U_1$, we find $2k$ eigenvalues $\zeta_1,\ldots,\zeta_{2k}$ $(\zeta_j=\exp(i\pi(2j-1)/2k))$ and $\psi_n$ varies periodically with its period $2k$ (with respect to $n\gg1$) in each compact set. Moreover, the behavior in the compact set is given by a linear combination of eigenstates.

For the intermediate cases $0<r<1$, the time evolution is similar to neither of the above cases. 
%different from both of the above cases. 
We may find quantum walker between $0$ and $k$ also for large $n$. However, there is no eigenvalue of $U_r$. Moreover, $\psi_n$ is $2k$-quasi periodic in each compact set. Let $J\subset\R$ be a large interval.  We have
\be
\psi_{n+2k}(x)=-r^2\psi_n(x)
\ee
for any $x\in J\cap\Z$ and $n\ge \exists n_0\in\N$. This behavior is explained in a similar way to that of $U_1$ by generalizing eigenvalues to resonances. In this case, $\lambda_1,\ldots,\lambda_{2k}$ given by
\be
\lambda_j=r^{\frac 1k}\zeta_j=r^{\frac 1k}e^{i\frac{\pi (2j-1)}{2k}}\quad(j=1,2,\ldots,2k)
\ee
are resonances (in the sense of Definition~\ref{def:Res}). Let $\pphi_j$ be a map $\Z\to\C^2$ defined by
\be
\pphi_j(x)=
\begin{pmatrix}
\hphantom{-r}(\mathbbm{1}_{(-\infty,-1]}(x)\sqrt{1-r^2}+\mathbbm{1}_{[0, k-1]}(x))\lambda_j^{x\hphantom{-}}\\
-r(\mathbbm{1}_{[k+1,+\infty)}(x)\sqrt{1-r^2}+\mathbbm{1}_{[1,k]}(x))\lambda_j^{-x}
\end{pmatrix},
%&\begin{pmatrix}\lambda_j^x\\-r\lambda_j^{-1-x}\end{pmatrix}&&(0<x<k)\\&\begin{pmatrix}0\\-r\sqrt{1-r^2}\lambda_j^{-x-1}\end{pmatrix}&&(x\ge k)\end{aligned}\right.
\ee
with $\mathbbm{1}_A$ standing for the characteristic function of each subset $A$ of $\R$:
\ben
\mathbbm{1}_A(x)=\left\{\begin{aligned}&1&&\text{when }\ x\in A\\&0&&\text{when }\ x\notin A.\end{aligned}\right.
\een
Then $\pphi_j$ satisfies the eigenequation
\ben
U_r\pphi_j=\lambda_j\pphi_j.
\een
Since $\left\|\pphi_j(x)\right\|_{\C^2}$ grows exponentially as $\left|x\right|\to\infty$, $\pphi_j$ does not belong to $\cH$, and $\lambda_j$ is not an eigenvalue of $U_r$ (see Figure~\ref{Fig:Res_St}). However, in $J$, each initial state $\psi$ supported on $J$ is decomposed into a linear combination
\ben
\psi=\sum_{j=1}^{2k}c_j\pphi_j+\pphi_0,
\een
where $c_1,\ldots,c_{2k}$ are constants, and $\pphi_0\in\cH$ is such that $U^n \pphi_0=0$ for $n\gg1$. 
\begin{figure}
\centering
\includegraphics[bb=0 0 400 292, width=10cm]{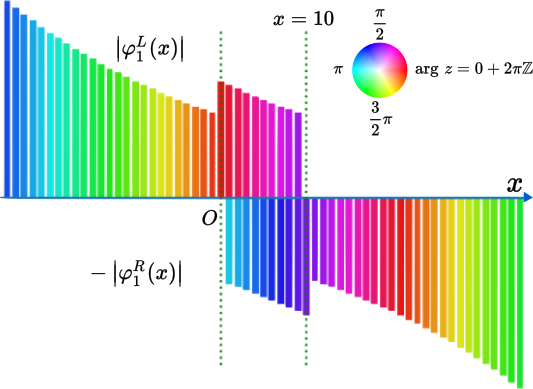}
\caption{Resonant state $\pphi_1$ with $r=2^{-1/2}$, $k=10$}
\label{Fig:Res_St}
\end{figure}
This implies that in $J$, we have
\ben
\psi_n=U_r^n\psi=\sum_{j=1}^{2k}c_j\lambda_j^n\pphi_j
\een
for large $n$. This is the resonance expansion of $\psi_n$ by resonant states of $U_r$ (Theorem~\ref{thm:ResExp} for general cases). In particular, this is quasi-periodic since $\lambda_j^{2k}=(r^{1/k}\zeta_j)^{2k}=r^2$. From this formula, we easily see the decaying rate of the probability to find a quantum walker in $J$:
\ben
\|\mathbbm{1}_J\psi_n\|_{\cH}%=\sqrt{\sum_{x\in J}\left\|\psi_n(x)\right\|_{\C^2}^2}
=r^{n/k}\left\|\mathbbm{1}_J\sum_{j=1}^{2k}c_j\zeta_j^n\pphi_j\right\|_{\cH}\le
r^{n/k}\left(\sum_{j=1}^{2k}\left|c_j\right|\|\mathbbm{1}_J\pphi_j\|_{\cH}\right)=M(\psi) r^{n/k},
\een
where $M(\psi)$ is a constant determined by $\psi$. The decaying rate is the modulus of the resonances: $r^{1/k}=\left|\lambda_j\right|$  (Corollary~\ref{cor:EstiSurvProb}).%{cor:PtWiseAsym}).
%is the only part depending on the time $n$. 
%By definition, we have $\zeta_j^{2k}=1$ for any $j$, and the probability is $2k$-quasi periodic with respect to $n$. Note that this is bounded by %the other factor than $r^{n/k}$ of the most right hand side is independent of the time $n$. 

\subsection{Definition of the resonance and resonance expansion}
Let us make precise and generalize the above argument. The resonances are defined as poles of the (meromorphically continued) resolvent operator of $U$. 
Throughout this manuscript, we consider finite rank perturbations without any ``isolations" on $\Z$:
\begin{assumption}
The diagonal entries of each $C(x)$ never vanishes for $x\in\Z$, and $C(x)=I_2$ except a finite number of $x\in\Z$. 
\end{assumption}
Since $U$ is unitary, the spectrum is a subset of the unit circle. Moreover, under Assumption, there is no eigenvalue of $U$, the spectrum of $U$ is the unit circle, and is absolutely continuous \cite[Lemmas 2.1 and 2.2]{MoSe}.  We note that the non-existence of eigenvalues is not necessary for studying resonances. For most of our results, one may find an analogue even if the diagonal entries of $C(x)$ vanishes for some $x\in\Z$.  However, the unique continuation principle for the equation $(U-\lambda)\psi=0$ follows from the condition (shown for example by the method of transfer matrices e.g. proof of Lemma~\ref{lem:Chara-T}), and it simplifies arguments.

Let $R(\lambda):=(U-\lambda)^{-1}$ for $\left|\lambda\right|>1$ be the resolvent operator (bounded on $\cH\to \cH$), and let $J\subset\R$ be a bounded interval.
As we will see in Proposition~\ref{prop:ContiResolv}, the cut-off resolvent $R_J(\lambda):=\mathbbm{1}_JR(\lambda)\mathbbm{1}_J$ extends meromorphically to whole $\lambda\in\C$. 
Moreover, the poles of $R_J$ and the multiplicity of each non-zero pole are invariant with respect to the choice of the interval $J$ containing the perturbed region $J\supset \ope{chs}(C-I_2)$. Here, we denote the convex hull of the support $\ope{supp}(C-I_2):=\{x\in\Z;\,C(x)\neq I_2\}$ of $C-I_2$ by
\ben
\ope{chs}(C-I_2):=[\inf\ope{supp}(C-I_2),\,\sup\ope{supp}(C-I_2)].
\een
Taking these facts into account, we define resonances in the following way.
\begin{definition}\label{def:Res}
We say $\lambda\in\C$ is a \textit{resonance} of $U$ if it is a pole of the extended family $\{R_J(\lambda);\,\lambda\in\C\}$, where $J$ is a bounded interval containing $\ope{chs}(C-I_2)$. We denote the set of resonances by $\ope{Res}(U)$. We define the (algebraic) multiplicity $m(\lambda)$ of each non-zero resonance $\lambda\in\ope{Res}(U)\setminus\{0\}$ by
\be
m(\lambda):=\ope{rank}\oint_\lambda R_J(\lambda')d\lambda',
\ee
where the integral runs over a small circle enclosing $\lambda$ counterclockwise. 
\end{definition}

As an analogue of the $L^2$-theory, we define the vector space of compactly supported states $\cH_{\ope{comp}}$ and that of locally $\cH$ maps $\cH_{\ope{loc}}$ by
\begin{align*}
&\cH_{\ope{comp}}:=\{\psi\in\cH;\,\ope{supp}\,\psi\text{ is compact}\},\\
&\cH_{\ope{loc}}:=\{\psi:\Z\to\C^2;\,\mathbbm{1}_J\psi\in\cH\text{ for any compact }J\subset\R\}.
\end{align*}
Note that $\cH_{\ope{loc}}$ coincides with the set of maps from $\Z$ to $\C^2$: $\cH_{\ope{loc}}=\{\psi:\Z\to\C^2\}$. 
We have seen in Subsection~\ref{Sec:Moti-Exam} that for the free quantum walk $U_0=S$, the motion of quantum walker is trivial. Let us denote the first and the second entry of a map $\psi\in\cH_{\ope{loc}}$ by $\psi^L$ and $\psi^R$, respectively. Then we have
\ben
(U_0\psi)(x)=\begin{pmatrix}\psi^L(x+1)\\\psi^R(x-1)\end{pmatrix}
\een
for each $x\in\Z$. We say $\psi$ is \textit{outgoing} if there exists $r>0$ such that
\ben
\psi^L(x)=\psi^R(-x)=0
\een
holds for any $x>r$. We also define the \textit{incoming support} of $\psi\in\cH_{\ope{loc}}$ by 
\ben
\ope{supp}^\inc \psi=\{x\in\Z;\,\inf\ope{supp}\,\psi^R\le x\le \sup\ope{supp}\,\psi^L\}.
\een
Then $\psi$ is outgoing if and only if $\ope{supp}^\inc \psi$ is compact.

The resolvent operator $(U-\lambda)^{-1}$ is characterized as the bounded operator which assigns to each $f\in\cH$ the solution $\psi\in\cH$ to the equation 
\be
(U-\lambda)\psi=f.
\ee
If $\lambda_0$ is an eigenvalue, a solution belonging to $\cH$ to the above equation is not unique since there is an eigenvector $\psi\in\cH$, that is, a non-trivial solution to $(U-\lambda_0)\psi=0$. Especially, $\lambda_0$ is a pole of the resolvent operator.  Similarly, the extended operator is characterized as the operator which assigns to $f\in\cH_{\ope{comp}}$ the outgoing solution of the above equation (Proposition~\ref{thm:DefRes} (\ref{enu:Reg})). A complex number $\lambda\in\C\setminus\{0\}$ is a resonance if and only if there exists an outgoing solution $\pphi_{\lambda}$ to the eigenequation (Proposition~\ref{thm:DefRes} (\ref{enu:Res-State})). We call $\pphi_\lambda$ a \textit{resonant state} associated with the resonance $\lambda$. Moreover, for each non-zero resonance $\lambda\in\ope{Res}(U)\setminus\{0\}$, there exists $\pphi_{\lambda,1},\ldots,\pphi_{\lambda,m(\lambda)}\in\cH_{\ope{loc}}$ which corresponds to the Jordan chain of an eigenvalue (Proposition~\ref{thm:DefRes} (\ref{enu:Jor-Chain})). They are also outgoing with $\mathcal{N}_1(\ope{supp}^\inc\pphi_{\lambda,k})\subset\ope{chs}(C-I_2)$, and we call each linear combination of them a \textit{generalized resonant state}. For a positive $r>0$ and a subset $A\subset\R$, we denote by 
\ben
\mathcal{N}_r(A):=\{x\in\R;\,\ope{dist}(x,A)\le r\},
\een
the closure of the $r$-neighborhood of $A$. 

Contrary, the pole at $\lambda=0$ is different from the others. The space 
\ben
V_J(0):=\ope{Ran}\oint_0 R_J(\lambda)\,d\lambda\subset\{\psi\in\cH_{\ope{comp}};\,\ope{supp}\,\psi\subset J\}
\een
depends on the choice of an interval $J$. For any $\pphi_0\in V_J(0)$, we have
\be
U^n\pphi_0(x)=0\qtext{on}J,
\ee
for any $n>2\left|J\right|_\Z$, where we put $\left|J\right|_\Z:=\ope{Card}(J\cap\Z)$.

\begin{theorem}\label{thm:ResExp}
For any compactly supported state $\psi\in\cH_{\ope{comp}}$ and any bounded interval $J\supset(\ope{supp}\,\psi\cup\ope{chs}(C-I_2))$, there exist coefficients $c_{\lambda,k}$ $(\lambda\in\ope{Res}(U)\setminus\{0\}$, $1\le k\le m(\lambda))$ and $\pphi_0\in V_J(0)$ such that
\be\label{eq:ResExp-Sum}
\psi=\mathbbm{1}_J\sum_{\lambda\in\ope{Res}(U)\setminus\{0\}}\sum_{k=1}^{m(\lambda)}c_{\lambda,k}\pphi_{\lambda,k}+\pphi_0.
\ee
Moreover, we have
\be\label{eq:ResExp-Time}
U^n\psi(x)=\sum_{\lambda\in\ope{Res}(U)\setminus\{0\}}\lambda^n\sum_{k=1}^{m(\lambda)}c_{\lambda,k}\sum_{l=0}^{k-1}\begin{pmatrix}n\\l\end{pmatrix}\lambda^{-l}\pphi_{\lambda,k-1}(x)\qtext{on}\mathcal{N}_{n-1-2\left|J\right|_\Z}(J),
\ee
for $n>2\left|J\right|_\Z$. Here, the (usual) binomial coefficient is defined for $n,l\in\N$ by
\ben
\begin{pmatrix}n\\l\end{pmatrix}=
\frac{n!}{l!(n-l)!}\quad\text{with }\ 0!=1.
%,\quad \begin{pmatrix}n\\l\end{pmatrix}=0\qtext{when}l>n.
\een
\end{theorem}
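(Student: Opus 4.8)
The plan is to read both identities off the Riesz--Dunford representation of $U^{n}$ localized by $\mathbbm{1}_{J}$, combined with finite speed of propagation. For \eqref{eq:ResExp-Sum} I would start from $U^{n}=\frac{1}{2\pi i}\oint_{|z|=R}z^{n}(z-U)^{-1}\,dz$, valid for any $R>1$ because $U$ is unitary with spectrum on the unit circle. Cutting off by $\mathbbm{1}_{J}$ and using $\psi=\mathbbm{1}_{J}\psi$ gives $\mathbbm{1}_{J}U^{n}\psi=-\frac{1}{2\pi i}\oint_{|z|=R}z^{n}R_{J}(z)\psi\,dz$, where $R_{J}(z)=\mathbbm{1}_{J}(U-z)^{-1}\mathbbm{1}_{J}$ extends meromorphically to $\C$ by Proposition~\ref{prop:ContiResolv}, its nonzero poles being exactly the resonances, all contained in $\{0<|z|<1\}$ since $U$ has no eigenvalue, together with one pole at $z=0$. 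Taking $n=0$ and deforming the contour inward from $|z|=R$ to a small circle $|z|=\e$ about the origin, the residue theorem gives $\psi=-\sum_{\lambda\in\ope{Res}(U)\setminus\{0\}}\ope{Res}_{z=\lambda}\big(R_{J}(z)\psi\big)-\frac{1}{2\pi i}\oint_{|z|=\e}R_{J}(z)\psi\,dz$. The last integral lies in $V_{J}(0)$ by definition and is the term $\pphi_{0}$; by Proposition~\ref{thm:DefRes} each residue lies in the span of the cut-off generalized resonant states $\mathbbm{1}_{J}\pphi_{\lambda,k}$, which defines the coefficients $c_{\lambda,k}$ and yields \eqref{eq:ResExp-Sum}.

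For \eqref{eq:ResExp-Time} I would apply $U^{n}$ to \eqref{eq:ResExp-Sum} and handle the pieces separately. On a Jordan chain the relations of Proposition~\ref{thm:DefRes} read $(U-\lambda)\pphi_{\lambda,k}=\pphi_{\lambda,k-1}$, so with $N:=U-\lambda$ one has $N^{l}\pphi_{\lambda,k}=\pphi_{\lambda,k-l}$ and $N^{k}\pphi_{\lambda,k}=0$; since $\lambda$ and $N$ commute, the binomial theorem gives $U^{n}\pphi_{\lambda,k}=\sum_{l=0}^{k-1}\binom{n}{l}\lambda^{n-l}\pphi_{\lambda,k-l}$, which is exactly the inner sum of \eqref{eq:ResExp-Time}. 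The same coefficients arise directly from the Laurent expansion, since $\frac{1}{2\pi i}\oint_{z=\lambda}z^{n}(z-\lambda)^{-k}\,dz=\binom{n}{k-1}\lambda^{n-k+1}$ shows that inserting $z^{n}$ into the contour integral reproduces the binomial factors from the order-$k$ part of the principal part of $R_{J}$. Thus, as an identity of maps on $\Z$, $U^{n}\big(\sum_{\lambda,k}c_{\lambda,k}\pphi_{\lambda,k}\big)$ already has the right-hand form of \eqref{eq:ResExp-Time}, and the sum is finite because the finite-rank perturbation produces only finitely many resonances.

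It remains to replace $\mathbbm{1}_{J}\pphi_{\lambda,k}$ by $\pphi_{\lambda,k}$ and to discard $\pphi_{0}$ on the enlarged set $\mathcal{N}_{n-1-2|J|_\Z}(J)$, and this is where finite speed of propagation enters and is the main obstacle. Since $(U\psi)(x)$ depends only on $\psi(x\pm1)$, the value $(U^{n}\psi)(x)$ depends only on $\psi$ within distance $n$ of $x$. Each $\pphi_{\lambda,k}$ is outgoing with $\mathcal{N}_{1}(\ope{supp}^{\inc}\pphi_{\lambda,k})\subset\ope{chs}(C-I_2)\subset J$, so outside $J$ the map $\mathbbm{1}_{J^{c}}\pphi_{\lambda,k}$ has only a left-moving tail to the left of $J$ and a right-moving tail to its right, both contained in the free region; under $U=S$ there these tails travel outward at unit speed, so $U^{n}(\mathbbm{1}_{J^{c}}\pphi_{\lambda,k})$ is supported at distance exceeding $n-1-2|J|_\Z$ from $J$ and the evolutions of $\mathbbm{1}_{J}\pphi_{\lambda,k}$ and of $\pphi_{\lambda,k}$ agree on $\mathcal{N}_{n-1-2|J|_\Z}(J)$. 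For $\pphi_{0}\in V_{J}(0)$ the escape property recalled above, $U^{n}\pphi_{0}=0$ on $J$ for $n>2|J|_\Z$, says the state has left $J$ by time $2|J|_\Z$ and never returns; outside $\ope{chs}(C-I_2)$ it is then purely outgoing and is pushed out freely, so $U^{n}\pphi_{0}$ vanishes on $\mathcal{N}_{n-1-2|J|_\Z}(J)$. Combining the three facts gives \eqref{eq:ResExp-Time}. The delicate points I expect to verify carefully are the precise bookkeeping of the outgoing tails near the two endpoints of $J$ under $U=SC$, the claim that a state which never re-enters $J$ is purely outgoing, and the finiteness of $\ope{Res}(U)$ so that all sums and contour deformations are legitimate.
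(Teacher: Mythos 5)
Your proposal is correct and follows essentially the same route as the paper: the expansion \eqref{eq:ResExp-Sum} is the (generalized) eigenvector decomposition of the finite-rank operator $\mathbbm{1}_JU\mathbbm{1}_J$ — which you extract by contour deformation of $\oint R_J(z)\psi\,dz$, while the paper invokes Lemma~\ref{lem:cut-off-Res} and Corollary~\ref{cor:Evo0Res} — and \eqref{eq:ResExp-Time} follows from the binomial formula on the Jordan chain together with the outward free propagation of the outgoing tails and of $\pphi_0$, which is exactly the mechanism of Lemma~\ref{lem:Outgoing-TimeEv} and Proposition~\ref{prop:EvoResSt}. Your explicit check that $U^n\pphi_0$ vanishes on the enlarged set $\mathcal{N}_{n-1-2\left|J\right|_\Z}(J)$ rather than merely on $J$ fills in a step the paper leaves implicit, and your inner sum $\sum_{l}\binom{n}{l}\lambda^{n-l}\pphi_{\lambda,k-l}$ agrees with Proposition~\ref{prop:EvoResSt} (the factor $\pphi_{\lambda,k-1}$ in the theorem's display is a typo for $\pphi_{\lambda,k-l}$).
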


Since the number of resonances and their multiplicity is bounded by $2\left|\ope{chs}(C-I_2)\right|_{\Z}\ge 2|J|_\Z$ (Proposition~\ref{thm:DefRes} (\ref{enu:Sum-Multi})), the sums in \eqref{eq:ResExp-Sum} and \eqref{eq:ResExp-Time} are finite and the binomial coefficient is well-defined. As an analogue of the Schr\"odinger equation, these exact formulae may look strange. However, it is natural since the finiteness of the rank of the perturbation makes this problem essentially finite dimensional.

The latter formula \eqref{eq:ResExp-Time} is a consequence of the former with Lemma~\ref{lem:Outgoing-TimeEv}, which states that  if $\psi\in\cH_{\ope{loc}}$ is outgoing, then 
\be\label{eq:Commute-U1}
U^n\mathbbm{1}_J\psi=\mathbbm{1}_{\mathcal{N}_n(J)}U^n\psi
\ee
holds for any $n\ge0$ and for any interval $J$ containing $\ope{supp}^\inc\psi\cup\ope{chs}(C-I_2)$.

\begin{remark}
We will see in Theorem~\ref{thm:GenSimp} that in generic cases, all non-zero resonances of $U$ are simple. Then Formula~\ref{eq:ResExp-Time} turns into a simpler form
\ben
U^n\psi(x)=\sum_{\lambda\in\ope{Res}(U)\setminus\{0\}}\lambda^n c_{\lambda}\pphi_{\lambda}(x)\qtext{on}\mathcal{N}_{n-1-2\left|J\right|_\Z}(J).
\een
Most of the formulae in the next section follow from Formula~\ref{eq:ResExp-Time}, and each of them also turns into a simpler form by using the above formula.
\end{remark}

\begin{remark}
The existence of a non-identically vanishing outgoing solution $\pphi_\lambda$ to $(U-\lambda)\pphi=0$ for each non-zero resonance implies that the scattering matrix also has a pole there. In fact, we can equivalently define resonances as poles of the scattering matrix (Corollary~\ref{cor:Chara-S}).
\end{remark}

\subsection{Long-time behavior}\label{sec:LTB}
We observe some properties of long-time behavior of quantum walks by using the resonance expansion \eqref{eq:ResExp-Time} of Theorem~\ref{thm:ResExp}. 

The decaying rate of the survival probability on each compact interval is given in terms of the modulus of resonances and their multiplicity.
\begin{corollary}\label{cor:EstiSurvProb}
%Put $\Lambda_0:=\max_{\lambda\in\ope{Res}(U)}\left|\lambda\right|$ and $m_0:=\max_{\lambda\in\ope{Res}(U),\,\left|\lambda\right|=\Lambda_0}m(\lambda)$. 
For any compactly supported initial state $\psi\in\cH_{\ope{comp}}$ and any bounded interval $J\supset (\ope{supp}\,\psi\cup\ope{chs}(C-I_2))$, there exists a constant $M=M(\psi)>0$ such that
\be\label{eq:DR1}
\|\mathbbm{1}_JU^n\psi\|_{\cH}\le Mn^{m_0-1}\Lambda_0%\left(\max_{\lambda\in\ope{Res}(U)}\left|\lambda\right|\right)
^n,%\qtext{for any}n\in\N,
\ee
for $n$ large enough,
where $0\le\Lambda_0<1$ and $m_0\ge1$ stand for the maximal modulus of resonances %$m_0$ stands
and for the maximal multiplicity of non-zero resonance whose modulus attains $\Lambda_0$: 
\ben
\Lambda_0:=\max_{\lambda\in\ope{Res}(U)}\left|\lambda\right|,\quad
m_0:=p(\Lambda_0),\quad
p(\Lambda):=\max\left\{m(\lambda);\,\lambda\in\ope{Res}(U)\setminus\{0\},\ \left|\lambda\right|=\Lambda\right\}.
%\max_{\mu\in\ope{Res}(U)}\left|\mu\right|\right\}.
\een 
Moreover, there also exists $M'=M'(\psi)>0$ such that 
\be\label{eq:DR2}
\|\mathbbm{1}_JU^n\psi\|_{\cH}\le M'n^{p(\Lambda(\psi))-1}\Lambda(\psi)^n,%\qtext{for any}n\in\N.
\ee
for $n$ large enough. 
We here put 
\be\label{eq:Lambda-psi}
\Lambda(\psi):=\max%_{\psi\notin V(\lambda)^\perp}
\left\{\left|\lambda\right|;\,\oint_\lambda R(\lambda')\psi\,d\lambda'\neq0\right\}
\le\Lambda_0.
%,\quadm_1(\psi):=\max_{\left|\lambda\right|=\Lambda_1,\,\psi\notin V(\lambda)^\perp}m(\lambda).
\ee 
\end{corollary}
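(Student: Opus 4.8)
The plan is to derive both bounds directly from the resonance expansion \eqref{eq:ResExp-Time} of Theorem~\ref{thm:ResExp}. Fix $J\supset(\ope{supp}\,\psi\cup\ope{chs}(C-I_2))$ and restrict to $n\ge 1+2|J|_\Z$, so that $\mathcal{N}_{n-1-2|J|_\Z}(J)\supset J$ and \eqref{eq:ResExp-Time} holds at every $x\in J\cap\Z$. Multiplying by $\mathbbm{1}_J$, taking the $\cH$-norm and using the triangle inequality, I would bound
\[
\|\mathbbm{1}_JU^n\psi\|_{\cH}\le C_\psi\sum_{\lambda\in\ope{Res}(U)\setminus\{0\}}\ \sum_{k=1}^{m(\lambda)}\ \sum_{l=0}^{k-1}\binom{n}{l}\,|\lambda|^{\,n-l},
\]
where the constant $C_\psi$ absorbs the finitely many coefficients $|c_{\lambda,k}|$ together with the norms $\|\mathbbm{1}_J\pphi_{\lambda,k}\|_{\cH}$ of the (compactly cut off, hence square-summable) generalized resonant states; finiteness of the sum is guaranteed by Proposition~\ref{thm:DefRes}(\ref{enu:Sum-Multi}).

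The first estimate \eqref{eq:DR1} then follows by elementary asymptotics of the right-hand side. Using $\binom{n}{l}=n^l/l!+O(n^{l-1})$ for fixed $l$, each summand is of order $n^l|\lambda|^{n-l}$; the leading exponential rate is attained on the outermost circle $|\lambda|=\Lambda_0$, and on that circle the largest power of $n$ comes from $l=k-1$ with $k$ as large as $m_0$. Every remaining term is either exponentially smaller (when $|\lambda|<\Lambda_0$) or carries a strictly lower power of $n$ (when $|\lambda|=\Lambda_0$ and $k<m_0$), so it is absorbed into the constant $M$. This yields $\|\mathbbm{1}_JU^n\psi\|_{\cH}\le M n^{m_0-1}\Lambda_0^n$. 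The degenerate case $\Lambda_0=0$ (no nonzero resonance) is immediate from \eqref{eq:ResExp-Sum} and the vanishing property of $V_J(0)$, which gives $\mathbbm{1}_JU^n\psi=0$ for $n>2|J|_\Z$.

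To upgrade this to \eqref{eq:DR2}, I would observe that a resonance $\lambda$ enters the expansion of this specific $\psi$ only when its residue does not vanish. Indeed, the coefficients $(c_{\lambda,k})_k$ are read off, in the proof of Theorem~\ref{thm:ResExp}, from the principal part of $R_J(\lambda')\psi$ at $\lambda$, and the contour integral $\oint_\lambda R(\lambda')\psi\,d\lambda'$ reproduces the $\lambda$-component $\sum_{k}c_{\lambda,k}\pphi_{\lambda,k}$ of $\psi$. Since the generalized resonant states $\pphi_{\lambda,1},\dots,\pphi_{\lambda,m(\lambda)}$ are linearly independent in $\cH_{\ope{loc}}$, this residue vanishes exactly when $c_{\lambda,k}=0$ for all $k$. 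Hence only resonances with $|\lambda|\le\Lambda(\psi)$ actually occur in \eqref{eq:ResExp-Time} for this $\psi$, and rerunning the asymptotic analysis with $\Lambda_0,m_0$ replaced by $\Lambda(\psi),p(\Lambda(\psi))$ produces the sharper bound \eqref{eq:DR2}.

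The binomial asymptotics are routine, so I expect the delicate step to be the identification in the last paragraph of the vanishing of $\oint_\lambda R(\lambda')\psi\,d\lambda'$ with the simultaneous vanishing of all $c_{\lambda,k}$. This requires tracking precisely how the expansion coefficients are extracted from the principal part of the meromorphically continued resolvent and invoking the linear independence of each Jordan chain, which in turn rests on the rank characterization of $m(\lambda)$ in Definition~\ref{def:Res} and on Proposition~\ref{thm:DefRes}(\ref{enu:Jor-Chain}).
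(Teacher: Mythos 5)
Your proposal is correct and follows essentially the same route as the paper: the paper's own proof consists of applying the triangle inequality to the resonance expansion \eqref{eq:ResExp-Time} and noting that $-\frac{1}{2\pi i}\oint_{\lambda}R(\lambda')\,d\lambda'$ is the projection onto the span of the Jordan chain at $\lambda$, which is exactly your identification of the vanishing of $\oint_\lambda R(\lambda')\psi\,d\lambda'$ with the vanishing of all $c_{\lambda,k}$. Your write-up merely fills in the (routine) binomial asymptotics and the degenerate case $\Lambda_0=0$ that the paper leaves implicit.
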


Note that the decaying rate given by \eqref{eq:DR1} is independent of the choice of the initial state $\psi$ whereas that given by \eqref{eq:DR2} is not. In general, the latter is sharper. The sentence ``for $n$ large enough" means that there exists $n_0>0$ such that the statement is true for any $n\ge n_0$. The estimate \eqref{eq:DR1} (resp. \eqref{eq:DR2}) holds for any $n\in\N$ if $\Lambda_0\neq0$ (resp. $\Lambda(\psi)\neq0$).

For a given initial state $\psi\in\cH$ and $n\in\Z$, as usual, let $\mu_n:\Z^{\{0,1\}}\to[0,1]$ be a probability distribution defined by
\be
\mu_n(A):=\|\psi\|_{\cH}^{-2}\sum_{x\in A}\|U^n\psi(x)\|_{\C^2}^2\quad(A\subset\Z),
\ee
and let $X_n$ be the random variable induced by $\mu_n$. 
Since $U$ is unitary, $\mu_n(\Z)=\|\psi\|_{\cH}^{-2}\|U^n\psi\|_{\cH}^2=1$ holds for any $n$. 
%Thus, this defines a probability measure $P$ and a random variable $X_n$ if the initial state $\psi$ is normalized: $\|\psi\|_{\cH}=1$.
The probability $\mu_n(x)$ that the position of a quantum walker at time $n$ be $x\in\Z$ is asymptotic to that given by resonant states.%converges pointwise  to that given by resonant states.
\begin{corollary}\label{cor:PtWiseAsym}
For any compactly supported initial state $\psi\in\cH_{\ope{comp}}$ and any $x\in\Z$, we have
\be\label{eq:PtWiseAsym}
%P(X_n=x)
\mu_n(x)
=\Lambda(\psi)^{2n}\Biggl\|\sum_{%\lambda\in\ope{Res}(U),\,
\substack{\lambda\in\ope{Res}(U)\\\left|\lambda\right|=\Lambda(\psi)}}e^{in\ope{arg}\lambda}\sum_{k=1}^{m(\lambda)}c_{\lambda,k}\sum_{l=0}^{k-1}\begin{pmatrix}n\\l\end{pmatrix}e^{-il\ope{arg}\lambda}\pphi_{\lambda,k}(x)\Biggr\|_{\C^2}^2
+o(\Lambda(\psi)^{2n})
%\ord(\Lambda_2^{2n})
\ee
as $n\to+\infty$, where $\Lambda=\Lambda(\psi)$ is given by \eqref{eq:Lambda-psi}. 
%the same manner as in Corollary~\ref{cor:EstiSurvProb}. 
%
Especially, if the initial state is a restricted resonant state $\psi=\mathbbm{1}_J\pphi_\lambda$ with an interval $J\supset\ope{chs}(C-I_2)=:[x^-,x^+]$, there exists constants $c_\pm$ such that 
\ben
%P(X_n=x)
\mu_n(x)=
\left\{
\begin{aligned}
&c_\pm\left|\lambda\right|^{2(n\mp x)}\quad
&&\text{for }\ x\in \mathcal{N}_n(J)\setminus\ope{chs}(C-I_2)\qtext{with}\pm (x-x^\pm)>0,\\
&\left|\lambda\right|^n\|\pphi_\lambda\|_{\cH}^{-2}\|\pphi_\lambda(x)\|_{\C^2}^2&&\text{for }\ x\in\ope{chs}(C-I_2).
\end{aligned}\right.
\een
% with $[J^-,J^+]=J\supset\ope{ch}(C-I_2)=[x^-,x^+]$, there exist constants $c_\pm$ such that
%\ben P(X_n=x)=c_\pm\left|\lambda\right|^{2(n\mp x)}\quad\text{for }\ x^\pm<\pm x\le J^\pm\pm n,\quad n\ge0.\een
\end{corollary}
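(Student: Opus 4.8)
The plan is to extract the asymptotics directly from the exact expansion \eqref{eq:ResExp-Time}. Fix $x\in\Z$. Since the neighborhoods $\mathcal{N}_{n-1-2|J|_\Z}(J)$ exhaust $\Z$ as $n\to+\infty$, there is an $n_1$ with $x\in\mathcal{N}_{n-1-2|J|_\Z}(J)$ for every $n\ge n_1$, and for such $n$ Theorem~\ref{thm:ResExp} expresses $U^n\psi(x)$ as a finite sum over $\ope{Res}(U)\setminus\{0\}$. By construction the coefficient block attached to a resonance $\lambda$ is nonzero exactly when $\oint_\lambda R(\lambda')\psi\,d\lambda'\neq0$, so only resonances with $|\lambda|\le\Lambda(\psi)$ survive and at least one with $|\lambda|=\Lambda(\psi)$ does; this is precisely the content of the definition \eqref{eq:Lambda-psi}.

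I would then split $U^n\psi(x)=A_n(x)+B_n(x)$, where $A_n$ gathers the surviving resonances with $|\lambda|=\Lambda(\psi)$ and $B_n$ those with $|\lambda|<\Lambda(\psi)$. Writing each $\lambda$ with $|\lambda|=\Lambda(\psi)$ in polar form and factoring $\Lambda(\psi)^n$ out of $\lambda^n$ turns $A_n(x)$ into $\Lambda(\psi)^n$ times exactly the bracketed vector of \eqref{eq:PtWiseAsym}. For the remainder set $\widetilde{\Lambda}:=\max\{|\lambda|;\ |\lambda|<\Lambda(\psi),\ \lambda\text{ contributes}\}<\Lambda(\psi)$. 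Because $\ope{Res}(U)$ is finite and each binomial factor $\binom{n}{l}$ occurring in \eqref{eq:ResExp-Time} is $O(n^{m_0-1})$, one obtains the two bounds $\|A_n(x)\|_{\C^2}\le C\,n^{m_0-1}\Lambda(\psi)^n$ and $\|B_n(x)\|_{\C^2}\le C\,n^{m_0-1}\widetilde{\Lambda}^{\,n}$.

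The decisive step is the expansion $\|U^n\psi(x)\|_{\C^2}^2=\|A_n(x)\|_{\C^2}^2+2\re\langle A_n(x),B_n(x)\rangle_{\C^2}+\|B_n(x)\|_{\C^2}^2$. After multiplying by $\|\psi\|_{\cH}^{-2}$, the first term is the announced leading term $\Lambda(\psi)^{2n}\|\cdots\|_{\C^2}^2$, while the cross term and the last term are $O(n^{2(m_0-1)}\Lambda(\psi)^n\widetilde{\Lambda}^{\,n})$ and $O(n^{2(m_0-1)}\widetilde{\Lambda}^{\,2n})$ respectively. This is where the one genuine difficulty lies: both remainders must be shown to be $o(\Lambda(\psi)^{2n})$, which follows from the strict gap $\widetilde{\Lambda}<\Lambda(\psi)$, since the factor $(\widetilde{\Lambda}/\Lambda(\psi))^n$ decays geometrically and therefore dominates every polynomial factor $n^{2(m_0-1)}$. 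This yields \eqref{eq:PtWiseAsym}; note that no lower bound on the principal part is required, the statement being an identity with an explicit (possibly oscillating and occasionally vanishing) leading term plus an $o$-remainder.

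For the special case $\psi=\mathbbm{1}_J\pphi_\lambda$ I would first invoke uniqueness of the decomposition \eqref{eq:ResExp-Sum}: since $\pphi_\lambda$ is itself an outgoing solution agreeing with $\mathbbm{1}_J\pphi_\lambda$ on $J$, only its own coefficient survives, whence $\Lambda(\psi)=|\lambda|$ and \eqref{eq:ResExp-Time} collapses to $U^n\psi(x)=\lambda^n\pphi_\lambda(x)$ on $\mathcal{N}_{n-1-2|J|_\Z}(J)$. Thus $\mu_n(x)=\|\psi\|_{\cH}^{-2}|\lambda|^{2n}\|\pphi_\lambda(x)\|_{\C^2}^2$, which is the stated formula on $\ope{chs}(C-I_2)$. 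Outside the perturbed region the eigenequation reduces to the free relation $U_0\pphi_\lambda=\lambda\pphi_\lambda$, forcing $\pphi_\lambda^L(x)\propto\lambda^{x}$ and $\pphi_\lambda^R(x)\propto\lambda^{-x}$; the outgoing condition annihilates $\pphi_\lambda^L$ for $x>x^+$ and $\pphi_\lambda^R$ for $x<x^-$, so that $\|\pphi_\lambda(x)\|_{\C^2}^2$ is proportional to $|\lambda|^{-2x}$ on the right tail and to $|\lambda|^{2x}$ on the left tail. Substituting into $\mu_n(x)=\|\psi\|_{\cH}^{-2}|\lambda|^{2n}\|\pphi_\lambda(x)\|_{\C^2}^2$ gives $c_\pm|\lambda|^{2(n\mp x)}$ on the two half-lines, completing the proof.
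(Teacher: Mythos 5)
Your proposal is correct and follows essentially the same route as the paper, which derives \eqref{eq:PtWiseAsym} as a direct consequence of the resonance expansion \eqref{eq:ResExp-Time} (splitting off the resonances of maximal modulus $\Lambda(\psi)$ and absorbing the strictly smaller ones into the $o(\Lambda(\psi)^{2n})$ remainder) and obtains the tail formula in the special case from the outgoing behavior \eqref{eq:Outgoing-Cond} with $c^\inc_\pm=0$ outside $\ope{chs}(C-I_2)$. The only cosmetic difference is that for the special case the paper's exact evolution formula of Proposition~\ref{prop:EvoResSt} gives $U^n(\mathbbm{1}_J\pphi_\lambda)=\lambda^n\mathbbm{1}_{\mathcal{N}_n(J)}\pphi_\lambda$ on all of $\mathcal{N}_n(J)$ at once, whereas your appeal to Theorem~\ref{thm:ResExp} covers a slightly smaller region for each fixed $n$; this does not affect the conclusion.
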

\begin{remark}
The estimate for the remainder $o(\Lambda(\psi)^{2n})$ in \eqref{eq:PtWiseAsym} is improved as follows:
%\be \ord(n^{2m_2}\Lambda_2^{2n}),\quad\Lambda_2:=\max_{\psi\notin V(\lambda)^\perp,\,|\lambda|<\Lambda_1}\left|\lambda\right|,\quad m_2:=\max_{|\lambda|=\Lambda_2}m(\lambda). \ee
\ben
\ord(n^{2p(\Lambda'(\psi))-2}\Lambda'(\psi)^{2n}),\quad
\Lambda'(\psi):=\max\left\{\left|\lambda\right|;\,\left|\lambda\right|<\Lambda(\psi),\ \oint_\lambda R(\lambda')\psi d\lambda'\neq0\right\}.
\een
\end{remark}

For an initial state $\psi\in\cH_{\ope{comp}}$ and an interval $J\supset(\ope{supp}\psi\cup\ope{chs}(C-I_2))$, the mean of the survival time in $J$ for quantum walkers is defined by
\ben
\tau=\tau(J,\psi):=\sum_{n=1}^\infty n\mu_n(\mc{N}_1(J)\setminus J).
\een
Corollary~\ref{cor:EstiSurvProb} shows the following upper bound of $\tau$ in terms of resonances.

\begin{corollary}\label{cor:Mean-time}
Suppose that there exists at least one non-zero resonance. Then we have the estimate
\ben
\tau(J,\psi)\le \min\left\{M(\psi)^2 2^{-2m_0+1}\Upsilon_{2m_0-1}(\Lambda_0),\ 
M'(\psi)^2 2^{-2p(\Lambda(\psi))+1}\Upsilon_{2p(\Lambda(\psi))-1}(\Lambda(\psi))\right\},
\een
where the function $\Upsilon_k$ for $k\in\N$ is defined by
\be\label{eq:Def-Upsilon}
\Upsilon_k(r):=r^{k-2}\frac{d^k}{dr^k}\left(\frac{1}{1-r^2}\right).
\ee
Here, we have used the same notations as in Corollary~\ref{cor:EstiSurvProb}. In particular, when $m_0=p(\Lambda)=1$, it turns into
\ben
\tau\le\min\left\{\left(\frac{M(\psi)}{1-\Lambda_0^2}\right)^2,\ 
\left(\frac{M'(\psi)}{1-\Lambda(\psi)^2}\right)^2\right\}.
\een
\end{corollary}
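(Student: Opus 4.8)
The plan is to convert the weighted sum defining $\tau$ into the survival probabilities $\|\mathbbm{1}_JU^{n}\psi\|_{\cH}^2$ and then to feed in Corollary~\ref{cor:EstiSurvProb}. The geometric input is that outside $J$ the evolution is the free shift: since $J\supset\ope{chs}(C-I_2)$ one has $C(x)=I_2$ for every $x\notin J$, so to the right of $J$ only the right-moving component is ever populated and to the left of $J$ only the left-moving one. A one-line induction on $n$ resting on the shift structure (equivalently, \eqref{eq:Commute-U1} in Lemma~\ref{lem:Outgoing-TimeEv}) gives $(U^n\psi)^L(x)=0$ for $x>\sup(J\cap\Z)$ and $(U^n\psi)^R(x)=0$ for $x<\inf(J\cap\Z)$, for every $n\ge0$; hence the probability that has left $J$ travels ballistically to infinity and never returns, nothing ever re-enters $J$, and the only exits occur through the two sites of $\mc{N}_1(J)\setminus J$. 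Evaluating $U^n\psi=SC(U^{n-1}\psi)$ at those two sites and invoking unitarity then yields the exact one-step flux identity
\[
\mu_n\bigl(\mc{N}_1(J)\setminus J\bigr)=\|\psi\|_{\cH}^{-2}\bigl(\|\mathbbm{1}_JU^{n-1}\psi\|_{\cH}^2-\|\mathbbm{1}_JU^{n}\psi\|_{\cH}^2\bigr)\le\|\psi\|_{\cH}^{-2}\|\mathbbm{1}_JU^{n-1}\psi\|_{\cH}^2 .
\]

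I normalise $\|\psi\|_{\cH}=1$ (the general case follows since $\mu_n$ is scale invariant) and sum against the weight $n$. By the displayed inequality,
\[
\tau=\sum_{n\ge1}n\,\mu_n\bigl(\mc{N}_1(J)\setminus J\bigr)\le\sum_{n\ge1}n\,\|\mathbbm{1}_JU^{n-1}\psi\|_{\cH}^2,
\]
the series converging because Corollary~\ref{cor:EstiSurvProb} forces exponential decay. Inserting the uniform estimate \eqref{eq:DR1}, $\|\mathbbm{1}_JU^{n-1}\psi\|_{\cH}\le M(\psi)\,(n-1)^{m_0-1}\Lambda_0^{n-1}$, gives
\[
\tau\le M(\psi)^2\sum_{n\ge1}n\,(n-1)^{2(m_0-1)}\Lambda_0^{2(n-1)},
\]
while the sharper estimate \eqref{eq:DR2} produces the analogous bound with $M'(\psi)$, $p(\Lambda(\psi))$, $\Lambda(\psi)$ in place of $M(\psi)$, $m_0$, $\Lambda_0$; keeping the smaller of the two is exactly the minimum in the statement.

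It remains to evaluate the series in closed form and dominate it by $\Upsilon$, and I expect this to be the main obstacle. Starting from $\sum_{n\ge0}r^{2n}=(1-r^2)^{-1}$ and differentiating $k=2m_0-1$ times in $r$ produces precisely the falling-factorial weights $(2n)(2n-1)\cdots(2n-2m_0+2)$ which, together with the prefactor $r^{k-2}$, are what \eqref{eq:Def-Upsilon} packages into $\Upsilon_{2m_0-1}$; thus one is led to compare the series weight by weight against
\[
2^{-2m_0+1}\Upsilon_{2m_0-1}(r)=2^{-2m_0+1}\sum_{n}(2n)(2n-1)\cdots(2n-2m_0+2)\,r^{2(n-1)} .
\]
For the leading multiplicities $m_0\le2$ the comparison $n(n-1)^{2(m_0-1)}\le 2^{-2m_0+1}(2n)(2n-1)\cdots(2n-2m_0+2)$ is immediate, and in the generic simple case $m_0=p(\Lambda(\psi))=1$ one has the exact identity $\sum_{n\ge1}n\,r^{2(n-1)}=(1-r^2)^{-2}=2^{-1}\Upsilon_1(r)$, reproducing the special form displayed after the statement. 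The delicate point, and where I expect the real work, is to carry out this domination uniformly and to absorb the finitely many low-order transient terms, the subtlety being that $d/dr$ differentiates $r$ whereas the summation index enters only through $r^{2n}$; by contrast, once Corollary~\ref{cor:EstiSurvProb} is available the leakage identity and the decay input are comparatively routine.
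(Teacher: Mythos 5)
Your overall strategy coincides with the paper's: the exact flux identity $\mu_n(\mc{N}_1(J)\setminus J)=\|\mathbbm{1}_JU^{n-1}\psi\|_{\cH}^2-\|\mathbbm{1}_JU^{n}\psi\|_{\cH}^2$ (which the paper asserts with even less justification than you give), the resulting bound $\tau\le\sum_{n\ge1}n\|\mathbbm{1}_JU^{n-1}\psi\|_{\cH}^2$, and the comparison of the series with derivatives of $(1-r^2)^{-1}$ are exactly the paper's steps. The problem is the final term-by-term domination, which you correctly flag as the crux but leave open — and the specific inequality you propose to verify, namely $n(n-1)^{2(m_0-1)}\le 2^{-2m_0+1}(2n)(2n-1)\cdots(2n-2m_0+2)$, is in fact \emph{false} for $m_0\ge3$: with $m_0=3$, $n=3$ the left side is $3\cdot16=48$ while the right side is $2^{-5}\cdot6\cdot5\cdot4\cdot3\cdot2=22.5$. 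So the route through the crude power bound $\|\mathbbm{1}_JU^{n-1}\psi\|\le M(n-1)^{m_0-1}\Lambda_0^{n-1}$ of \eqref{eq:DR1} cannot close for higher multiplicities; it loses the factor $1/(m_0-1)!$ hidden in the binomial coefficient.

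The paper's proof avoids this by keeping the binomial coefficient from the resonance expansion \eqref{eq:ResExp-Time}: with $l=m_0-1$ one uses $\|\mathbbm{1}_JU^{n-1}\psi\|_{\cH}\le M\binom{n-1}{l}\Lambda_0^{n-1}$ and the inequality
\ben
n\begin{pmatrix}n-1\\l\end{pmatrix}^2r^{2(n-1)}\le 2^{-2l-1}r^{2l-1}\frac{d^{2l+1}}{dr^{2l+1}}r^{2n},
\een
which does hold: the right side equals $n\prod_{j=1}^{l}\tfrac{(2n-2j)(2n-2j+1)}{4}\,r^{2(n-1)}\ge n\,[(n-1)\cdots(n-l)]^2r^{2(n-1)}\ge n\binom{n-1}{l}^2(l!)^2r^{2(n-1)}$, pairing each even factor $2n-2j=2(n-j)$ with the adjacent odd factor $2n-2j+1>2(n-j)$. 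Summing over $n$ then yields $2^{-2l-1}\Upsilon_{2l+1}(\Lambda_0)=2^{-2m_0+1}\Upsilon_{2m_0-1}(\Lambda_0)$ as claimed, and the same with $p(\Lambda(\psi))$, $\Lambda(\psi)$, $M'$ for the second bound. Your computation for $m_0=1$ (the exact identity $\sum_{n\ge1}nr^{2(n-1)}=(1-r^2)^{-2}$) and for $m_0=2$ is fine; to repair the general case you should restate the decay input in binomial form before summing.
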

Remark that for $k=1,3,5$, the function $\Upsilon_k$ defined by \eqref{eq:Def-Upsilon} are given by
\ben
\Upsilon_1(r)=\frac2{(1-r^2)^2},\quad
\Upsilon_3(r)=\frac{24r^2(r^2+1)}{(1-r^2)^4},\quad
\Upsilon_5(r)=\frac{240r^4(3r^4+10r^2+3)}{(1-r^2)^6}.
\een
\begin{remark}
When the initial state $\psi=\mathbbm{1}_J\pphi_\lambda$ is a restriction of a resonant state $\pphi_\lambda$ associated with a non-zero resonance $\lambda\in\ope{Res}(U)\setminus\{0\}$, the mean $\tau$ is explicitly given by
\ben
\tau=
\frac1{1-\left|\lambda\right|^2}
\le\frac{1}{(1-\left|\lambda\right|^2)^2}=
\left(\frac{M(\psi)}{1-\Lambda_0^2}\right)^2=\left(\frac{M'(\psi)}{1-\Lambda(\psi)^2}\right)^2,
\een
where we have $M(\psi)=M'(\psi)=1$, $\Lambda_0=\Lambda(\psi)=\left|\lambda\right|$. 
\end{remark}

\begin{corollary}\label{cor:w-Lim}
For any compactly supported initial state $\psi\in\cH_{\ope{comp}}$ (suppose $\|\psi\|_\cH=1$ for simplicity), there exists the weak limit $W=\ope{w-}\lim_{n\to+\infty}(X_n/n)$. Its density function $w$ can be written in the form 
\be
w=c_-\dl_{-1}+c_+\dl_1,\quad c_\pm\ge0,\quad c_++c_-=1.%\|\psi\|_\cH^2.
\ee
Furthermore,  for each $n\in\N$, we have
\be\label{eq:Conv-Spd}
\left|c_\pm-\left\|%\mathbbm{1}_{]\max\ope{chs}(C-I_2),+\infty[}\begin{pmatrix}0&0\\0&1\end{pmatrix}
\chi_\pm^\out U^n\psi\right\|_{\cH}^2\right|\le \left\|\chi^\inc U^n\psi\right\|_{\cH}^2=\ord(\Lambda(\psi)^{2n}),%=\ord(\Lambda_0^{2n}),
\ee
where we put %$\Lambda_0:=\max_{\lambda\in\ope{Res}(U)}\left|\lambda\right|$, 
$\chi^\inc:=1-\chi_+^\out-\chi_-^\out$ with%$\chi^\out_\pm$ 
\ben
\chi^\out_+(x)=\left\{
\begin{aligned}
&\begin{pmatrix}0&0\\0&1\end{pmatrix}&&x>\max\ope{chs}(C-I_2),\\
&0&&\text{otherwise},
\end{aligned}\right.\quad
\chi^\out_-(x)=\left\{
\begin{aligned}
&\begin{pmatrix}1&0\\0&0\end{pmatrix}&&x<\min\ope{chs}(C-I_2),\\
&0&&\text{otherwise}.
\end{aligned}\right.
%\quad\chi^\inc=1-\chi_+^\out-\chi_-^\out.
\een
Especially, if $\psi=\mathbbm{1}_J\pphi_\lambda$ with a resonant state $\pphi_\lambda$ associated with a non-zero resonance $\lambda$ and an interval $J\supset(\ope{supp}\,\psi\cup\ope{chs}(C-I_2))$, we have
\be
c_-=\frac{a_-}{a_-+a_+},\quad
c_+=\frac{a_+}{a_-+a_+}
\ee
with
\be
a_-=\left|\pphi_\lambda(\min(J\cap\Z))\right|^2,\quad
a_+=\left|\pphi_\lambda(\max(J\cap\Z))\right|^2.
\ee
\end{corollary}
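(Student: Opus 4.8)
The plan is to decompose $U^n\psi$ into its two outgoing pieces plus an incoming remainder and to track their masses. At every $x$ the $\C^2$-ranges of $\chi_+^\out(x),\chi_-^\out(x),\chi^\inc(x)$ are mutually orthogonal and sum to $I_2$, so $\chi_+^\out,\chi_-^\out,\chi^\inc$ are orthogonal projections with $\chi_+^\out+\chi_-^\out+\chi^\inc=1$, and since $\|U^n\psi\|_\cH=1$,
\[
1=\|\chi_+^\out U^n\psi\|_\cH^2+\|\chi_-^\out U^n\psi\|_\cH^2+\|\chi^\inc U^n\psi\|_\cH^2 .
\]
Set $a_n:=\|\chi_+^\out U^n\psi\|_\cH^2$ and $b_n:=\|\chi_-^\out U^n\psi\|_\cH^2$, and write $[x^-,x^+]=\ope{chs}(C-I_2)$ as in the statement. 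Using that $U=SC$ acts as the free shift on $\{x>x^+\}$, a one-step computation telescopes to
\[
a_{n+1}-a_n=\bigl|\bigl(C(x^+)\,(U^n\psi)(x^+)\bigr)^R\bigr|^2\ge 0 ,
\]
the flux of right-moving amplitude through $x^+$. Hence $a_n$ is nondecreasing and bounded by $1$, so $c_+:=\lim_n a_n$ exists; symmetrically $c_-:=\lim_n b_n$ exists.

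Next I would show the incoming mass vanishes. For $n$ exceeding the diameter of $J$, no left-moving amplitude can survive to the right of $x^+$ (it could only originate from the initial data, which by then has been overtaken) and symmetrically on the left, so $\chi^\inc U^n\psi=\mathbbm{1}_{\ope{chs}(C-I_2)}U^n\psi$; its norm is bounded by the survival-probability estimate of Corollary~\ref{cor:EstiSurvProb}, giving $\|\chi^\inc U^n\psi\|_\cH^2=\ord(\Lambda(\psi)^{2n})$. Feeding this into the Pythagorean identity yields $c_++c_-=1$. The two-sided bound \eqref{eq:Conv-Spd} is then purely algebraic: $a_n\le c_+$, $b_n\le c_-$, and $(c_+-a_n)+(c_--b_n)=1-a_n-b_n=\|\chi^\inc U^n\psi\|_\cH^2$, so each nonnegative summand is at most the total.

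To identify the limiting law I would show that the outgoing mass concentrates at speed $\pm1$. The flux identity gives an exact spatial profile: the amplitude crossing $x^+$ at step $m$ propagates freely and occupies position $x^++n-m$ at time $n$, carrying mass $a_{m+1}-a_m$. Thus for $f\in C_b(\R)$,
\[
\sum_{x\in\Z}f(x/n)\,\|\chi_+^\out U^n\psi(x)\|_{\C^2}^2=\sum_{0\le m<n}f\!\Bigl(\tfrac{x^++n-m}{n}\Bigr)(a_{m+1}-a_m)+E_n ,
\]
where $E_n$ collects the finitely many initial-data terms, all with arguments tending to $1$. Since the increments satisfy $a_{m+1}-a_m=\ord(\Lambda(\psi)^{2m})$ (again by Corollary~\ref{cor:EstiSurvProb}), they are summable, and dominated convergence sends the right-hand side to $c_+f(1)$; the left-outgoing analogue tends to $c_-f(-1)$ and the incoming part to $0$. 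Hence $\mathbb{E}[f(X_n/n)]\to c_+f(1)+c_-f(-1)$, i.e.\ $X_n/n\Rightarrow W$ with $w=c_-\dl_{-1}+c_+\dl_1$. This concentration is the crux of the argument: the existence of $c_\pm$ and the mass balance $c_++c_-=1$ are soft, but pinning the limiting law to $\dl_{\pm1}$ requires combining the summability of the emission flux (equivalently, the exponential decay of the trapped part) with the finite-speed free propagation off the perturbation.

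Finally, for $\psi=\mathbbm{1}_J\pphi_\lambda$ the commutation \eqref{eq:Commute-U1} applies, since $\pphi_\lambda$ is outgoing with incoming support inside $\ope{chs}(C-I_2)\subset J$ and $U^n\pphi_\lambda=\lambda^n\pphi_\lambda$; thus $U^n\psi=\lambda^n\mathbbm{1}_{\mathcal{N}_n(J)}\pphi_\lambda$ exactly. On $\{x>x^+\}$ the resonant state obeys $\pphi_\lambda^R(x)=\lambda^{-(x-x^+_J)}\pphi_\lambda^R(x^+_J)$ with $x^+_J:=\max(J\cap\Z)$, whence
\[
\|\chi_+^\out U^n\psi\|_\cH^2=a_+\!\!\sum_{x^+<x\le x^+_J+n}\!\!|\lambda|^{2(n+x^+_J-x)}\longrightarrow\frac{a_+}{1-|\lambda|^2}\qquad(n\to+\infty),
\]
where $a_+=|\pphi_\lambda(x^+_J)|^2$, and symmetrically $c_-=a_-/(1-|\lambda|^2)$. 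Combined with $c_++c_-=1$ this forces $a_-+a_+=1-|\lambda|^2$, giving $c_\pm=a_\pm/(a_-+a_+)$ as claimed.
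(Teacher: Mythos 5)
Your proof is correct, and it reaches the conclusion by a route that differs in execution from the paper's. The paper deduces eventual outgoingness of $U^n\psi$ from the resonance expansion, then combines the conservation law \eqref{eq:Comp-Norm} (Lemma~\ref{lem:Outgoing-TimeEv} plus unitarity) with the monotonicity of the masses in the bands $[(1\mp\epsilon)n,\pm(1+\epsilon)n]$ to localize $\ope{supp}\,w$ at $\{\pm1\}$, and then only gestures at \eqref{eq:Conv-Spd} via Corollary~\ref{cor:EstiSurvProb}. You instead work with the exact one-step flux identity $a_{n+1}-a_n=\bigl|\bigl(C(x^+)(U^n\psi)(x^+)\bigr)^R\bigr|^2\ge0$, which gives the existence of $c_\pm$ by monotone convergence, yields \eqref{eq:Conv-Spd} for \emph{every} $n$ by the clean Pythagorean bookkeeping $(c_+-a_n)+(c_--b_n)=\|\chi^\inc U^n\psi\|_\cH^2$, and identifies the weak limit through the explicit transport formula (mass $a_{m+1}-a_m$ emitted at step $m$ sits at $x^++n-m$ at time $n$), with summability of the increments closing the dominated-convergence step. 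This buys you a sharper, fully quantitative version of the paper's soft argument, and you establish eventual outgoingness by elementary free propagation ($(U^n\psi)^L(x)=\psi^L(x+n)$ for $x>x^+$) rather than by invoking the resonance expansion. You also supply the computation of $c_\pm=a_\pm/(a_-+a_+)$ for $\psi=\mathbbm{1}_J\pphi_\lambda$, which the paper's proof omits entirely; your trick of reading off $a_-+a_+=1-|\lambda|^2$ from $c_++c_-=1$ rather than verifying it directly is a nice shortcut. Two small points to tidy: the bound $\|\chi^\inc U^n\psi\|_\cH^2=\ord(\Lambda(\psi)^{2n})$ should carry the polynomial factor $n^{2p(\Lambda(\psi))-2}$ when the extremal resonances are not simple (an imprecision already present in the paper's statement), and in the last step you implicitly use $\pphi_\lambda^L(\max(J\cap\Z))=0$, which requires $\max(J\cap\Z)>x^+$, i.e.\ $J$ strictly larger than $\ope{chs}(C-I_2)$ on the right (and symmetrically on the left).
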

Note that $(1-\chi^\inc)\psi$ is outgoing with $\ope{supp}^\inc(1-\chi^\inc)\psi\subset\ope{chs}(C-I_2)$ for any $\psi\in\cH_{\ope{loc}}$.
\begin{figure}
\centering
\includegraphics[bb=0 0 501 248, width=10cm]{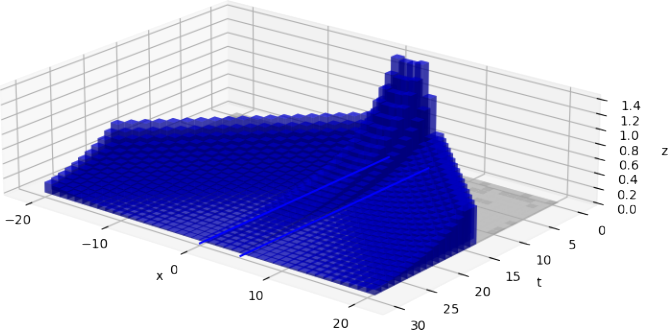}
\caption{Evolution of $\left\|U^t\mathbbm{1}_{[-1,11]}\pphi_1\right\|$ with $r=2^{-1/2}$, $k=10$}
\label{Fig:Evo_Res_St}
\end{figure}
\subsection{Background, motivation, and related works}\label{sec:Background}
Quantum walks have been studied under various motivations, such as spectral graph theory \cite{GZ}, quantum information theory \cite{Por}, and probability theory \cite{BBC}. One can also find their experimental implementations in quantum optics \cite{MW}. Recently, discrete-time quantum walks are also studied as discrete analogue of the scattering of Schr\"odinger equations \cite{FeHi,HiMo,MMOS,RST,Su}.
From the viewpoint of probability theory, quantum walks are seen as a ``quantum version" of random walks. 
%On the other hand, 
%Recently, quantum walks are studied under various motivations. 
%One of the most popular ways to see them is a ``quantum version" of random walks where it has been mostly studied from the viewpoint of probability theory (see e.g., \cite{BBC} and references therein). 
As we have seen in Subsection~\ref{sec:LTB}, the resonances are deeply connected with the long-time behavior of the quantum walker. 
There are also many attempts to consider suitable models which describe diverse quantum effects. As one of such attempts, non-unitary time evolution (corresponding to a non-hermitian, non-self-adjoint Hamiltonian) is introduced for quantum walks to describe open systems \cite{MKO}. 
%
%We believe that resonances will help to study such problems. 
In the study of quantum mechanics, resonances and associated resonant states are one of main objects to analyze non-hermitian systems, especially the particle decay \cite{Moi}. At least for our present setting, we can observe the decaying rate in terms of resonances also for quantum walks (Corollary~\ref{cor:EstiSurvProb}). 
%They describe the instability of open systems without modifying a standard unitary setting of quantum walks.
%One can refer to the book \cite{Por} for the theory of quantum walks from the viewpoint of experimental implementations and practical applications, to ......
We also mention that there are many similarity with the problems of finite absorbing quantum walks (see e.g., \cite{BBC,Ku}). Resonances are equivalently defined as eigenvalues of a matrix of a finite size (see Corollary~\ref{cor:Evo0Res}).

%We are interested in the long time behavior of open systems in quantum mechanics.  
%It is considered that quantum resonances are one of important quantities to characterize open systems. It has 
Resonances have been studied in many branches of mathematics, physics, and engineering (see the survey \cite{Zw17} and references therein).
%In this manuscript, we introduce resonances to a class of discrete-time quantum walks on $\Z$, and describe the long time behavior of quantum walks in terms of their resonances.
%
Among the problems in which resonances are studied, many quantities of quantum walks are concretely computable since both the space and time are discrete.
We can chase by a concrete and simple computation the dynamics of quantum walkers which is interpreted as the probability distribution of a quantum particle. We expect that descriptions of quantum effects by such a simple model will clarify what the essential property which causes the effect is. For example, Feynman and Hibbs introduced the Feynman checkerboard to explain their idea of the path integral \cite{FeyHi}. The Feynman checkerboard is considered as one of primitive models of quantum walks (see e.g., \cite{Mey} for another primitive model). Resonances for discrete models are also studied for other discrete problems such as discrete Laplacians \cite{BST,Kl}, Jacobi operators \cite{IaKo}, and discrete-time dynamical system \cite{BER} (discrete version of the Pollicott-Ruelle resonances \cite{Po,Ru}) . 

We consider the unitary time evolution operator although the generating Hamiltonian is usually considered in various situations of studying resonances. The Hamiltonian generating quantum walk is given \cite{Cha}, but at least for position dependent quantum walks, the unitary operator seems to be easier to see the properties. 
The continuation from outside the unit circle $|\lambda|>1$ to $|\lambda|\le1$ for our spectral parameter $\lambda=e^{-i\kappa}$ for the unitary time-evolution operator corresponds to that from the upper half plane $\im\kappa>0$ to $\im\kappa\le0$ for the spectral parameter (or its square root) $\kappa$ of the Hamiltonian. %Note that in the study of wave equation, the square root of $\kappa$ is usually used. 
%For the sake of this difference, we use the spectral parameter for the unitary time-evolution. The spectral parameter $\kappa$ (or its square root $z=\sqrt{\kappa}$ for wave equations) of the Hamiltonian is used. The continuation from outside the unit circle $|\lambda|>1$ to $|\lambda|\le1$ for our spectral parameter $\lambda=e^{-i\kappa}$ is that from the upper half plane $\im\kappa>0$ to $\im\kappa\le0$. 
Indeed, various objects and methods in the scattering theory are rehashed in the unitary framework (see e.g., \cite{RST,ST,Ti}), especially in the study of the scattering theory on quantum walks \cite{FeHi,KKMS,MMOS,Mo,RST,Su}. We also mention that Kato and Kuroda \cite{KK} studied an abstract theory of wave operators for the discrete-time evolution given by a unitary operator in 70's.
%In other problems, mostly the self-adjoint Hamiltonian 

The resonance expansion is a typical motivation to study resonances also in other settings (e.g., \cite{BuZw,NSZ,TaZw} for wave and Schr\"odinger equations, \cite{JiZw,Ts} for Anosov flows where these results are essential to show the validity of the expansion \cite[Theorem 19]{Zw17}). 
Our resulting expansion (Theorem~\ref{thm:ResExp}) is closer to that for Schr\"odinger operators in the sense that the time evolution is directly expanded, whereas the correlation is expanded in that for Anosov flows. Since the number of resonances in our setting is finite, our formula is much simpler than them. The authors will introduce an analogous method to the complex scaling in the other manuscript which allows to study resonances for the case with a perturbation not necessarily having a finite support \cite{HiMo}. 
Another reason of the simplicity is due to Formula~\eqref{eq:Commute-U1} on the time evolution of outgoing states. In general, the discontinuity of the indication function makes some ``noise". The formula is not true even for quantum walks finitely perturbed from $\til{U}_0:=SC_0$ with some position independent (unitary) non-diagonal coin $C_0$.

We also show the generic simplicity of resonances (Theorem~\ref{thm:GenSimp}). As we have seen in the previous subsections, many formulas are reduced to simpler forms when every resonance is simple. %, even though the resonance expansion does not require resonance to be simple contrary to the case of existing results.
In the case of Laplacian, an analogue to this theorem is shown in \cite[Theorem 2.25]{DyZw} by employing the Grushin problem to reduce locally the distribution of resonances to that of zeros of a holomorphic function. In our setting, resonances are just zeros of a polynomial appearing in the transfer matrix. We can directly apply the Rouch\'e-type formula (Lemma~\ref{lem:Rouche}).

\subsection{Plan of the paper}
In the next section, we prove the main theorem on the resonance expansion (Theorem~\ref{thm:ResExp}) and preliminarily propositions to define resonances and to show the properties of the extended family of cut-off resolvent (especially, the existence of generalized resonant states).  %The meromorphic extension of cut-off resolvent is done by the analytic Fredholm theory together with estimates for the free resolvent $(U_0-\lambda)^{-1}$ $(U_0=S)$. 
Section~\ref{sec:proof-LTB} is devoted for proofs of the corollaries on the long-time behavior of quantum walks stated in Subsection~\ref{sec:LTB}. 
We then see the distribution, the multiplicity, and their symmetries of resonances in  Section~\ref{sec:Dist-Res}.
We finally show the generic simplicity of resonances in Section~\ref{sec:GenSimp} by using characterizations of resonances by the transfer matrix or the scattering matrix.

%\newpage
\section{Proof of the resonance expansion}
In this section, we show that resonances are well-defined under Assumption and prove our main theorem (Theorem~\ref{thm:ResExp}).

\subsection{Meromorphic continuation of the resolvent %(Proposition~\ref{prop:ContiResolv})
}
We define resonances as poles of continued resolvent. We first prove the following proposition. Recall that $U$ is unitary and that each $\lambda\in\C$ with $\left|\lambda\right|>1$ belongs to the resolvent. For such $\lambda$, let $R(\lambda):=(U-\lambda)^{-1}$ be the resolvent operator on $\cH$. 
\begin{proposition}\label{prop:ContiResolv}
Under Assumption, the followings hold:
\begin{enumerate}
\item For any bounded interval $J$, the holomorphic family $\{\mathbbm{1}_JR(\lambda)\mathbbm{1}_J;\,\left|\lambda\right|>1\}$ of operators on $\cH$ can be extended meromorphically to whole $\C$. \label{item:ContiResolv}
\item For any choice of $J$ with $J\cap\Z\neq\emptyset$, $\lambda=0$ is a pole of the extended operator.  
\item The non-zero poles and their multiplicity are invariant to the choice of $J\supset\ope{chs}(C-I_2)$.\label{item:Invariance-multi}
\end{enumerate}
\end{proposition}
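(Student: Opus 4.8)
The plan is to peel off the finite-rank perturbation and reduce the whole statement to rational dependence of a finite matrix. Write $U=U_0+V$ with $U_0:=S$ and $V:=S(C-I_2)$; by Assumption $V$ has finite rank and factors through the finite-dimensional spaces attached to $J_0:=\ope{chs}(C-I_2)$ and $J_1:=\mathcal{N}_1(J_0)$. First I would solve the free equation $(U_0-\lambda)\psi=f$ by the two scalar recurrences coming from the shift, obtaining for $\left|\lambda\right|>1$
\ben
(R_0(\lambda)f)^L(x)=-\sum_{y\ge x}\lambda^{x-1-y}f^L(y),\qquad (R_0(\lambda)f)^R(x)=-\sum_{y\le x}\lambda^{y-1-x}f^R(y).
\een
Cutting off by $\mathbbm{1}_J$ makes these finite sums, so $\mathbbm{1}_JR_0(\lambda)\mathbbm{1}_J$ is a finite matrix whose entries are polynomials in $\lambda^{-1}$, holomorphic on $\C\setminus\{0\}$ with at worst a pole at $0$. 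For $\left|\lambda\right|>1$ the second resolvent identity gives $U-\lambda=(U_0-\lambda)(I+R_0(\lambda)V)$, hence $R(\lambda)=(I+R_0(\lambda)V)^{-1}R_0(\lambda)$. Writing $V=BA$ through its finite range and source and using the Schur--Woodbury reduction, $I+R_0V$ is invertible exactly when the finite matrix $\mathcal{M}(\lambda):=I+AR_0(\lambda)B$ is, and
\ben
\mathbbm{1}_JR(\lambda)\mathbbm{1}_J=\mathbbm{1}_JR_0(\lambda)\mathbbm{1}_J-\mathbbm{1}_JR_0(\lambda)B\,\mathcal{M}(\lambda)^{-1}AR_0(\lambda)\mathbbm{1}_J.
\een
Every factor is a finite matrix of rational functions and $\mathcal{M}(\lambda)^{-1}=\ope{adj}\mathcal{M}(\lambda)/\det\mathcal{M}(\lambda)$ is rational, so the right-hand side is meromorphic on all of $\C$; agreeing with $\mathbbm{1}_JR(\lambda)\mathbbm{1}_J$ on $\left|\lambda\right|>1$, it is the continuation, which proves (\ref{item:ContiResolv}).

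For the second assertion the free part already carries a pole: the $y=x$ terms contribute $-\lambda^{-1}\mathbbm{1}_J$, and the corner entries of $\mathbbm{1}_JR_0(\lambda)\mathbbm{1}_J$ reach order $\left|J\right|_\Z$ at the origin. The point is to rule out cancellation by the correction term. A clean route is a residue count: expanding $R(\lambda)=-\sum_{n\ge0}\lambda^{-n-1}U^n$ for large $\left|\lambda\right|$ gives $\oint_{\left|\lambda\right|=\rho}\mathbbm{1}_JR(\lambda)\mathbbm{1}_J\,d\lambda=-2\pi i\,\mathbbm{1}_J$, an operator of rank $2\left|J\right|_\Z$. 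Since the non-zero resonances lie in $\{\left|\lambda\right|<1\}$ with total multiplicity bounded in terms of $\left|\ope{chs}(C-I_2)\right|_\Z$, for $J$ large the residues at the non-zero poles cannot exhaust the rank $2\left|J\right|_\Z$, forcing a non-zero residue at $0$; for the remaining $J$ I would instead read the pole off the leading Laurent coefficient, checking that the correction term is of strictly lower order at the origin.

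The invariance (\ref{item:Invariance-multi}) is where the Assumption does the real work. For $J\subset J'$ both containing $\ope{chs}(C-I_2)$, restriction gives $\mathbbm{1}_JR(\lambda)\mathbbm{1}_J=\mathbbm{1}_J\big(\mathbbm{1}_{J'}R(\lambda)\mathbbm{1}_{J'}\big)\mathbbm{1}_J$, so the poles of the $J$-family are among those of the $J'$-family and, at a non-zero pole $\lambda_0$, the $J$-residue is the compression by $\mathbbm{1}_J$ of the $J'$-residue. It remains to show this compression preserves rank. The range of the $J'$-residue and the range of its adjoint consist of generalized resonant states of $U$ and of $U^*$; by the transfer-matrix method, using that the diagonal entries of $C(x)$ never vanish, each such state is determined on $J'$ by its restriction to $\ope{chs}(C-I_2)\subset J$, the outward propagation being governed by matrices $T(x,\lambda_0)$ that are invertible for $\lambda_0\neq0$. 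Hence $\mathbbm{1}_J$ is injective on both ranges, the rank is preserved, and $m_J(\lambda_0)=m_{J'}(\lambda_0)$; in particular the pole sets coincide as well.

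The routine part is the continuation: once $U=U_0+V$ is in place the problem is finite-dimensional and rational. The genuinely delicate point is the second assertion, confirming that the origin survives as a pole after the correction — unlike the continuous Schr\"odinger setting this pole is an artifact of discreteness, so its persistence must be extracted from the residue count or an order comparison rather than from any spectral meaning. Correspondingly, the unique continuation underlying (\ref{item:Invariance-multi}) is exactly the mechanism that degenerates at $\lambda=0$, which is why the origin, in contrast to the non-zero poles, genuinely depends on $J$.
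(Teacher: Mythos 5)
Your route to part (\ref{item:ContiResolv}) is correct and genuinely more explicit than the paper's: you replace the analytic Fredholm theorem applied to $[I+(U-U_0)R_0(\lambda)\mathbbm{1}_J]^{-1}$ by the Schur--Woodbury reduction to the finite matrix $\mathcal{M}(\lambda)=I+AR_0(\lambda)B$, whose entries are Laurent polynomials in $\lambda$, so that meromorphy is read off from $\ope{adj}\mathcal{M}/\det\mathcal{M}$. This buys concreteness (resonances become zeros of an explicit determinant, anticipating Lemma~\ref{lem:cut-off-Res} and the transfer-matrix polynomial) at no cost. For part (\ref{item:Invariance-multi}) you also diverge from the paper: the paper isolates the non-zero poles in the $J$-independent factor $[I+(U-U_0)R_0(\lambda)\mathbbm{1}_J]^{-1}$ and checks via the Neumann series that enlarging the interval only affects the pole at $0$, whereas you compress the residue from $J'$ to $J$ and argue that $\mathbbm{1}_J$ is injective on the range of the residue and of its adjoint because generalized resonant states are determined on $\ope{chs}(C-I_2)$ by unique continuation (invertibility of the transfer matrices for $\lambda_0\neq0$, which is where the non-vanishing of the diagonal entries enters). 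That argument is sound, though it silently borrows the fact that the range of $\oint_{\lambda_0}R_{J'}(\lambda')d\lambda'$ consists of restrictions of generalized resonant states; this is proved in the paper only afterwards (Proposition~\ref{thm:DefRes}), so you should either prove it from the Laurent expansion of $R$ at $\lambda_0$ on the spot or restructure the order of the statements.

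The genuine gap is in part (2). Your residue count $\oint_{|\lambda|=\rho}\mathbbm{1}_JR(\lambda)\mathbbm{1}_J\,d\lambda=-2\pi i\,\mathbbm{1}_J$ of rank $2|J|_\Z$ is correct and, by subadditivity of rank over the poles, forces a pole at $0$ whenever $2|J|_\Z$ exceeds the total multiplicity of the non-zero resonances --- but that total is bounded by $2(|\ope{chs}(C-I_2)|_\Z-1)$ only via a result that the paper itself derives \emph{using} $\ope{rank}\oint_0R_J\,d\lambda\ge2$, so you must supply an independent a priori bound (e.g.\ from $\deg\det\mathcal{M}$) to avoid circularity, and even then the count is inconclusive for small $J$ (say $J$ a single point far from the perturbation, when many non-zero resonances exist). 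Your fallback, ``the correction term is of strictly lower order at the origin,'' is not justified and is likely false as stated: $\mathcal{M}(\lambda)^{-1}$ and the outer factors $\mathbbm{1}_JR_0(\lambda)B$, $AR_0(\lambda)\mathbbm{1}_J$ all carry negative powers of $\lambda$, so the correction can have a pole at $0$ of \emph{higher} order than $\mathbbm{1}_JR_0(\lambda)\mathbbm{1}_J$; what you need to exclude is total cancellation of the singular part, which requires a separate argument (for $J\supset\ope{chs}(C-I_2)$ the cleanest one is Lemma~\ref{lem:cut-off-Res} plus the explicit kernel vectors of $\mathbbm{1}_JU\mathbbm{1}_J$ as in Remark~\ref{rem:dim-Zero-GES}). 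To be fair, the paper's own proof is equally terse on this point for general $J$ with $J\cap\Z\neq\emptyset$.
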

We formally interpret Proposition~\ref{prop:ContiResolv} (\ref{item:ContiResolv}) and (\ref{item:Invariance-multi}) as meaning that $R(\lambda)$ is continued meromorphically to $\C\setminus\{0\}$ as a family of operators from $\cH_{\ope{comp}}$ to $\cH_{\ope{loc}}$. 
In fact, the linear map $R(\lambda):\cH_{\ope{comp}}\to\cH_{\ope{loc}}$ is well-defined for $\lambda\in\C\setminus\ope{Res}(U)$. Let $\psi$ belong to $\cH_{\ope{comp}}$ and let $J_1,J_2$ be intervals containing $\ope{chs}(C-I_2)\cup\ope{supp}\,\psi$. Then we have
\be\label{eq:R-welldef}
R_{J_1}(\lambda)\psi=R_{J_2}(\lambda)\psi\qtext{on}J_1\cap J_2.
\ee
This is true since for each $x\in J_1\cap J_2$, $R_{J_j}(\lambda)\psi(x)$ is a meromorphic function with respect to $\lambda$ $(j=1,2)$. We see first that Identity \eqref{eq:R-welldef} holds for $|\lambda|>1$, and it extends by the identity theorem. 
However, the notion of analyticity is well-defined only for families of operators between Banach spaces, and ``a meromorphic family of operators $\cH_{\ope{comp}}\to\cH_{\ope{loc}}$" is not precise. 

To prove Proposition~\ref{prop:ContiResolv}, we prepare a lemma. 
Recall that for $\lambda\in\C$ with $\left|\lambda\right|>1$, the resolvent operator $R_0(\lambda)=(U_0-\lambda)^{-1}$ for the free quantum walk $U_0=S$ is given by
\be\label{eq:Free-Resolvent}
R_0(\lambda)\psi(x)=
-\sum_{y=0}^{+\infty}\lambda^{-y-1}\begin{pmatrix}\psi^L(x+y)\\\psi^R(x-y)\end{pmatrix}.
\ee
\begin{lemma}\label{lem:Free-Res-Esti}
The resolvent operator $R_0(\lambda)$ is bounded operator on $\cH$ satisfying
\be\label{eq:Free-Res-Esti1}
\left\|R_0(\lambda)\right\|_{\cH\to\cH}\le\left|\lambda\right|^{-1}(1-\left|\lambda\right|^{-2})^{-1/2}\qtext{for}\left|\lambda\right|>1.
\ee
Let $J$ be a bounded interval, then the operator $\mathbbm{1}_JR_0(\lambda)\mathbbm{1}_J$ is extended meromorphically to whole $\C$. Its unique pole is $\lambda=0$. The norm is estimated as
\be\label{eq:Free-Res-Esti2}
\left\|\mathbbm{1}_JR_0(\lambda)\mathbbm{1}_J\right\|\le
\sqrt{\left|J\right|_{\Z}}\left|\lambda\right|^{-\left|J\right|_{\Z}-1}\qquad\text{for }\ 0<\left|\lambda\right|\le1.
\ee
\end{lemma}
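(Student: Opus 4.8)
The plan is to read all three assertions off the explicit kernel \eqref{eq:Free-Resolvent}. Observe first that \eqref{eq:Free-Resolvent} is nothing but the Neumann series $R_0(\lambda)=-\sum_{y\ge0}\lambda^{-y-1}U_0^y$, convergent for $|\lambda|>1$, where $U_0^y$ acts as the componentwise double shift $U_0^y\psi(x)={}^t(\psi^L(x+y),\,\psi^R(x-y))$. Since the two entries are merely shifted in opposite directions, each $U_0^y$ is unitary on $\cH$, so $\|U_0^y\|_{\cH\to\cH}=1$ for every $y\ge0$. For $|\lambda|>1$ the coefficient sequence $(\lambda^{-y-1})_{y\ge0}$ is summable, so the series defines a bounded operator; to reach the quantitative bound \eqref{eq:Free-Res-Esti1} I would apply Cauchy--Schwarz in the summation index $y$ (using $\|U_0^y\|=1$), which controls $\|R_0(\lambda)\psi\|_\cH$ by the $\ell^2$-norm of the coefficients, $\bigl(\sum_{y\ge0}|\lambda|^{-2(y+1)}\bigr)^{1/2}=|\lambda|^{-1}(1-|\lambda|^{-2})^{-1/2}$, i.e. \eqref{eq:Free-Res-Esti1}. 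Equivalently, since $U_0=S$ is unitary, hence normal, the resolvent norm is governed by the distance of $\lambda$ to the unit circle, which gives the same qualitative conclusion.

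For the meromorphic continuation I would exploit that the two cut-offs truncate the series to finitely many terms. Indeed, for $\psi$ supported in $J$ and $x\in J\cap\Z$, the $L$-entry of $U_0^y\psi(x)$ is $\psi^L(x+y)$, which vanishes unless $x+y\in J$, and similarly for the $R$-entry; hence $\mathbbm{1}_JU_0^y\mathbbm{1}_J=0$ as soon as $y\ge|J|_\Z$. Therefore
\ben
\mathbbm{1}_JR_0(\lambda)\mathbbm{1}_J=-\sum_{y=0}^{|J|_\Z-1}\lambda^{-y-1}\,\mathbbm{1}_JU_0^y\mathbbm{1}_J,
\een
a finite sum of fixed finite-rank operators with scalar coefficients $\lambda^{-y-1}$. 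This is a polynomial in $\lambda^{-1}$ with operator coefficients, so the family $\{\mathbbm{1}_JR_0(\lambda)\mathbbm{1}_J;\,|\lambda|>1\}$ extends to one holomorphic on $\C\setminus\{0\}$ whose only possible pole is at $\lambda=0$, of order at most $|J|_\Z$. The coefficient of $\lambda^{-1}$ equals $-\mathbbm{1}_JU_0^0\mathbbm{1}_J=-\mathbbm{1}_J$, which is non-zero whenever $J\cap\Z\neq\emptyset$, so $\lambda=0$ is genuinely a pole. This settles the continuation and the statement about its unique pole.

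Finally, the estimate \eqref{eq:Free-Res-Esti2} for $0<|\lambda|\le1$ follows from the same finite sum. Applying the vector-valued Cauchy--Schwarz inequality in $y$,
\ben
\|\mathbbm{1}_JR_0(\lambda)\mathbbm{1}_J\psi\|_\cH\le\Bigl(\sum_{y=0}^{|J|_\Z-1}|\lambda|^{-2(y+1)}\Bigr)^{1/2}\Bigl(\sum_{y=0}^{|J|_\Z-1}\|\mathbbm{1}_JU_0^y\mathbbm{1}_J\psi\|_\cH^2\Bigr)^{1/2},
\een
and using that each lattice point of $J$ is reached at most $|J|_\Z$ times, so that $\sum_{y}\|\mathbbm{1}_JU_0^y\mathbbm{1}_J\psi\|_\cH^2\le|J|_\Z\|\psi\|_\cH^2$, produces the combinatorial prefactor (the $\sqrt{|J|_\Z}$ of \eqref{eq:Free-Res-Esti2}), while for $0<|\lambda|\le1$ the remaining geometric sum is dominated by its top term $y=|J|_\Z-1$, carrying the power of $\lambda^{-1}$ displayed in \eqref{eq:Free-Res-Esti2}. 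The main structural point throughout---and the only real subtlety---is the truncation $\mathbbm{1}_JU_0^y\mathbbm{1}_J=0$ for $y\ge|J|_\Z$, which simultaneously yields the finite-rank/meromorphic structure and reduces every estimate to a finite geometric sum; the hardest part is then purely bookkeeping, namely keeping the count of surviving terms and the dominant negative power of $\lambda$ sharp enough to land the stated constants.
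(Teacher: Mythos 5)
Your strategy is the same as the paper's: everything is read off the explicit kernel \eqref{eq:Free-Resolvent}, and the key structural point, namely that the two cut-offs truncate the Neumann series to the finite sum $\mathbbm{1}_JR_0(\lambda)\mathbbm{1}_J=-\sum_{y=0}^{\left|J\right|_\Z-1}\lambda^{-y-1}\mathbbm{1}_JU_0^y\mathbbm{1}_J$, is exactly the mechanism the paper exploits. Your treatment of the continuation is in fact cleaner than the paper's: exhibiting the cut-off resolvent as a polynomial in $\lambda^{-1}$ with finite-rank coefficients, and checking that the coefficient of $\lambda^{-1}$ is $-\mathbbm{1}_J\neq0$, correctly yields meromorphy on $\C$ with unique pole at $\lambda=0$. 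The boundedness of $R_0(\lambda)$ for $\left|\lambda\right|>1$ is also fine.

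The quantitative estimates, however, are not established. For \eqref{eq:Free-Res-Esti1}, ``Cauchy--Schwarz in $y$'' does not control $\left\|\sum_yc_yU_0^y\psi\right\|_\cH$ by $\|(c_y)\|_{\ell^2}\|\psi\|_\cH$: the factor $\sum_y\|U_0^y\psi\|_\cH^2$ diverges, and the underlying operator inequality $\|\sum_yc_yU_0^y\|\le\|(c_y)\|_{\ell^2}$ is false (in Fourier variables the left side is the sup norm of the symbol $\sum_yc_ye^{\pm iy\xi}$, which dominates, rather than is dominated by, its $L^2$ norm $\|(c_y)\|_{\ell^2}$). Your fallback via normality gives the exact value $\|R_0(\lambda)\|=\ope{dist}(\lambda,\sigma(U_0))^{-1}=(\left|\lambda\right|-1)^{-1}$, which strictly exceeds the right-hand side $(\left|\lambda\right|^2-1)^{-1/2}$ of \eqref{eq:Free-Res-Esti1} for every $\left|\lambda\right|>1$, so the stated constant is not attainable; the paper's own first display makes the same unjustified interchange of $|\sum_y(\cdot)|^2$ with $\sum_y|\cdot|^2$. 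Likewise for \eqref{eq:Free-Res-Esti2}: your two factors give $\sqrt{\left|J\right|_\Z}\,\bigl(\sum_{y=0}^{\left|J\right|_\Z-1}\left|\lambda\right|^{-2(y+1)}\bigr)^{1/2}\le\left|J\right|_\Z\left|\lambda\right|^{-\left|J\right|_\Z-1}$, because the geometric sum is bounded by $\left|J\right|_\Z$ times its top term, not by the top term alone (at $\left|\lambda\right|=1$ it equals $\left|J\right|_\Z$); this is weaker than \eqref{eq:Free-Res-Esti2} by a factor $\sqrt{\left|J\right|_\Z}$, and the loss cannot be recovered: at $\lambda=1$ each diagonal block of $\mathbbm{1}_JR_0(\lambda)\mathbbm{1}_J$ is a triangular all-ones matrix of size $\left|J\right|_\Z$, whose norm $\bigl(2\sin\tfrac{\pi}{2(2\left|J\right|_\Z+1)}\bigr)^{-1}\sim2\left|J\right|_\Z/\pi$ already exceeds $\sqrt{\left|J\right|_\Z}$ for $\left|J\right|_\Z=3$. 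The correct (and, for all later uses in the paper, entirely sufficient) conclusions of your computation are $\|R_0(\lambda)\|\le(\left|\lambda\right|-1)^{-1}$ and $\|\mathbbm{1}_JR_0(\lambda)\mathbbm{1}_J\|\le\left|J\right|_\Z\left|\lambda\right|^{-\left|J\right|_\Z}$; only the threshold $\left|\lambda\right|>\sqrt5$ in the proof of Proposition~\ref{prop:ContiResolv} would need adjusting.
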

\begin{proof}
For any $\psi\in\cH$, we have
\ben
\left\|R_0(\lambda)\psi\right\|_{\cH}^2
\le\sum_{x\in\Z}\sum_{y=0}^{+\infty}\left|\lambda\right|^{-2y-2}\left(\left|\psi^L(x+y)\right|^2+\left|\psi^R(x-y)\right|^2\right)
=\|\psi\|_{\cH}^2\sum_{y=0}^{+\infty}\left|\lambda\right|^{-2y-2},
\een
and \eqref{eq:Free-Res-Esti1} follows. 
For $0<\left|\lambda\right|\le1$, the infinite sum in the right hand side of \eqref{eq:Free-Resolvent} does not converge in general. 
It becomes a finite sum after cut-off, and we have
\begin{align*}
\left\|\mathbbm{1}_JR_0(\lambda)\mathbbm{1}_J\psi\right\|_{\cH}^2
\le\sum_{x\in J}\sum_{y\ge 0}\left|\lambda\right|^{-2y-2}\left(\left|\mathbbm{1}_J\psi^L(x+y)\right|^2+\left|\mathbbm{1}_J\psi^R(x-y)\right|^2\right).
\end{align*}
By a change of the variable, we obtain
\begin{align*}
\sum_{x\in J}\sum_{y\ge0}\left|\lambda\right|^{-2y-2}\left|\mathbbm{1}_J\psi^L(x+y)\right|^2
&=\sum_{z\in J}\left|\psi^L(z)\right|^2\sum_{y\ge0,\,z-y\in J}\left|\lambda\right|^{-2y-2}
\\&
\le \left|J\right|_\Z\left|\lambda\right|^{-2\left|J\right|_\Z-2}\sum_{z\in\Z}\left|\psi^L(z)\right|^2.
\end{align*}
This with a symmetric estimate for the other term $\sum_{x\in J}\sum_{y\ge0}\left|\lambda\right|^{-2y-2}\left|\mathbbm{1}_J\psi^R(x-y)\right|^2$ implies \eqref{eq:Free-Res-Esti2}.
\end{proof}

The meromorphic continuation is due to the analytic Fredholm theory. We also use the Neumann series argument with the estimate given in Lemma~\ref{lem:Free-Res-Esti}.
\begin{proof}[Proof of Proposition~\ref{prop:ContiResolv}]
Note that the resolvent operator $R(\lambda)=(U-\lambda)^{-1}$ of the unitary operator $U$ on $\cH$ is well-defined for $\left|\lambda\right|>1$. 
According to the standard resolvent equation, for $\left|\lambda\right|>\sqrt{5}$, $R(\lambda)$ is expressed as 
\be\label{eq:Resolvent-eq1}
R(\lambda)=%(I+R_0(\lambda)(U-U_0))^{-1}R_0(\lambda)
R_0(\lambda)[I+(U-U_0)R_0(\lambda)]^{-1},
\ee
since $U-U_0$ and $R_0(\lambda)$ are bounded operators with their norm bounded by $2$ and $\left|\lambda\right|^{-1}(1-\left|\lambda\right|^{-2})^{-1/2}$, respectively (see \eqref{eq:Free-Res-Esti1}). 

Under Assumption, for any interval $J$ containing $\ope{chs}(C-I_2)$, we have %$U-U_0=(U-U_0)\mathbbm{1}_J$, and hence, we have 
$(I-\mathbbm{1}_J)(U-U_0)=0$ and
\ben
[I+(U-U_0)R_0(\lambda)(I-\mathbbm{1}_J)]^{-1}
=I-(U-U_0)R_0(\lambda)(I-\mathbbm{1}_J).
\een
This with the factorization 
\ben
I+(U-U_0)R_0(\lambda)=\bigl[I+(U-U_0)R_0(\lambda)(1-\mathbbm{1}_J)\bigr]
\bigl[I+(U-U_0)R_0(\lambda)\mathbbm{1}_J\bigr]
\een
implies 
%\begin{align*}
%[I+(U-U_0)R_0(\lambda)]^{-1}
%&=\left[\bigl(I+(U-U_0)R_0(\lambda)(1-\mathbbm{1}_J)\bigr)
%\bigl(I+(U-U_0)R_0(\lambda)\mathbbm{1}_J\bigr)\right]^{-1}\\
%&=[I+(U-U_0)R_0(\lambda)\mathbbm{1}_J]^{-1}
%[I-(U-U_0)R_0(\lambda)(I-\mathbbm{1}_J)].
%\end{align*}
\be\label{eq:Resolvent-eq2}
[I+(U-U_0)R_0(\lambda)]^{-1}
=[I+(U-U_0)R_0(\lambda)\mathbbm{1}_J]^{-1}
[I-(U-U_0)R_0(\lambda)(I-\mathbbm{1}_J)].
\ee
Combining \eqref{eq:Resolvent-eq1} and \eqref{eq:Resolvent-eq2}, we obtain the following representation of the resolvent
\be\label{eq:Resolvent-eq3}
R(\lambda)=R_0(\lambda)[I+(U-U_0)R_0(\lambda)\mathbbm{1}_J]^{-1}
[I-(U-U_0)R_0(\lambda)(I-\mathbbm{1}_J)].
\ee

We see that for any intervals $J_1$ and $J_2\supset \mathcal{N}_1(J\cup J_1)$, 
\be\label{eq:Compactly-supported}
\begin{aligned}
&(I-\mathbbm{1}_{J\cup J_1})[I-(U-U_0)R_0(\lambda)(I-\mathbbm{1}_J)]\mathbbm{1}_{J_1}=0,\\
&(I-\mathbbm{1}_{J_2})[I+(U-U_0)R_0(\lambda)\mathbbm{1}_J]^{-1}\mathbbm{1}_{J\cup J_1}=0,
\end{aligned}\qquad\text{for }\ \left|\lambda\right|>\sqrt{5}.
\ee
The second one follows from the Neumann series representation
\be\label{eq:Neumann-rep}
[I+(U-U_0)R_0(\lambda)\mathbbm{1}_J]^{-1}
=\sum_{k\ge0}(-(U-U_0)R_0(\lambda)\mathbbm{1}_J)^k.
\ee
According to Lemma~\ref{lem:Free-Res-Esti}, the cut-off free resolvent $\mathbbm{1}_JR_0(\lambda)\mathbbm{1}_J$ is a meromorphic family of operators for $\lambda\in\C$. 
The operator $(U-U_0)R_0(\lambda)\mathbbm{1}_J=(U-U_0)\mathbbm{1}_JR_0(\lambda)\mathbbm{1}_J$ is of finite rank, in particular compact, and hence the analytic Fredholm theory (see e.g., \cite[Theorem~C.8 and C.10]{DyZw}) shows that $[I+(U-U_0)R_0(\lambda)\mathbbm{1}_J]^{-1}$ is extended meromorphically to $\lambda\in\C$. 
By the identity theorem, the properties of the support \eqref{eq:Compactly-supported} still hold after the extension. Finally, the cut-off resolvent
\be
\mathbbm{1}_{J_1}R(\lambda)\mathbbm{1}_{J_1}=\mathbbm{1}_{J_1}R_0(\lambda)\mathbbm{1}_{J_2}[I+(U-U_0)R_0(\lambda)\mathbbm{1}_J]^{-1}\mathbbm{1}_{J\cup J_1}
[I-(U-U_0)R_0(\lambda)(I-\mathbbm{1}_J)]\mathbbm{1}_{J_1}
\ee
is extended meromorphically to $\lambda\in\C$.
In addition, the poles other than $\lambda=0$ of this operator come from $[I+(U-U_0)R_0(\lambda)\mathbbm{1}_J]^{-1}$. 
The Neumann series representation \eqref{eq:Neumann-rep} shows that for any two intervals $J_3$ and $J_4$ containing $J$, we have
\ben
\left[I+(U-U_0)R_0(\lambda)\mathbbm{1}_J\right]^{-1}(\mathbbm{1}_{J_3}-\mathbbm{1}_{J_4})
=\mathbbm{1}_{J_3}-\mathbbm{1}_{J_4}\qquad\text{for }\ \left|\lambda\right|>\sqrt{5}.
\een
This invariance is true for all $\lambda\in\C$. 
Therefore, the only dependence on the choice of interval larger than $J$ of the poles is the order of the pole at $\lambda=0$.
\end{proof}

\subsection{Properties of the outgoing resolvent and resonant states}

We observe some properties of the continued resolvent. It is characterized as the operator assigning the unique outgoing solution for the equation 
\be\label{eq:EigenEquation}
(U-\lambda)\psi=f
\ee
to each $f\in\cH_{\ope{comp}}$ (Proposition~\ref{thm:DefRes} (\ref{enu:Reg})). 
Under Assumption, the above equation is trivial for $\left|x\right|\gg1$, and for each solution $\psi$, there exist four constants $c_\pm^\out$ and $c_\pm^\inc$ such that 
\be\label{eq:Outgoing-Cond}
\psi(x)=\left\{
\begin{aligned}
&\begin{pmatrix}c_-^\out\lambda^x\\
c_-^\inc\lambda^{-x}
\end{pmatrix}
=c_-^\out \Psi_L(\lambda,x)+c_-^\inc \Psi_R(\lambda,x)
&&\text{for }-x\gg1,\\
&
\begin{pmatrix}
c_+^\inc\lambda^x\\
c_+^\out\lambda^{-x}
\end{pmatrix}
=c_+^\out \Psi_R(\lambda,x)+c_+^\inc \Psi_L(\lambda,x)
&&\text{for }+x\gg1.
\end{aligned}\right.
\ee
Here, $\Psi_L(\lambda,\cdot)$ and $\Psi_R(\lambda,\cdot)$ are left and right going solution to the non perturbed equation $(U_0-\lambda)f(\lambda,\cdot)=0$ defined by
\ben
\Psi_L(\lambda,x)=\begin{pmatrix}\lambda^x\\0\end{pmatrix},\quad
\Psi_R(\lambda,x)=\begin{pmatrix}0\\\lambda^{-x}\end{pmatrix}.
\een
More precisely, $\pm x\gg1$ means outside the convex full of $\ope{supp}\,f\cup\ope{supp}(C-I_2)$. 
In this case, $\psi$ is outgoing if and only if  $c_+^\inc=c_-^\inc=0$. %This is also equivalent to $\mathcal{N}_1(\ope{supp}^\inc\psi)\subset\ope{chs}(C-I_2)$.
In particular, for $\left|\lambda\right|>1$, $\mathbbm{1}_{(-\infty,0]}\Psi_L(\lambda,\cdot)$ and $\mathbbm{1}_{[0,+\infty)}\Psi_R(\lambda,\cdot)$ belong to $\cH$. Then the outgoing condition is also equivalent for $\psi$ to belonging to $\cH$.

\begin{proposition}\label{thm:DefRes}
Under Assumption, the followings are true:
\begin{enumerate}
\item \label{enu:Reg}For any $\lambda\in\C\setminus\ope{Res}(U)$ and for any $\pphi\in\cH_{\ope{comp}}$, $\psi:=R(\lambda)\pphi$ is the unique outgoing solution to \eqref{eq:EigenEquation}. 
In particular, $ R(\lambda)\pphi\notin\cH$ for $\left|\lambda\right|<1$. The incoming support $\ope{supp}^\inc (R(\lambda)\pphi)$ is a subset of the convex hull of the union 
%It also satisfies
\be\label{eq:Contain-inc-supp}
%\ope{supp}^\inc (R(\lambda)\pphi)\subset\ope{ch}\bigl(
\mathcal{N}_1(\ope{supp}\,\pphi)\cup\ope{supp}(C-I_2).
%\bigr).
\ee 
\item \label{enu:Res-State}A complex number $\lambda\in\C\setminus\{0\}$ is a resonance if and only if there exists a non-identically vanishing, (unbounded) outgoing solution $\pphi_\lambda\in\cH_{\ope{loc}}$ to
\be\label{eq:Eigen-Equation}
(U-\lambda)\pphi_\lambda=0.
\ee
\item \label{enu:Jor-Chain} There exists a tuple $(\pphi_{\lambda,k})_{k=1,2,\ldots,m(\lambda)}$ of outgoing maps for each non-zero resonance $\lambda$ such that
\be
(U-\lambda)\pphi_{\lambda,k}=\pphi_{\lambda,k-1},\quad\mathcal{N}_1(\ope{supp}^\inc\pphi_{\lambda,k})\subset\ope{chs}(C-I_2)\quad(1\le k\le m(\lambda)),\quad 
\ee
holds with $\pphi_{\lambda,0}=0$. In particular, $\pphi_{\lambda,1}$ is unique up to a multiplicative constant (i.e., the geometric multiplicity of each non-zero resonance is one).
\item \label{enu:Sum-Multi} The number of non-zero resonances is bounded by $2(\left|\ope{chs}(C-I_2)\right|_\Z-1)$, where we count each resonance $\lambda$ the same time as its multiplicity $m(\lambda)$. 
\end{enumerate}
\end{proposition}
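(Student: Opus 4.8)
The plan is to build everything on two preliminary devices: the factorization \eqref{eq:Resolvent-eq3} of the continued resolvent, which controls the analytic/pole structure, and a transfer-matrix representation of the homogeneous equation \eqref{eq:Eigen-Equation}, which controls the geometric structure. As a first step I would record that, for $\lambda\in\C\setminus\{0\}$, the Assumption (non-vanishing diagonal entries of $C$) lets one solve the two-term recurrence coming from $(U-\lambda)\psi=0$ site by site: from $\psi(x)$ one determines $\psi(x+1)=T_\lambda(x)\psi(x)$ through an \emph{invertible} $2\times 2$ matrix $T_\lambda(x)$ whose entries are Laurent polynomials in $\lambda$ (computing it costs one division by the non-vanishing upper-left entry of $C(x+1)$ and one by $\lambda$). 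Hence the solution space of \eqref{eq:Eigen-Equation} in $\cH_{\ope{loc}}$ is exactly two-dimensional, and outside $\ope{chs}(C-I_2)$ every solution is a combination of $\Psi_L$ and $\Psi_R$. In particular the free operator $U_0$ admits no non-trivial outgoing solution for $\lambda\neq0$, a fact I will use repeatedly.

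For (\ref{enu:Reg}) and (\ref{enu:Res-State}) I would argue as follows. For $|\lambda|>1$ the genuine resolvent gives $R(\lambda)\pphi\in\cH$ solving \eqref{eq:EigenEquation}, and membership in $\cH$ is equivalent to being outgoing by \eqref{eq:Outgoing-Cond}. The incoming coefficients $c_\pm^\inc(\lambda)$ read off from the meromorphically continued solution are meromorphic in $\lambda$ and vanish for $|\lambda|>1$, hence vanish on all of $\C\setminus\ope{Res}(U)$ by the identity theorem; so $R(\lambda)\pphi$ stays outgoing after continuation. The support bound \eqref{eq:Contain-inc-supp} is read off the free kernel \eqref{eq:Free-Resolvent}, which propagates data only in the outgoing direction, so $R(\lambda)\pphi$ is already in free-outgoing form beyond $\mathcal{N}_1(\ope{supp}\,\pphi)\cup\ope{supp}(C-I_2)$. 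For (\ref{enu:Res-State}) I would feed \eqref{eq:Resolvent-eq3} into the analytic Fredholm theory: the non-zero poles of $R_J$ are exactly the $\lambda_0$ at which $A(\lambda_0):=I+(U-U_0)R_0(\lambda_0)\mathbbm{1}_J$ fails to be invertible. Given $u\in\ker A(\lambda_0)\setminus\{0\}$ (automatically supported in $J$), the map $\pphi_\lambda:=R_0(\lambda_0)u\in\cH_{\ope{loc}}$ (well-defined by the finite sum \eqref{eq:Free-Resolvent} for any $\lambda_0\neq0$ and compactly supported $u$) is outgoing, non-zero since $R_0(\lambda_0)$ is injective on compactly supported data, and satisfies $(U-\lambda_0)\pphi_\lambda=A(\lambda_0)u=0$; conversely, from an outgoing solution $\pphi_\lambda$ one sets $u:=(U_0-\lambda_0)\pphi_\lambda$, compactly supported, recovers $\pphi_\lambda=R_0(\lambda_0)u$ by free-outgoing uniqueness, and checks $A(\lambda_0)u=0$. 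This yields the equivalence, and uniqueness in (\ref{enu:Reg}) is then immediate since two outgoing solutions of \eqref{eq:EigenEquation} differ by an outgoing solution of \eqref{eq:Eigen-Equation}, which vanishes off $\ope{Res}(U)$. The ``in particular'' clause $R(\lambda)\pphi\notin\cH$ for $|\lambda|<1$ reflects that the outgoing continuation is not the genuine bounded resolvent of the unitary $U$; it follows from the outgoing asymptotics \eqref{eq:Outgoing-Cond}, whose leading terms $c_\pm^\out\lambda^{\pm x}$ grow at infinity when $|\lambda|<1$.

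For the Jordan chain (\ref{enu:Jor-Chain}) I would expand $R(\lambda)\pphi$ (as a map $\cH_{\ope{comp}}\to\cH_{\ope{loc}}$) in a Laurent series at the pole $\lambda_0$ and match powers of $(\lambda-\lambda_0)$ in $(U-\lambda)R(\lambda)\pphi=\pphi$, writing $U-\lambda=(U-\lambda_0)-(\lambda-\lambda_0)$. The principal-part coefficients $B_{-p},\dots,B_{-1}$ applied to a fixed $\pphi$ give maps $\pphi_{\lambda,k}:=B_{-p+k-1}\pphi$ obeying $(U-\lambda_0)\pphi_{\lambda,k}=\pphi_{\lambda,k-1}$ with $\pphi_{\lambda,0}=0$. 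Each $\pphi_{\lambda,k}$ is a combination of $\lambda$-derivatives of $R_0(\lambda_0)$ applied to compactly supported data, so it inherits the outgoing structure and the localization $\mathcal{N}_1(\ope{supp}^\inc\pphi_{\lambda,k})\subset\ope{chs}(C-I_2)$ from the free kernel \eqref{eq:Free-Resolvent}. Geometric multiplicity one is cleanest from the transfer matrix: an outgoing solution must be purely $\Psi_L$-shaped to the left, so it is fixed by the single amplitude $c_-^\out$, whence the left-outgoing solutions form a one-dimensional space; the fully outgoing solutions form a subspace of it and are therefore at most, and at a resonance exactly, one-dimensional, since two independent ones would force both $\Psi_L$ and $\Psi_R$ to be outgoing.

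Finally, for the counting bound (\ref{enu:Sum-Multi}) I would pass to the finite-dimensional picture of Lemma~\ref{lem:Chara-T}: propagating the left-outgoing datum across the perturbed region by the total transfer matrix $\mathbf{T}(\lambda)=T_\lambda(x^+)\cdots T_\lambda(x^-)$ and imposing the right-outgoing condition produces a single scalar function $F(\lambda)$ whose zeros are precisely the non-zero resonances. Since $\mathbf{T}(\lambda)$ is a product of $|\ope{chs}(C-I_2)|_\Z-1$ factors, each with entries of bounded degree in $\lambda^{\pm1}$, after clearing powers of $\lambda$ the function $F$ is a polynomial of degree at most $2(|\ope{chs}(C-I_2)|_\Z-1)$, bounding the number of roots counted with order. \emph{The main obstacle} is to show that three a priori different multiplicities agree: the analytic multiplicity $m(\lambda)=\ope{rank}\oint_\lambda R_J$ of Definition~\ref{def:Res}, the length of the Jordan chain of (\ref{enu:Jor-Chain}), and the order of the zero of $F$ in (\ref{enu:Sum-Multi}). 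I would settle this through the Gohberg--Sigal theory of the logarithmic residue of a finite-meromorphic operator function (as in \cite{DyZw}), which identifies $\ope{rank}\oint_\lambda R_J$ with the rank of the Riesz projection and hence, together with the single-chain structure from geometric multiplicity one, with the order of vanishing of the Fredholm determinant of $A(\lambda)$ and of $F$; by comparison the geometric statements on supports and outgoing structure are routine once the transfer matrix is in place.
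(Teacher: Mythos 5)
Your overall architecture is workable, and two of its ingredients genuinely coincide with the paper's proof: the Jordan chain in (\ref{enu:Jor-Chain}) is indeed obtained by matching Laurent coefficients of $R$ at the pole in $(U-\lambda)R(\lambda)=I$, and geometric multiplicity one does come from the transfer-matrix/unique-continuation argument. However, there are two genuine gaps. The pivotal assertion that ``the non-zero poles of $R_J$ are \emph{exactly} the $\lambda_0$ at which $A(\lambda_0)=I+(U-U_0)R_0(\lambda_0)\mathbbm{1}_J$ fails to be invertible'' is only half justified by \eqref{eq:Resolvent-eq3}: that factorization shows poles of $R_J$ can only come from poles of $A(\lambda)^{-1}$, but it does not exclude that the singular part of $A(\lambda)^{-1}$ is annihilated by the factors $R_0(\lambda)$ on the left and $[I-(U-U_0)R_0(\lambda)(I-\mathbbm{1}_J)]$ on the right. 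Without that, a non-trivial outgoing solution could exist at a $\lambda_0$ which is \emph{not} a pole of $R_J$, so the ``if'' direction of (\ref{enu:Res-State}) is open. Worse, your uniqueness argument for (\ref{enu:Reg}) (``two outgoing solutions differ by an outgoing solution of the homogeneous equation, which vanishes off $\ope{Res}(U)$'') presupposes exactly this unproved direction, so the logic is circular as written. The paper breaks the circle by first proving the representation identity $\psi=R(\lambda_0)(U-\lambda_0)\psi$ for every outgoing $\psi$ directly (splitting $\psi$ into a compactly supported piece plus truncated $\Psi_L,\Psi_R$ tails and continuing from $\left|\lambda\right|>1$); this yields uniqueness in (\ref{enu:Reg}) independently of (\ref{enu:Res-State}), and, applied to a resonant state, gives $\oint_{\lambda_0}R(\lambda)\pphi_{\lambda_0}\,d\lambda=-2\pi i\,\pphi_{\lambda_0}\neq0$, which is precisely the missing no-cancellation step. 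You need some version of this identity.

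The second gap is in the count (\ref{enu:Sum-Multi}). The product $\mathbb{T}(\lambda)=T_\lambda(x^+)\cdots T_\lambda(x^-)$ has $\left|\ope{chs}(C-I_2)\right|_\Z$ factors, not $\left|\ope{chs}(C-I_2)\right|_\Z-1$, and the resulting polynomial $\sigma(\lambda)$ has degree $2\left|\ope{chs}(C-I_2)\right|_\Z$ (as Lemma~\ref{lem:Chara-T} records); the bound $2(\left|\ope{chs}(C-I_2)\right|_\Z-1)$ on \emph{non-zero} resonances requires the additional observation that $\sigma$ is divisible by $\lambda^2$ (the $(1,1)$-entry of the product only reaches the power $\lambda^{-(\left|\ope{chs}\right|_\Z-2)}$ from below), or, as the paper argues, that $\ope{rank}\oint_{\left|\lambda\right|=1}R_J=2\left|J\right|_\Z$ while the pole at $0$ already accounts for rank at least $2$. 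Your degree count as stated gives the right number for the wrong reason. Finally, your appeal to Gohberg--Sigal theory to reconcile the analytic multiplicity, the chain length, and the order of the zero of $\sigma$ is legitimate but heavier than necessary: once geometric multiplicity one is known, $U-\lambda_0$ restricted to $\ope{Ran}\Pi_{\lambda_0}$ is a single nilpotent Jordan block, so the chain length equals $\ope{rank}\Pi_{\lambda_0}=m(\lambda_0)$ directly, which is how the paper closes that loop (the comparison with the order of the zero of $\sigma$ is then deferred to the proof of Lemma~\ref{lem:Chara-T}).
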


%By definition, for the free quantum walk $U_0=S$, values in the first (resp. second) entry of each $\psi(x)$ moves to left (resp. right).  Taking into account this fact, we say a map $\psi={}^t(\psi^L,\psi^R)\in\cH_{\ope{loc}}$ is \textit{outgoing} if there exist $r>0$ such that
%\ben \psi^L(x)=\psi^R(-x)=0\een holds for any $x>r$. We define the \textit{incoming support} for each $\phi={}^t(\psi^L,\psi^R)\in\cH_{\ope{loc}}$ by 
%\ben\ope{supp}^\inc \psi=\{x\in\R;\,\inf\ope{supp}\,\psi^R\le x\le \sup\ope{supp}\,\psi^L\}.\een Then $\psi$ is outgoing if and only if $\ope{supp}^\inc \psi$ is compact.
%Under Assumption, for any compactly supported map $\pphi\in\cH_{\ope{comp}}$, and for any $\lambda\in\C$, there exists a solution $\psi\in\cH_{\ope{loc}}$ to the equation

\begin{proof}
(\ref{enu:Reg}) Let $f\in\cH_{\ope{comp}}$ and $\left|\lambda\right|>1$. Note that for $\left|\lambda\right|>1$ and for a solution $\psi$ to \eqref{eq:EigenEquation}, the outgoing condition is equivalent to that $\psi$ belongs to $\cH$. % a solution $\psi$ to the equation \eqref{eq:EigenEquation} is outgoing  
By definition, we have 
\ben
(U-\lambda)R(\lambda)f=f,
\quad
(R(\lambda)f)^L(x_+)=(R(\lambda)f)^R(x_-)=0
\een 
for $x_\pm\in\Z$ such that $(x_-,x_+)\supset \mathcal{N}_1(\ope{supp}\,f)\cup\ope{chs}(C-I_2)$. 
These identities extend to all $\lambda$ by analyticity. 
For the uniqueness, let $\lambda_0\in\C\setminus\ope{Res}(U)$. It suffices to show the identity
\be\label{eq:Suff-Unique}
\psi=R(\lambda_0)(U-\lambda_0)\psi
\ee
for any $\psi\in\cH_{\ope{loc}}$ such that there exist constants $c_\pm$ such that
%$\lambda_0$-outgoing map $\psi$. We here say a map $\psi\in\cH_{\ope{loc}}$ is $\lambda_0$-outgoing if it admits 
\be\label{eq:lambda0-outgoing}
\psi(x)=\left\{
\begin{aligned}
&c_- \Psi_L(\lambda_0,x)
&&\text{for }x\ll-1,\\
&c_+ \Psi_R(\lambda_0,x)
&&\text{for }x\gg1.
\end{aligned}\right.
\ee
%for some constants $c_\pm$ 
See \eqref{eq:Outgoing-Cond} for the definition of $\Psi_L$ and $\Psi_R$. In fact, let $\psi_1,\psi_2$ be two outgoing solutions to \eqref{eq:EigenEquation} for $f\in\cH_{\ope{comp}}$. %$(U-\lambda)\psi_j=\pphi$, 
Then it follows that $R(\lambda_0)(U-\lambda_0)\psi_1=R(\lambda_0)(U-\lambda_0)\psi_2=R(\lambda_0)f$. This with \eqref{eq:Suff-Unique}  implies the uniqueness: $\psi_1=\psi_2$. 

To show the identity \eqref{eq:Suff-Unique}, we decompose given $\psi$ of the form \eqref{eq:lambda0-outgoing} into three parts,
\ben
\psi=\mathbbm{1}_{(-\infty,-r)}\psi+\mathbbm{1}_{[-r,r]}\psi+\mathbbm{1}_{(r,+\infty)}\psi,
\een
where $r\gg1$ is so taken that $(-r,r)\supset\ope{chs}(C-I_2)$ and 
\ben
\mathbbm{1}_{(-\infty,-r)}\psi(x)=c_-\mathbbm{1}_{(-\infty,-r)}\Psi_L(\lambda_0,x),\quad
\mathbbm{1}_{(r,+\infty)}\psi(x)=c_+\mathbbm{1}_{(r,+\infty)}\Psi_R(\lambda_0,x),
\een
holds with some constants $c_\pm$ for any $x\in\Z$. 
For $\lambda\in\C$ with $\left|\lambda\right|>1$, we have
\begin{align*}
&R(\lambda)(U-\lambda)\mathbbm{1}_{[-r,r]}\psi=\mathbbm{1}_{[-r,r]}\psi,\quad
R(\lambda)(U-\lambda)\mathbbm{1}_{(-\infty,-r)}\Psi_L(\lambda,\,\cdot\,)=\mathbbm{1}_{(-\infty,-r)}\Psi_L(\lambda,\,\cdot\,),\\
&R(\lambda)(U-\lambda)\mathbbm{1}_{(r,+\infty)}\Psi_R(\lambda,\,\cdot\,)=\mathbbm{1}_{(r,+\infty)}\Psi_R(\lambda,\,\cdot\,).
\end{align*}
Since $(U-\lambda)\mathbbm{1}_{[-r,r]}\psi$, $(U-\lambda)\mathbbm{1}_{(-\infty,-r)}\Psi_L(\lambda,\,\cdot\,)$, and $(U-\lambda)\mathbbm{1}_{(r,+\infty)}\Psi_R(\lambda,\,\cdot\,)$ are compactly supported, the above identities are valid for $\lambda$ at which $R(\lambda)$ is holomorphic, in particular at $\lambda=\lambda_0$. Then \eqref{eq:Suff-Unique} follows from this with the linearity.

\noindent
(\ref{enu:Res-State}) 
Let $\lambda_0\in\ope{Res}(U)\setminus\{0\}$ and $\mu:=\sqrt{\lambda}$. By the meromorphic continuation, there exist $K\ge1$, finite rank operators $A_1,\ldots,A_K$, and a holomorphic family $A_0(\lambda)$ such that
\ben
R(\mu^2)=\sum_{k=1}^K \frac{A_k}{(\mu^2-\lambda_0)^k}+A_0(\mu^2),
\een
holds for $\mu$ near $\sqrt{\lambda_0}$. 
By the residue theorem, we have
\ben
A_1=-\Pi_{\lambda_0}:=\frac1{2\pi i}\oint_{\lambda_0}R(\lambda)d\lambda=\frac1{2\pi i}\oint_{\sqrt{\lambda_0}}R(\mu^2)2\mu d\mu,
\een
and
\be\label{eq:Residue-Square}
\frac1{2\pi i}\oint_{\sqrt{\lambda_0}}R(\mu^2)d\mu=
\sum_{k=1}^K(-1)^{k-1}\frac{(2k-2)!}{(k-1)!}(2\lambda_0)^{-2k+1}A_k.
\ee
Since $(U-\mu^2)R(\mu^2)=\ope{Id}_{\cH_{\ope{comp}}}$ near $\mu=\sqrt{\lambda_0}$, modulo terms holomorphic near $\sqrt{\lambda_0}$ we have
\ben
0\equiv(U-\mu^2)R(\mu^2)
\equiv\sum_{k=1}^K\frac{(U-\lambda_0)A_k-A_{k+1}}{(\mu^2-\lambda_0)^k},
\een
where we use the convention that $A_{K+1}=0$. 
It follows that $A_{k+1}=(U-\lambda_0)A_k$ for $k=1,\ldots,K$. 
As a consequence, we obtain
\ben
(U-\lambda_0)^K\Pi_{\lambda_0}=-(U-\lambda_0)^KA_1=0.
\een
Since the operator $U-\lambda_0$ commutes with $\Pi_{\lambda_0}$, $U-\lambda_0$ is nilpotent on $\ope{Ran}\Pi_{\lambda_0}$. 
Hence we can express it by a Jordan normal form. 
There exists a basis $\{\pphi_{l,j};1\le l\le L,\,1\le j\le k_l\}$ of $\ope{Ran}\Pi_{\lambda_0}$ such that $\sum_{l=1}^Lk_l=K$ and 
\begin{align*}
(U-\lambda_0)\pphi_{l,j}=\pphi_{l,j-1},\quad(1\le j\le k_l,\ \pphi_{l,0}=0)
\end{align*}
holds for each $l$. 
Since $\pphi_{l,j}=R_0(\lambda_0)(\pphi_{l,j-1}-(U-U_0)\pphi_{l,j})$, each $\pphi_{l,j}$ belongs to 
\ben
\sum_{k=1}^j R_0(\lambda_0)^k(\cH_{\ope{comp}})
=\left\{\sum_{k=1}^{j} R_0(\lambda_0)^k\psi_k;\,\psi_k\in\cH_{\ope{comp}}\right\}.
%\{\sum_{k=0}^{K-1} R_0(\lambda_0)^k\psi_k;\,\psi_k\in\cH_{\ope{comp}}\}.
\een 
In particular, $\pphi_{l,1}\in R_0(\lambda)(\cH_{\ope{comp}})$ implies that $\pphi_{l,1}$ is an outgoing solution to an equation of type \eqref{eq:EigenEquation} with $U=U_0$. This with the uniqueness of the continuation of solutions to $(U-\lambda)\psi=0$ implies that $\pphi_{l,1}$ is unique up to a multiplicative constant. Therefore, there is only one Jordan chain, i.e., $L=1$, and $k_1=K=\dim\ope{Ran}\Pi_{\lambda_0}$. This also proves (\ref{enu:Jor-Chain}) with 
\ben
m(\lambda_0)=\ope{rank}\oint_{\lambda_0}R(\lambda)d\lambda=\ope{rank}\Pi_{\lambda_0}=K.
\een
The inclusion of the incoming support follows from $U\pphi_{l,j}=\lambda_0\pphi_{l,j}+\pphi_{l,j-1}$ by an induction with respect to $j$. Suppose that there existed $x_0\le \ope{chs}(C-I_2)$ such that $\pphi_{l,j}^R(x_0)\neq0$. Then it would follow from the above identity and $(U\pphi_{l,j})^R(x_0)=\pphi_{l,j}^R(x_0-1)$ that 
\ben
%(U\pphi_{l,j})^R(x_0)=\pphi_{l,j}^R(x_0-1)
\pphi_{l,j}^R(x_0-1)=
\lambda_0\pphi_{l,j}^R(x_0)+\pphi_{l,j-1}^R(x_0).
\een
The induction hypothesis $\pphi_{l,j-1}(x_0)=0$ would imply that $\pphi_{l,j}^R(x_0-1)\neq0$, and $\pphi_{l,j}$ is not outgoing. This is a contradiction.

Conversely, suppose that there exists a resonant state $\pphi_{\lambda_0}$ for $\lambda_0\in\C\setminus\{0\}$.  The uniqueness \eqref{eq:Suff-Unique} shows 
\ben
\pphi_{\lambda_0}=R(\lambda)(U-\lambda)\pphi_{\lambda_0}
=(\lambda_0-\lambda)R(\lambda)\pphi_{\lambda_0},
\een 
for $\lambda\in\C\setminus\ope{Res}(U)$, and 
\ben
\oint_{\lambda_0}R(\lambda)\pphi_{\lambda_0}d\lambda
=\oint_{\lambda_0}\frac{d\lambda}{\lambda_0-\lambda}\pphi_{\lambda_0}
=-2\pi i \pphi_{\lambda_0}.
\een

\noindent
(\ref{enu:Sum-Multi}) 
Since the resolvent equation $(\lambda'-\lambda)R(\lambda)R(\lambda')=R(\lambda')-R(\lambda)$ extends analytically to $\lambda\in\C\setminus\ope{Res}(U)$, for any resonances $\lambda_1,\lambda_2\in\ope{Res}(U)$ with $\lambda_1\neq\lambda_2$, the projections $-(2\pi i)^{-1}\oint_{\lambda_j} R_J(\lambda')d\lambda'$ $(j=1,2)$ are mutually orthogonal.
Thus, by Cauchy's integral theorem, we have
\ben
\ope{rank}\oint_{\left|\lambda\right|=1}R_J(\lambda)d\lambda
=\sum_{\lambda\in\ope{Res}(U)}\ope{rank}\oint_\lambda R_J(\lambda')d\lambda'
=\ope{rank}\oint_0 R_J(\lambda)d\lambda+\sum_{\lambda\in\ope{Res}(U)\setminus\{0\}}m(\lambda).
\een
Since $R_J(\lambda)$ is analytic for $|\lambda|>1$, we also have $\ope{rank}\oint_{\left|\lambda\right|=1}R_J(\lambda)\,d\lambda=\ope{rank}\mathbbm{1}_J=2\left|J\right|_\Z$. Recall that $m(\lambda)$ is independent of the choice of $J\supset\ope{chs}(C-I_2)$. Let us take $J=\ope{chs}(C-I_2)$. We will see in Remark~\ref{rem:dim-Zero-GES} that $\ope{rank}\oint_0R_J(\lambda)d\lambda\ge2$. This ends the proof.
\end{proof}

\subsection{Time evolution of resonant states}
The time evolution of restricted (generalized) resonant states is given by the following proposition.
\begin{proposition}\label{prop:EvoResSt}
Let $\lambda\in\C\setminus\{0\}$ be a resonance. 
For any interval $J\supset\ope{chs}(C-I_2)$ and $n\in\N$, we have
\be
U^n(\mathbbm{1}_{J}\pphi_{\lambda})=\lambda^n(\mathbbm{1}_{\mathcal{N}_n(J)}\pphi_{\lambda}),
\ee
for an associated resonant state $\pphi_\lambda$, and
\be
U^n(\mathbbm{1}_{J}\pphi_{\lambda,k})=\lambda^n\left(\mathbbm{1}_{\mathcal{N}_n(J)}\sum_{l=0}^{k-1}\begin{pmatrix}n\\l\end{pmatrix}\lambda^{-l}\pphi_{\lambda,k-l}\right),
\ee
for a generalized resonant state $\pphi_{\lambda,k}$, where $(\pphi_{\lambda,k})_{k=1,2,\ldots,m(\lambda)}$ forms a Jordan chain. we here use a convention $\begin{pmatrix}n\\l\end{pmatrix}=0$ for $n<l$.
\end{proposition}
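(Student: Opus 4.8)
The plan is to prove the second (Jordan-chain) formula and recover the first as the special case $k=1$. Two ingredients drive the argument: the commutation identity \eqref{eq:Commute-U1} (Lemma~\ref{lem:Outgoing-TimeEv}), which lets me pull the cut-off through the time evolution for outgoing states, and the Jordan-chain relation $(U-\lambda)\pphi_{\lambda,k}=\pphi_{\lambda,k-1}$ from Proposition~\ref{thm:DefRes} (\ref{enu:Jor-Chain}), which reduces the action of $U^n$ to a binomial expansion.

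First I would invoke \eqref{eq:Commute-U1}. By Proposition~\ref{thm:DefRes} (\ref{enu:Jor-Chain}), each generalized resonant state $\pphi_{\lambda,k}$ is outgoing with $\mathcal{N}_1(\ope{supp}^\inc\pphi_{\lambda,k})\subset\ope{chs}(C-I_2)\subset J$, so the hypotheses of Lemma~\ref{lem:Outgoing-TimeEv} are met. Hence
\ben
U^n(\mathbbm{1}_J\pphi_{\lambda,k})=\mathbbm{1}_{\mathcal{N}_n(J)}U^n\pphi_{\lambda,k},
\een
and it remains only to compute $U^n\pphi_{\lambda,k}$ as an element of $\cH_{\ope{loc}}$, where $U$ acts as the local operator $SC$.

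Next, on the span of the Jordan chain $\{\pphi_{\lambda,1},\ldots,\pphi_{\lambda,m(\lambda)}\}$ I would write $U=\lambda\,\ope{Id}+N$, where $N\pphi_{\lambda,k}=\pphi_{\lambda,k-1}$ (with $\pphi_{\lambda,0}=0$) is nilpotent and commutes with $\lambda\,\ope{Id}$. The binomial theorem then gives
\ben
U^n\pphi_{\lambda,k}=\sum_{l=0}^{n}\begin{pmatrix}n\\l\end{pmatrix}\lambda^{n-l}N^l\pphi_{\lambda,k}=\lambda^n\sum_{l=0}^{k-1}\begin{pmatrix}n\\l\end{pmatrix}\lambda^{-l}\pphi_{\lambda,k-l},
\een
since $N^l\pphi_{\lambda,k}=\pphi_{\lambda,k-l}$ vanishes for $l\ge k$ and the convention $\begin{pmatrix}n\\l\end{pmatrix}=0$ for $n<l$ absorbs the case $n<k-1$. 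Substituting this back into the displayed commutation identity yields the claimed formula, and setting $k=1$ (so that only the $l=0$ term survives and $\pphi_{\lambda,1}=\pphi_\lambda$) recovers the first identity.

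The computation is elementary once the cut-off has been moved outside, so the only genuine point requiring care is the applicability of \eqref{eq:Commute-U1}: one must ensure that the incoming support of each $\pphi_{\lambda,k}$ sits inside $J$, which is precisely the content of the support bound in Proposition~\ref{thm:DefRes} (\ref{enu:Jor-Chain}). A secondary subtlety is that $U^n\pphi_{\lambda,k}$ must be interpreted in $\cH_{\ope{loc}}$ rather than $\cH$, since the resonant states are unbounded; but this causes no difficulty because $U=SC$ is a local operator preserving $\cH_{\ope{loc}}$ and the Jordan relations hold there verbatim.
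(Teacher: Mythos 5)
Your proof is correct and follows essentially the same route as the paper, which derives the proposition from Lemma~\ref{lem:Outgoing-TimeEv} together with the fact (Proposition~\ref{thm:DefRes}~(\ref{enu:Jor-Chain})) that each generalized resonant state is outgoing with $\mathcal{N}_1(\ope{supp}^\inc\pphi_{\lambda,k})\subset\ope{chs}(C-I_2)$. You merely make explicit the binomial computation of $U^n$ on the span of the Jordan chain via $U=\lambda\,\ope{Id}+N$, which the paper leaves implicit.
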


Proposition~\ref{prop:EvoResSt} is a straightforward consequence of the following lemma with the fact that each generalized resonant state is outgoing with its incoming support contained in $\ope{chs}(C-I_2)$ (Proposition~\ref{thm:DefRes} (\ref{enu:Jor-Chain})).

\begin{lemma}\label{lem:Outgoing-TimeEv}
For any outgoing map $\psi\in\cH_{\ope{loc}}$ and for any interval $J\supset\ope{supp}^\inc\psi\cup\ope{chs}(C-I_2)$, we have
\be\label{eq:Outgoing-TimeEv}
U\mathbbm{1}_{J}\psi=\mathbbm{1}_{\mathcal{N}_1(J)}U\psi.
\ee
\end{lemma}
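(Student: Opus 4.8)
The plan is to verify \eqref{eq:Outgoing-TimeEv} site by site and entry by entry, reducing it to the vanishing of finitely many boundary terms that the outgoing hypothesis controls. Since $U=SC$ acts by $(U\phi)^L(x)=[C(x+1)\phi(x+1)]^L$ and $(U\phi)^R(x)=[C(x-1)\phi(x-1)]^R$, the operator $U$ reads off $\phi$ only at the neighbours $x\pm1$; it is local of range one, so both sides of \eqref{eq:Outgoing-TimeEv} vanish outside $\mathcal{N}_1(J)$. Taking $\phi=\mathbbm{1}_J\psi$ on the left and $\phi=\psi$ on the right and pulling the scalar cutoff through, \eqref{eq:Outgoing-TimeEv} reduces to the pair of scalar identities
\[
\mathbbm{1}_J(x+1)\bigl[C(x+1)\psi(x+1)\bigr]^L=\mathbbm{1}_{\mathcal{N}_1(J)}(x)\bigl[C(x+1)\psi(x+1)\bigr]^L,
\]
\[
\mathbbm{1}_J(x-1)\bigl[C(x-1)\psi(x-1)\bigr]^R=\mathbbm{1}_{\mathcal{N}_1(J)}(x)\bigl[C(x-1)\psi(x-1)\bigr]^R,
\]
to be checked for every $x\in\Z$.

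Next I would locate the sites where the two cutoff factors disagree; elsewhere the identities are trivial. Write $J\cap\Z=\{a,\dots,b\}$, so $\mathcal{N}_1(J)\cap\Z=\{a-1,\dots,b+1\}$. A direct inspection shows that $\mathbbm{1}_J(x+1)$ and $\mathbbm{1}_{\mathcal{N}_1(J)}(x)$ differ only for $x\in\{b,b+1\}$ (where the first is $0$ and the second $1$), while $\mathbbm{1}_J(x-1)$ and $\mathbbm{1}_{\mathcal{N}_1(J)}(x)$ differ only for $x\in\{a-1,a\}$. Hence it remains to prove that $[C(x+1)\psi(x+1)]^L=0$ for $x\in\{b,b+1\}$ and $[C(x-1)\psi(x-1)]^R=0$ for $x\in\{a-1,a\}$.

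This is precisely where the hypotheses $J\supset\ope{chs}(C-I_2)$ and $J\supset\ope{supp}^\inc\psi$ enter, and the key point is that they act at opposite ends with opposite chiralities. The right-end terms involve $C(x+1)$ and $\psi(x+1)$ at $x+1\in\{b+1,b+2\}$, all strictly beyond $b\ge\sup\ope{supp}(C-I_2)$; there $C=I_2$, so no entries mix and the term reduces to $\psi^L(x+1)$, which vanishes because $b\ge\sup\ope{supp}\psi^L$. Symmetrically the left-end terms involve $C(x-1)$ and $\psi(x-1)$ at $x-1\in\{a-2,a-1\}$, all strictly below $a\le\inf\ope{supp}(C-I_2)$; there $C=I_2$ and the term reduces to $\psi^R(x-1)$, which vanishes because $a\le\inf\ope{supp}\psi^R$. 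This kills all four boundary terms and yields \eqref{eq:Outgoing-TimeEv}.

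The computation is elementary, so the only real obstacle is bookkeeping: one must keep straight that the $L$-entry of $U\psi$ at $x$ is fed from $x+1$ while the $R$-entry is fed from $x-1$, so that the $L$-discrepancy sits at the right boundary of $J$ (controlled by the decay of $\psi^L$) and the $R$-discrepancy at the left boundary (controlled by that of $\psi^R$). Conceptually, pre-cutting at $J$ differs from post-cutting at $\mathcal{N}_1(J)$ only through what $U$ pulls in from just outside $J$, namely the incoming entries there; these are exactly the entries that vanish once $\ope{supp}^\inc\psi\subset J$, and the fact that $C=I_2$ beyond $\ope{chs}(C-I_2)$ ensures the coin does not reintroduce the other entry at those sites.
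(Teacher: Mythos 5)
Your proof is correct and follows essentially the same route as the paper's: both exploit that $U$ has range one, reduce the identity to the four boundary sites $\{a-1,a,b,b+1\}$, and kill the discrepancies there using $C=I_2$ outside $\ope{chs}(C-I_2)$ together with the vanishing of $\psi^L$ to the right of $J$ and of $\psi^R$ to the left of $J$. Your entry-by-entry bookkeeping via the two scalar indicator identities is just a slightly more explicit organization of the paper's site-by-site computation.
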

\begin{proof}
By definition, for each $x\in\Z$, $(U\psi)(x)$ depends only on $\psi(x-1)$ and $\psi(x+1)$. This implies \eqref{eq:Outgoing-TimeEv} away from the extremal points of $J=:[a,b]$, namely for $x\in\Z\setminus\{a-1,a, b, b+1\}$. Recall that the dependence on $\psi(x-1)$ and $\psi(x+1)$ is shown explicitly by
\ben
(U\psi)(x)=\begin{pmatrix}1&0\\0&0\end{pmatrix}C(x+1)\psi(x+1)+\begin{pmatrix}0&0\\0&1\end{pmatrix}C(x-1)\psi(x-1).
\een
Note that $a-1\notin J$ implies that $C(a-2)=C(a-1)=I_2$ and $\psi^R(a-2)=\psi^R(a-1)=0$. As a consequence, for $x\in\{a-1,a\}$, we have
\ben
(U\psi)(x)
=\begin{pmatrix}1&0\\0&0\end{pmatrix}C(x+1)\psi(x+1)=(U\mathbbm{1}_J\psi)(x).
\een
We also obtain $(U\psi)(x)=(U\mathbbm{1}_J\psi)(x)$ for $x\in\{b,b+1\}$ in the same way. 
In particular, this also holds for $x=a-1$ and $x=b+1$ even though they do not belong to $J$ but $\mathcal{N}_1(J)$.
\end{proof}

%To prove Theorem~\ref{thm:DefRes}, it is useful to give another characterization of the resonances.
%\begin{lemma} A complex number $\lambda\in\C\setminus\{0\}$ is a non-zero resonance of $U$ if and only if it is an eigenvalue of $\mathbbm{1}_JU\mathbbm{1}_J$ for $J\supset\ope{ch}(C)$. \end{lemma}

\subsection{Resonance expansion}
Following lemma shows that the non-zero resonances are eigenvalues of a finite rank operator.
\begin{lemma}\label{lem:cut-off-Res}
For any interval $J\supset\ope{chs}(C-I_2)$, we have $R_J(\lambda)=(\mathbbm{1}_JU\mathbbm{1}_J-\lambda)^{-1}$. 
\end{lemma}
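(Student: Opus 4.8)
The plan is to exploit that $\mathbbm{1}_J U\mathbbm{1}_J$ acts on the finite-dimensional space $\ope{Ran}\,\mathbbm{1}_J=\{\psi\in\cH;\,\ope{supp}\,\psi\subset J\}$, so it suffices to prove that $R_J(\lambda)$ restricted to this space is a two-sided inverse of $(\mathbbm{1}_J U\mathbbm{1}_J-\lambda)$. Since both $R_J(\lambda)$ and $\lambda\mapsto(\mathbbm{1}_JU\mathbbm{1}_J-\lambda)^{-1}$ are meromorphic in $\lambda$ (the latter because $\mathbbm{1}_JU\mathbbm{1}_J$ is a fixed finite matrix), by the identity theorem it is enough to establish the identity for $\left|\lambda\right|>1$, where $R(\lambda)=(U-\lambda)^{-1}$ is the genuine resolvent on $\cH$, $R_J(\lambda)=\mathbbm{1}_JR(\lambda)\mathbbm{1}_J$, and $\|\mathbbm{1}_JU\mathbbm{1}_J\|\le1<\left|\lambda\right|$ guarantees invertibility of $\mathbbm{1}_JU\mathbbm{1}_J-\lambda$.

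First I would fix $\psi\in\cH$ with $\ope{supp}\,\psi\subset J=:[a,b]$ and analyze $(U-\lambda)\psi$. Because $\psi$ is supported in $J$, the state $U\psi$ is supported in $\mathcal{N}_1(J)=[a-1,b+1]$, and I would split
\be
(U-\lambda)\psi=\underbrace{(\mathbbm{1}_JU\mathbbm{1}_J-\lambda)\psi}_{=:f}+\underbrace{(I-\mathbbm{1}_J)U\psi}_{=:g},
\ee
so that $f$ is supported in $J$ while $g$ is supported only at the two boundary points $a-1$ and $b+1$. A direct computation with the explicit form of $U$ (using $\psi(a-2)=\psi(b+2)=0$) shows that $g$ carries a purely left-going component at $a-1$ and a purely right-going component at $b+1$; in the notation of \eqref{eq:Free-Resolvent}, $g^R(a-1)=0$ and $g^L(b+1)=0$. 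Since $J\supset\ope{chs}(C-I_2)$, both boundary points lie in the free region where $C=I_2$.

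The main point is to show $\mathbbm{1}_JR(\lambda)g=0$. I would argue that $R_0(\lambda)g$, computed directly from \eqref{eq:Free-Resolvent}, is supported in $(-\infty,a-1]\cup[b+1,+\infty)$ and is therefore outgoing and vanishes on $J$; moreover, since its support is disjoint from $\ope{supp}(C-I_2)$ and $U-U_0=S(C-I_2)$, one has $(U-U_0)R_0(\lambda)g=0$, whence $(U-\lambda)R_0(\lambda)g=(U_0-\lambda)R_0(\lambda)g=g$. By the uniqueness of outgoing solutions (Proposition~\ref{thm:DefRes} (\ref{enu:Reg})) this forces $R(\lambda)g=R_0(\lambda)g$, which vanishes on $J$. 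Applying $R(\lambda)$ to the displayed splitting and then $\mathbbm{1}_J$, and using $R(\lambda)(U-\lambda)\psi=\psi$ for $\left|\lambda\right|>1$ together with $\mathbbm{1}_J\psi=\psi$ and $\mathbbm{1}_Jf=f$, I obtain
\be
\psi=\mathbbm{1}_JR(\lambda)f+\mathbbm{1}_JR(\lambda)g=R_J(\lambda)(\mathbbm{1}_JU\mathbbm{1}_J-\lambda)\psi.
\ee
Thus $R_J(\lambda)$ is a left inverse of $\mathbbm{1}_JU\mathbbm{1}_J-\lambda$ on $\ope{Ran}\,\mathbbm{1}_J$; finite-dimensionality upgrades this to a genuine inverse, and meromorphic continuation in $\lambda$ finishes the proof.

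The step I expect to be the main obstacle is the vanishing $\mathbbm{1}_JR(\lambda)g=0$: one must verify carefully that the leakage $g$ of $U\psi$ outside $J$ is \emph{exactly} outgoing, so that its outgoing resolvent coincides with the free one and propagates away from $J$ rather than back into it. This is precisely where the hypothesis $J\supset\ope{chs}(C-I_2)$ and the unique continuation of outgoing solutions enter, and it is also what makes the truncation $\mathbbm{1}_JU\mathbbm{1}_J$ capture the resonances without spurious boundary contributions.
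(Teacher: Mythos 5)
Your proof is correct, and it reaches the identity from the mirror side of the paper's argument. The paper starts from $\mathbbm{1}_J=\mathbbm{1}_J(U-\lambda)R(\lambda)\mathbbm{1}_J$ for $\lambda\notin\ope{Res}(U)$ and shows that $R_J(\lambda)$ is a \emph{right} inverse of $\mathbbm{1}_JU\mathbbm{1}_J-\lambda$; the whole burden there is the vanishing of the commutator term $\mathbbm{1}_J[\mathbbm{1}_J,U]R(\lambda)\mathbbm{1}_J$, obtained from Proposition~\ref{thm:DefRes}~(\ref{enu:Reg}) (the incoming support of $R(\lambda)\mathbbm{1}_J\psi$ lies in $\mathcal{N}_1(J)$) together with Lemma~\ref{lem:Outgoing-TimeEv}. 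You instead start from $\psi=R(\lambda)(U-\lambda)\psi$ and show that $R_J(\lambda)$ is a \emph{left} inverse, the burden being $\mathbbm{1}_JR(\lambda)g=0$ for the boundary leakage $g=(I-\mathbbm{1}_J)U\psi$; your explicit computation checks out ($g^R(a-1)=g^L(b+1)=0$, hence by \eqref{eq:Free-Resolvent} the function $R_0(\lambda)g$ is supported in $(-\infty,a-1]\cup[b+1,+\infty)$, is annihilated by $C-I_2$ because $\ope{supp}(C-I_2)\subset J$, and therefore solves $(U-\lambda)\,\cdot\,=g$). This is more elementary than the paper's route: at $\left|\lambda\right|>1$ you do not even need the outgoing-uniqueness statement, since $R_0(\lambda)g\in\cH$ and $U-\lambda$ is invertible on $\cH$, so $R(\lambda)g=R_0(\lambda)g$ for free. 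The price is that you must work in the honest resolvent region and then invoke the identity theorem for the two meromorphic families, whereas the paper's argument is uniform in $\lambda\notin\ope{Res}(U)$; since $\ope{Ran}\,\mathbbm{1}_J$ is finite dimensional, a one-sided inverse suffices in either case, so both routes close. What the two proofs share is the essential mechanism: the states that $U$ pushes across $\partial J$ are exactly outgoing at the two boundary points (this is where $J\supset\ope{chs}(C-I_2)$ enters), so they propagate away from $J$ and never contaminate the truncation.
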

\begin{proof}%[Proof of Lemma~\ref{lem:cut-off-Res}]
For $\lambda\notin\ope{Res}(U)$, one has
\be
\mathbbm{1}_J=\mathbbm{1}_J(U-\lambda)R(\lambda)\mathbbm{1}_J=\mathbbm{1}_J(U-\lambda)\mathbbm{1}_JR(\lambda)\mathbbm{1}_J
+\mathbbm{1}_J[\mathbbm{1}_J,U]R(\lambda)\mathbbm{1}_J.
\ee
It suffices to show that the second term is zero for $\lambda\in\C\setminus\ope{Res}(U)$:
\be\label{eq:Remainder-Commutator}
\mathbbm{1}_J[\mathbbm{1}_J,U]R(\lambda)\mathbbm{1}_J=0.
\ee
According to \eqref{eq:Contain-inc-supp},  $R(\lambda)\mathbbm{1}_J\psi$ is outgoing with $\ope{supp}^\inc (R(\lambda)\mathbbm{1}_J\psi)\subset\mathcal{N}_1(J)$ for any $\psi\in\cH_{\ope{loc}}$. Then Lemma~\ref{lem:Outgoing-TimeEv} implies that
\ben
[\mathbbm{1}_J,U]R(\lambda)\mathbbm{1}_J\psi=(\mathbbm{1}_{\mathcal{N}_1(J)}-\mathbbm{1}_J) UR(\lambda)\mathbbm{1}_J\psi.
\een
Clearly, this is supported only on $\mathcal{N}_1(J)\setminus J$, and \eqref{eq:Remainder-Commutator} follows.
\end{proof}

\begin{corollary}\label{cor:Evo0Res}
A complex number $\lambda\in\C$ is a resonance of $U$ if and only if it is an eigenvalue of $\mathbbm{1}_JU\mathbbm{1}_J$. %Its algebraic multiplicity as an eigenvalue coincides with $m(\lambda)$.
For $\lambda\in\ope{Res}(U)$, the generalized eigenspace associated with the eigenvalue $\lambda$ of $\mathbbm{1}_JU\mathbbm{1}_J$ is given by the range 
\ben
V_J(\lambda):=\ope{Ran}\oint_\lambda R_J(\lambda')d\lambda'
=\left\{\mathbbm{1}_J\pphi;\,\pphi\text{ is a generalized resonant state associated with }\lambda\right\}. 
\een
In particular, a state $\psi\in\cH$ with $\ope{supp}\,\psi\subset J$ belongs to $V_J(0)$ if and only if 
\be\label{eq:Zero-GES}
U^n\psi(x)=0\qtext{on}J,
\ee
for % any $x\in J$ and 
any $n>2\left|J\right|_\Z$. %In general, for $n>2\left|J\right|_\Z$, we have $U^n\psi(x)=0$ on $\mathcal{N}_{n-1-2\left|J\right|_\Z}(J)$.% $\ope{dist}(x,J)<n-2\left|J\right|_\Z$.
\end{corollary}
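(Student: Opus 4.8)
The plan is to reduce the whole statement to finite-dimensional linear algebra by means of Lemma~\ref{lem:cut-off-Res}. Fix an interval $J\supset\ope{chs}(C-I_2)$ and set $A:=\mathbbm{1}_JU\mathbbm{1}_J$, regarded as an endomorphism of the finite-dimensional space $\ope{Ran}\mathbbm{1}_J\cong\C^{2\left|J\right|_{\Z}}$. Lemma~\ref{lem:cut-off-Res} gives $R_J(\lambda)=(A-\lambda)^{-1}$, so the meromorphic family $R_J$ is exactly the resolvent of the matrix $A$. The poles of the resolvent of a matrix are its eigenvalues, so Definition~\ref{def:Res} yields at once the first assertion: $\lambda\in\ope{Res}(U)$ if and only if $\lambda$ is an eigenvalue of $A$. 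Moreover $\frac{1}{2\pi i}\oint_\lambda R_J(\lambda')\,d\lambda'$ is, up to sign, the Riesz projection of $A$ at $\lambda$, whence $V_J(\lambda)=\ope{Ran}\oint_\lambda R_J(\lambda')\,d\lambda'$ is precisely the generalized eigenspace $\ker(A-\lambda)^{2\left|J\right|_{\Z}}$, of dimension $m(\lambda)$ by Definition~\ref{def:Res}. This part needs nothing beyond the finite-rank structure.

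It remains to identify this generalized eigenspace, and I would first treat a non-zero resonance $\lambda$. Let $(\pphi_{\lambda,k})_{k=1}^{m(\lambda)}$ be the Jordan chain of outgoing generalized resonant states from Proposition~\ref{thm:DefRes}~(\ref{enu:Jor-Chain}), so that $(U-\lambda)\pphi_{\lambda,k}=\pphi_{\lambda,k-1}$ with $\ope{supp}^\inc\pphi_{\lambda,k}\subset\ope{chs}(C-I_2)\subset J$. Applying Lemma~\ref{lem:Outgoing-TimeEv} with $n=1$ gives $U\mathbbm{1}_J\pphi_{\lambda,k}=\mathbbm{1}_{\mathcal{N}_1(J)}U\pphi_{\lambda,k}$, and cutting off to $J$ yields $A\mathbbm{1}_J\pphi_{\lambda,k}=\mathbbm{1}_JU\pphi_{\lambda,k}=\mathbbm{1}_J(\lambda\pphi_{\lambda,k}+\pphi_{\lambda,k-1})$, that is, $(A-\lambda)\mathbbm{1}_J\pphi_{\lambda,k}=\mathbbm{1}_J\pphi_{\lambda,k-1}$. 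Hence $(\mathbbm{1}_J\pphi_{\lambda,k})_k$ is again a Jordan chain for $A$ at $\lambda$; in particular each restricted generalized resonant state lies in $V_J(\lambda)$, giving one inclusion.

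To upgrade this to equality I would show that $\mathbbm{1}_J$ is injective on the span of the chain. The bottom element $\mathbbm{1}_J\pphi_{\lambda,1}=\mathbbm{1}_J\pphi_\lambda$ is non-zero: were it to vanish, $\pphi_\lambda$ would vanish on $J\supset\ope{chs}(C-I_2)$, and the unique continuation principle for $(U-\lambda)\pphi=0$ permitted by Assumption (the transfer-matrix argument, cf.\ the proof of Lemma~\ref{lem:Chara-T}) would force $\pphi_\lambda\equiv0$, contradicting non-triviality. Since the image $(\mathbbm{1}_J\pphi_{\lambda,k})_k$ is a Jordan chain with non-vanishing bottom, applying successive powers of $A-\lambda$ to a vanishing linear combination shows that these $m(\lambda)$ vectors are linearly independent. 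Thus $\mathbbm{1}_J$ maps the $m(\lambda)$-dimensional space of generalized resonant states isomorphically onto $V_J(\lambda)$, and $V_J(\lambda)=\{\mathbbm{1}_J\pphi;\ \pphi\text{ a generalized resonant state associated with }\lambda\}$ follows.

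Finally, for the pole at $\lambda=0$ one has $V_J(0)=\ker A^{N}$ with $N=2\left|J\right|_{\Z}$, since $A$ acts on a space of that dimension. The point is to compare the absorbed dynamics $A^n$ with the true dynamics $U^n$ restricted to $J$. For $\psi$ with $\ope{supp}\,\psi\subset J$ I would prove by induction on $n$ that $U^n\psi$ is outgoing outside $J$, i.e.\ carries only the right-moving (second) component for $x>\max(J\cap\Z)$ and only the left-moving (first) component for $x<\min(J\cap\Z)$; here the nearest-neighbour structure together with $C\equiv I_2$ off $\ope{chs}(C-I_2)\subset J$ guarantees that whatever leaves $J$ never returns, so that $\mathbbm{1}_JU(I-\mathbbm{1}_J)U^{n-1}\psi=0$. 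This telescopes to the identity $A^n\psi=\mathbbm{1}_JU^n\psi$, whence $\psi\in V_J(0)$ iff $A^n\psi=0$ for $n\ge N$ iff $U^n\psi=0$ on $J$ for all $n>2\left|J\right|_{\Z}$. The main obstacle is exactly this last ``no-return'' induction: one must track the outgoing structure of $U^n\psi$ at the two endpoints of $J$ carefully, it being an iterated version of Lemma~\ref{lem:Outgoing-TimeEv}. Everything else is routine finite-dimensional spectral theory.
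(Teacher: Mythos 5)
Your proof is correct and follows exactly the route the paper intends: the paper states this corollary without a separate proof, treating it as an immediate consequence of Lemma~\ref{lem:cut-off-Res} (finite-dimensional spectral theory for $\mathbbm{1}_JU\mathbbm{1}_J$), together with the Jordan-chain structure from Proposition~\ref{thm:DefRes}~(\ref{enu:Jor-Chain}) and the commutation identity of Lemma~\ref{lem:Outgoing-TimeEv}. Your write-up simply supplies the details (injectivity of $\mathbbm{1}_J$ on the chain via unique continuation, and the telescoping $A^n\psi=\mathbbm{1}_JU^n\psi$ for $\psi$ supported in $J$) that the paper leaves implicit; all of these check out.
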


\begin{proof}[Proof of Theorem~\ref{thm:ResExp}]
According to Corollary~\ref{cor:Evo0Res}, the resonance expansion \eqref{eq:ResExp-Sum} is nothing but the expansion by (generalized) eigenvectors. 
The time evolution \eqref{eq:ResExp-Time} is just a summation of that of each generalized resonant states given by Proposition~\ref{prop:EvoResSt}.
\end{proof}

\begin{remark}\label{rem:dim-Zero-GES}
We have $\dim V_J(0)\ge \dim V_{\ope{chs}(C-I_2)}\ge2$ for any $J\supset\ope{chs}(C-I_2)$. It is a consequence of the fact that 
\ben
\det C^+
=\det C^-
=0,
\een
with
\ben
C^+:=\begin{pmatrix}1&0\\0&0\end{pmatrix}C(\max\ope{chs}(C-I_2)),\quad
C^-:=\begin{pmatrix}0&0\\0&1\end{pmatrix}C(\min\ope{chs}(C-I_2)).
\een
Let $v^\pm$ be an eigenvector associated with the zero eigenvalue of $C^\pm$. Then 
\ben
\pphi:=c^+\mathbbm{1}_{\max\ope{chs}(C-I_2)}v^++c^-\mathbbm{1}_{\min\ope{chs}(C-I_2)}v^-
\een
satisfies $\mathbbm{1}_JU\mathbbm{1}_J\pphi=0$ for any constants $c^\pm$.
\end{remark}
\begin{remark}
The bound $n>2\left|J\right|_\Z$ for \eqref{eq:Zero-GES} is optimal. For example, let $C(0)$ be the only non-diagonal coin, and let $J=[0,N]$ for some $N\ge1$. Then for the state $\psi(x)={}^t(\dl_{N,x},0)$, we have
\ben
\mathbbm{1}_JU^n\psi\not\equiv0\quad(n\le2N),\quad
\mathbbm{1}_JU^{n}\psi\equiv0\quad(n>2N).
\een
Remark that $V_J(0)=\{\sum_{n=0}^{2N} c_nU^n\psi+c\psi^-;\, c_n,c\in\C\}$ and $\dim V_J(0)=2(\left|J\right|_\Z+1)$ (see also Proposition~\ref{prop:no-res} for the non-existence of non-zero resonancs).
\end{remark}

%\begin{remark} The number of non-zero resonances coincides with $2\left|\ope{chs}(C-I_2)\right|_\Z-\dim V_{\ope{chs}(C-I_2)}(0)$. For $J\supset\ope{chs}(C-I_2)$, $\dim V_J(0)=\dim V_{\ope{chs}(C-I_2)}(0)+2\left|J\setminus\ope{chs}(C-I_2)\right|_\Z$.\end{remark}

%\begin{remark}\label{rem:Out-Time} The resonance expansion shows also that for any $\psi\in\cH_{\ope{comp}}$, there exists $n(\psi)\in\N$ such that $U^n\psi$ is outgoing with $\mathcal{N}_1(\ope{supp}^\inc(U^n\psi))\subset\ope{chs}(C-I_2)$ for any $n\ge n(\psi)$.\end{remark}

\section{Proof of long-time behavior}\label{sec:proof-LTB}
We prove the corollaries stated in Subsection~\ref{sec:LTB}. They follows almost trivially from the resonance expansion (Theorem~\ref{thm:ResExp}).

By applying the triangular inequality 
%\ben\begin{pmatrix}n\\l\end{pmatrix}\le n^l\een
to the resonance expansion \eqref{eq:ResExp-Time}, we obtain the decaying rate of Corollary~\ref{cor:EstiSurvProb}. Note that for $\lambda_0\in\ope{Res}(U)\setminus\{0\}$, the operator 
\ben
-\frac1{2\pi i}\oint_{\lambda_0}R(\lambda)d\lambda
\een
is the projection to the space of generalized resonant states associated with $\lambda_0$ (see proof of Proposition~\ref{thm:DefRes} (\ref{enu:Res-State})). 

Corollary~\ref{cor:PtWiseAsym} is also a straightforward consequence of the resonance expansion. Since each resonant state is outgoing solution to the eigenequation \eqref{eq:Eigen-Equation} (see definition and Proposition~\ref{thm:DefRes} (\ref{enu:Res-State})), its behavior outside $\ope{chs}(C-I_2)$ is given by \eqref{eq:Outgoing-Cond} with $c^\inc_\pm=0$.

Corollary~\ref{cor:Mean-time} follows from the estimates~\eqref{eq:DR1} and \eqref{eq:DR2}, where $\|\mathbbm{1}_JU^n\psi\|_{\cH}^2$ gives the probability that survival time is longer than $n+1$. Hence, we have
\ben
\tau=\sum_{n=1}^\infty n\mu_n(\mathcal{N}_1(J)\setminus J)
=\sum_{n=1}^\infty n\left(\|\mathbbm{1}_JU^{n-1}\psi\|_{\cH}^2-\|\mathbbm{1}_JU^n\psi\|_{\cH}^2\right)\le
\sum_{n=1}^\infty n\|\mathbbm{1}_JU^{n-1}\psi\|_{\cH}^2.
\een
We also used the following inequality for $l\in\N$, $n\ge l+1$, and $r>0$,
\ben
n\begin{pmatrix}n-1\\l\end{pmatrix}^2r^{2(n-1)}\le 2^{-2l-1}r^{2l-1}\frac{d^{2l+1}}{dr^{2l+1}}r^{2n}=2^{-2l-1} \Upsilon_{2l+1}(r).
\een

\begin{proof}[Proof of Corollary~\ref{cor:w-Lim}]
Since the initial state $\psi$ has a compact support, so does $U^n\psi$ with $\ope{supp}(U^n\psi)\subset\mathcal{N}_n(\ope{supp}\psi)$. This implies that for any $\epsilon>0$, there exists $n_{\epsilon}\in\N$ such that
\ben
U^n\psi(x)=0
\een
holds for any $n\ge n_{\epsilon}$ and any $x\in\Z\setminus[-(1+\epsilon)n,(1+\epsilon)n]$. Hence the density function $w$ satisfies
\be\label{eq:w-infty}
\ope{supp}\,w\subset[-1,1].
\ee
Conversely, inside $[-n,n]$, we have the resonance expansion. 
It also shows for any initial state $\psi\in\cH_{\ope{comp}}$ that there exists $n(\psi)\in\N$ such that $U^n\psi$ is outgoing with $\mathcal{N}_1(\ope{supp}^\inc(U^n\psi))\subset\ope{chs}(C-I_2)$ for any $n\ge n(\psi)$. This with Lemma~\ref{lem:Outgoing-TimeEv} implies that for any interval $J\supset\ope{chs}(C-I_2)$ and $n\ge n(\psi)$, we have
\be\label{eq:Comp-Norm}
\|\mathbbm{1}_{\mathcal{N}_{n-n(\psi)(J)}}U^n\psi\|_\cH=\|U^{n-n(\psi)}(\mathbbm{1}_JU^{n(\psi)}\psi)\|_\cH=\|\mathbbm{1}_JU^{n(\psi)}\psi\|_\cH.
\ee
On the other hand, the time evolution outside $\ope{chs}(C-I_2)$ is trivial:
\ben
U^{n+k}\psi(x)=U^n\psi(x\mp k)
%U^{n-n(\psi)}\psi(x\mp(n-n(\psi))
\quad\text{for }k\ge0,\ \pm x\gg1,\ (x\mp k)\notin\ope{chs}(C-I_2).
\een
Then for any $\epsilon>0$, the values $\left\|\mathbbm{1}_{[(1-\epsilon)n,(1+\epsilon)n]}U^n\psi\right\|_{\cH}$ and $\left\|\mathbbm{1}_{[-(1+\epsilon)n,-(1-\epsilon)n]}U^n\psi\right\|_{\cH}$ monotonically increasing with respect to $n\ge n(\psi)$. In fact, we have
\begin{align*}
\left\|\mathbbm{1}_{[(1-\epsilon)(n+k),(1+\epsilon)(n+k)]}U^{n+k}\psi\right\|_{\cH}
&=\left\|\mathbbm{1}_{[(1-\epsilon)(n+k),(1+\epsilon)(n+k)]}U^n\psi(\cdot-k)\right\|_{\cH}
\\&\ge\left\|\mathbbm{1}_{[(1-\epsilon)n,(1+\epsilon)n]}U^n\psi\right\|_{\cH},
%\left\|\mathbbm{1}_JU^n\psi\right\|_{\cH}=\left\|\mathbbm{1}_{J-n_1}U^{n-n_1}\psi\right\|_{\cH}
\end{align*}
since $[(1-\epsilon)n+k,(1+\epsilon)n+k]\subset[(1-\epsilon)(n+k),(1+\epsilon)(n+k)]$.

By combining \eqref{eq:Comp-Norm} and the monotonicity, we obtain 
\ben
\left\|\mathbbm{1}_{[-(1-\epsilon)n,(1-\epsilon)n]}U^n\psi\right\|\to0,
\een
as $n\to+\infty$. This with \eqref{eq:w-infty} implies 
\ben
\ope{supp}\,w\subset\{\pm1\},
\een
and in particular, the existence of $c_\pm$ in Corollary~\ref{cor:w-Lim}. 

The equality 
\ben
\chi^\inc U^n\psi=\mathbbm{1}_{\ope{chs}(C-I_2)}U^n\psi
\een
holds for $n\ge n(\psi)$. Thus the estimate of \eqref{eq:Conv-Spd} follows from Corollary~\ref{cor:EstiSurvProb}.  
\end{proof}

\section{Distribution of resonances}\label{sec:Dist-Res}
In the previous sections, we have seen the usefulness of resonances for quantum walks. 
In this section, we study where the resonances distribute and when they have multiplicity.
\subsection{Concrete examples}
We start with simple models such that all the resonances are computable. 
\begin{proposition}\label{prop:no-res}
If the number of non-diagonal coins $\ope{Card}\{x\in\Z;\,C(x)\text{ is not diagonal}\}$ is less than two, then there is no resonance other than zero: $\ope{Res}(U)=\{0\}$.
\end{proposition}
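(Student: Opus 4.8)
The plan is to use the characterization of non-zero resonances as outgoing eigenstates together with the observation that a diagonal coin does not couple the two components of $\pphi$. Since $0\in\ope{Res}(U)$ already by Proposition~\ref{prop:ContiResolv}, it suffices to rule out non-zero resonances. Suppose, for contradiction, that $\lambda\in\ope{Res}(U)\setminus\{0\}$. By Proposition~\ref{thm:DefRes}~(\ref{enu:Res-State}) there is a non-identically vanishing outgoing solution $\pphi\in\cH_{\ope{loc}}$ to $(U-\lambda)\pphi=0$. Let $x_0\in\Z$ be the unique position at which $C(x_0)$ is non-diagonal, if one exists; at every other site $C(x)=\ope{diag}(a(x),d(x))$ is diagonal unitary with $|a(x)|=|d(x)|=1$. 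Writing $U=SC$ componentwise, the eigenequation reads $\lambda\pphi^L(x)=(C(x+1)\pphi(x+1))^L$ and $\lambda\pphi^R(x)=(C(x-1)\pphi(x-1))^R$, which at a diagonal site reduce to the decoupled scalar recursions $\lambda\pphi^L(x)=a(x+1)\pphi^L(x+1)$ (valid for $x+1\ne x_0$) and $\lambda\pphi^R(x)=d(x-1)\pphi^R(x-1)$ (valid for $x-1\ne x_0$). Because $\lambda\ne0$ and the coefficients are unimodular, each recursion transports the vanishing of a component from a site to its neighbour along the corresponding half-line; this unique continuation, rooted in the non-vanishing of the coin entries, is the only structural input.

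If there is no non-diagonal coin at all, both recursions are valid at every step. The outgoing conditions $c_+^\inc=c_-^\inc=0$ in \eqref{eq:Outgoing-Cond} give $\pphi^L(x)=0$ for $x\gg1$ and $\pphi^R(x)=0$ for $x\ll-1$, and descending the first recursion and ascending the second then forces $\pphi^L\equiv\pphi^R\equiv0$, contradicting $\pphi\not\equiv0$.

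It remains to treat a single non-diagonal coin at $x_0$. Here the recursion for $\pphi^L$ is valid for all $x\ge x_0$ and that for $\pphi^R$ for all $x\le x_0$. Starting from the outgoing conditions at $\pm\infty$ and propagating towards $x_0$, I would obtain $\pphi^L(x)=0$ for every $x\ge x_0$ and $\pphi^R(x)=0$ for every $x\le x_0$; in particular $\pphi(x_0)=0$. The one place where the two components interact is the coin at $x_0$, and the crux is to feed $\pphi(x_0)=0$ back through it: evaluating the eigenequation at $x=x_0-1$ and at $x=x_0+1$ yields $\lambda\pphi^L(x_0-1)=(C(x_0)\pphi(x_0))^L=0$ and $\lambda\pphi^R(x_0+1)=(C(x_0)\pphi(x_0))^R=0$, so $\pphi^L(x_0-1)=\pphi^R(x_0+1)=0$.

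Finally, on the complementary half-lines the decoupled recursions for $\pphi^L$ (on $x<x_0$) and for $\pphi^R$ (on $x>x_0$) are now seeded with a zero value, so the same propagation gives $\pphi^L\equiv0$ and $\pphi^R\equiv0$, i.e.\ $\pphi\equiv0$, which is the desired contradiction. I do not expect any serious obstacle: the argument is essentially the transfer-matrix unique continuation mentioned after the Assumption, and the only point requiring care is the bookkeeping at the single coupling site $x_0$, where the off-diagonal entries of $C(x_0)$ are precisely what could have produced a non-trivial resonant state but are annihilated by $\pphi(x_0)=0$.
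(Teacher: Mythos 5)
Your argument is correct and follows essentially the same route as the paper: the paper reduces the claim to the non-existence of a non-trivial outgoing solution of $(U-\lambda)\pphi=0$ via Proposition~\ref{thm:DefRes}~(\ref{enu:Res-State}) and invokes the transfer-matrix method, noting that $T_\lambda(x)$ is diagonal whenever $C(x)$ is. Your decoupled scalar recursions are exactly that diagonality written out componentwise, with the bookkeeping at the single non-diagonal site made explicit; there is no gap.
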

\begin{proof}
In this setting, there is no outgoing solution (except the zero function) to \eqref{eq:Eigen-Equation} for any $\lambda\in\C$.  This fact can be shown by using the method of transfer matrices (in particular, the transfer matrix $T_\lambda(x)$ introduced in \eqref{eq:T-kappa} is diagonal if $C(x)$ is diagonal).
Then the statement follows from Proposition~\ref{thm:DefRes} (\ref{enu:Res-State}). 
\end{proof}

According to Proposition~\ref{prop:no-res}, the double barrier problem is the simplest case with non-zero resonances. The distribution of resonances reflects a quasi-periodic dynamics of quantum walkers.
\begin{proposition}\label{prop:DoubleBarrier}
Assume that $C(x)$ is a diagonal matrix for  $x\in\Z\setminus\{0,N\}$ for some $N\in\N\setminus\{0\}$. 
Then there exists a constant $\alpha\in\C$ such that 
for any initial state $\psi\in\cH$ with $\ope{supp}^\inc\psi\subset[1,N-1]$, one has%(that is, $\psi_0^L(x)=0$ for $x\ge N$ and $\psi_0^R(x)=0$ for $x\le0$), 
%$\psi_n=U^n\psi_0$ is $2N$-quasi periodic on $[0,N]$, i.e., 
\be
\psi_{n+2N}(x)=\alpha\psi_n(x)\qtext{for any}n\in\N,\ x\in[0,N]\cap\Z.
\ee
The set of resonances $\ope{Res}(U)$ consists of $0$ and the $2N$-roots of $\lambda^{2N}=\alpha$, which are simple. 
\end{proposition}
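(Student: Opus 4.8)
The plan is to push everything through the finite, explicitly solvable cavity $[0,N]$ and to read off both the round-trip constant $\alpha$ and the resonances from one scalar recursion. Write $C(x)=\begin{pmatrix}a(x)&b(x)\\ c(x)&d(x)\end{pmatrix}$, so that $(U-\lambda)\pphi=0$ for $\pphi={}^t(\pphi^L,\pphi^R)$ becomes
\[
\lambda\pphi^L(x)=a(x+1)\pphi^L(x+1)+b(x+1)\pphi^R(x+1),\qquad
\lambda\pphi^R(x)=c(x-1)\pphi^L(x-1)+d(x-1)\pphi^R(x-1).
\]
Since $b=c=0$ at every diagonal site, the two components decouple in each of the three diagonal zones (left exterior, interior $[1,N-1]$, right exterior), each propagating by a single phase times $\lambda^{\pm1}$. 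By Proposition~\ref{thm:DefRes}\,(\ref{enu:Res-State}) a nonzero $\lambda$ is a resonance iff a nontrivial outgoing solution exists; the outgoing condition forces $\pphi^R\equiv0$ on the left exterior and $\pphi^L\equiv0$ on the right exterior, and propagating these up to the barriers yields $\pphi^R(0)=0$ and $\pphi^L(N)=0$. I would then propagate $\pphi^R$ rightward (it is born at the left barrier through $c(0)$) and $\pphi^L$ leftward (born at the right barrier through $b(N)$), and close the loop at the left barrier, obtaining the single scalar condition
\[
\lambda^{2N}=\alpha,\qquad
\alpha:=c(0)\,b(N)\prod_{j=1}^{N-1}a(j)\,d(j).
\]
Thus the nonzero resonances are exactly the $2N$-th roots of $\alpha$, while $0\in\ope{Res}(U)$ by Proposition~\ref{prop:ContiResolv}. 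If $C(0)$ or $C(N)$ is diagonal then unitarity forces the corresponding off-diagonal entry, hence $\alpha$, to vanish and Proposition~\ref{prop:no-res} gives $\ope{Res}(U)=\{0\}$; otherwise $c(0),b(N)\neq0$ and the diagonal entries are nonzero by Assumption, so $\alpha\neq0$.

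For simplicity, when $\alpha\neq0$ the equation $\lambda^{2N}=\alpha$ has $2N$ distinct roots, so $U$ has $2N$ distinct nonzero resonances. As $\ope{chs}(C-I_2)=[0,N]$ has $N+1$ integer points, Proposition~\ref{thm:DefRes}\,(\ref{enu:Sum-Multi}) bounds the total multiplicity of the nonzero resonances by $2N$. Having $2N$ distinct resonances, each of multiplicity at least one and summing to at most $2N$, forces $m(\lambda)=1$ for every nonzero resonance.

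For the quasi-periodicity I would set $J:=[0,N]$ and $M:=\mathbbm{1}_JU\mathbbm{1}_J$, viewed on the $2(N+1)$-dimensional space $\mathbbm{1}_J\cH$. The key point is that the class $\{\eta:\ \eta^R(x)=0\ (x\le0),\ \eta^L(x)=0\ (x\ge N)\}$ (i.e. $\ope{supp}^\inc\eta\subset[1,N-1]$) is $U$-invariant: a one-line check at the sites $0,N$ using diagonality of the adjacent coins reproduces the two support conditions, the exteriors carrying only outgoing movers that escape and never re-enter $[0,N]$. Consequently $(U\eta)(x)$ for $x\in J$ depends only on $\eta|_J$, giving $\mathbbm{1}_JU^n\psi=M^n(\mathbbm{1}_J\psi)$ for every $\psi$ in the class and every $n$, without appealing to Lemma~\ref{lem:Outgoing-TimeEv}. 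I then restrict $M$ to $V:=\{\chi\in\mathbbm{1}_J\cH:\ \chi^R(0)=\chi^L(N)=0\}$, which has dimension $2N$ and contains $\mathbbm{1}_J\psi$ for every $\psi$ in the class. The eigenvalue equation $M\chi=\lambda\chi$ is exactly the truncation of the recursion above with boundary conditions $\chi^R(0)=\chi^L(N)=0$ (automatic for $\lambda\neq0$), so its $2N$ nonzero eigenvalues are the roots of $\lambda^{2N}=\alpha$ and their eigenvectors lie in $V$; being $2N$ independent vectors in the $2N$-dimensional space $V$ they span it, so $V$ is $M$-invariant, $M|_V$ is diagonalizable, and $(M|_V)^{2N}=\alpha\,\ope{Id}$. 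Applying this to $\mathbbm{1}_J\psi_n\in V$ yields $\psi_{n+2N}(x)=\alpha\psi_n(x)$ on $[0,N]$.

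The hard part will be the bookkeeping in the scalar recursion producing $\alpha$: one must track precisely where each $\lambda^{\pm1}$ and each coin entry enters as $\pphi^R$ and $\pphi^L$ are propagated in opposite directions through the interior and mixed at the two barriers, and verify that the two exterior outgoing conditions collapse to exactly $\pphi^R(0)=\pphi^L(N)=0$. A secondary point requiring care is the simplicity argument should there be non-identity diagonal coins strictly outside $[0,N]$: then $\ope{chs}(C-I_2)\supsetneq[0,N]$ and the counting bound is no longer tight, so one instead identifies $m(\lambda)$ with the order of vanishing of the polynomial $\lambda^{2N}-\alpha$ (which is $1$ since $\alpha\neq0$), anticipating the Rouch\'e-type argument of Lemma~\ref{lem:Rouche}.
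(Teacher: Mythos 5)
Your proposal is correct, and it reaches the same destination as the paper but by a more explicit, self-contained route. The paper's proof is two lines: by Lemma~\ref{lem:cut-off-Res} the non-zero resonances are the non-zero eigenvalues of the finite matrix $\mathbbm{1}_JU\mathbbm{1}_J$ with $J=[0,N]$, and one checks that its characteristic polynomial factors as $\lambda^2(\lambda^{2N}-\alpha)$ with $\alpha$ as in \eqref{eq:Def-Alpha} (the dynamical derivation of $\alpha$ being relegated to Remark~\ref{rem:quasi-periodic}); everything else -- simplicity and the $2N$-quasi-periodicity -- is read off from this factorization via Cayley--Hamilton. You instead obtain the resonance set from the outgoing-solution characterization (Proposition~\ref{thm:DefRes}~(\ref{enu:Res-State})) and the scalar transfer recursion, which makes the ``round-trip'' origin of $\alpha=c_{21}(0)c_{12}(N)\prod_{j=1}^{N-1}\det C(j)$ transparent and matches \eqref{eq:Def-Alpha}; you get simplicity by counting the $2N$ distinct roots against the multiplicity bound of Proposition~\ref{thm:DefRes}~(\ref{enu:Sum-Multi}) (which is exactly $2N$ when $\ope{chs}(C-I_2)=[0,N]$); and you get the quasi-periodicity by exhibiting the $U$-invariant class $\ope{supp}^\inc\eta\subset[1,N-1]$, the $2N$-dimensional $M$-invariant subspace $V$, and the diagonalizability of $M|_V$ with $(M|_V)^{2N}=\alpha\,\ope{Id}$. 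The trade-off is that the paper's argument is shorter but leaves the factorization and the passage to quasi-periodicity implicit, whereas yours supplies exactly the bookkeeping the paper omits, and your $V$-invariance computation is in effect a hands-on special case of Lemma~\ref{lem:Outgoing-TimeEv} and Corollary~\ref{cor:Evo0Res}.

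Two small points to tighten. First, your quasi-periodicity step assumes $\alpha\neq0$ (you need $2N$ distinct eigenvalues to diagonalize $M|_V$); when $\alpha=0$ the same conclusion follows because $M|_V$ is then nilpotent on the $2N$-dimensional space $V$, so $(M|_V)^{2N}=0=\alpha\,\ope{Id}$ -- worth one sentence. Second, in your closing remark the identification of $m(\lambda)$ with the order of vanishing of the transfer-matrix polynomial is the content of Lemma~\ref{lem:Chara-T}, not of the Rouch\'e-type Lemma~\ref{lem:Rouche}; the latter is only used for the genericity theorem. Neither point affects the validity of the argument.
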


Without loss of generalities, we can assume that $\ope{chs}(C-I_2)=[0,N]$.

\begin{remark}\label{rem:quasi-periodic}
The constant $\alpha$ is the ``probability amplitude" associated with the dynamics. Let $\psi(1)={}^t(0,1)$ and $\psi(x)=0$ for $x\in\Z\setminus\{1\}$. Then for $0\le n\le N-1$, we have $U^n\psi(n+1)={}^t(0,a_n)$ and $\psi(x)=0$ for $x\in\Z\setminus\{n+1\}$ with 
\be\label{eq:Def-an}
a_{n+1}=a_n c_{22}(n)%\begin{pmatrix}0&1\end{pmatrix}C(n)\begin{pmatrix}0\\1\end{pmatrix},
\quad a_0=1,
\ee
where $c_{jk}(x)$ stands for the $(j,k)$-entry of $C(x)$. 
After that, for $N\le n\le 2N-1$, we have $U^n\psi(N-n-1)={}^t(b_n,0)$, $U^n\psi(x)=0$ for $x\in\Z\setminus\{N-n-1,n+1\}$ with 
\be\label{eq:Def-bn}
b_{n+1}=b_n%\begin{pmatrix}1&0\end{pmatrix}C(N-n)\begin{pmatrix}1\\0\end{pmatrix}
c_{11}(N-n),\quad b_N=a_{N-1}
%\begin{pmatrix}1&0\end{pmatrix}C(N)\begin{pmatrix}0\\1\end{pmatrix}
c_{12}(N).
\ee
Then at the time $2N$, we have
\ben
U^{2N}\psi(1)=\begin{pmatrix}0\\\alpha\end{pmatrix},\quad U^{2N}\psi(x)=0\quad x\in\Z\setminus\{\pm1\}\qtext{with }
\alpha=b_{2N-1}c_{21}(0)
%\begin{pmatrix}0&1\end{pmatrix}C(0)\begin{pmatrix}1\\0\end{pmatrix}.
\een
Therefore, we have
\be\label{eq:Def-Alpha}
\alpha=c_{21}(0)c_{12}(N)%\begin{pmatrix}0&1\end{pmatrix}C(0)\begin{pmatrix}1&0\\0&0\end{pmatrix}C(N)\begin{pmatrix}0\\1\end{pmatrix}
\prod_{x=1}^{N-1}\det C(x).
\ee
%By denoting the entries of each coin matrix by
%\ben C(x)=:\begin{pmatrix}\sqrt{1-r_x^2}e^{i\theta_x^L}&r_xe^{i\phi_x^L}\\ -r_xe^{i\phi_x^R}&\sqrt{1-r_x^2}e^{i\theta_x^R}\end{pmatrix},\quad r_x\in[0,1),\quad \theta_x^S,\,\phi_x^S\in[0,2\pi)\quad(S=L,R),\een
%the constant $\alpha$ is explicitly given by \be \alpha=r_0r_Te^{i\Theta},\quad \Theta=\sum_{x=1}^{T-1} (\theta_x^L+\theta_x^R)+\phi_0^R+\phi_T^L.\ee
%Note that the arguments satisfy $\phi_x^L+\phi_x^R=\theta_x^L+\theta_x^R+2k\pi$ for some $k\in\Z$.
Let $\lambda$ be one of the roots of $\lambda^{2N}=\alpha$, where $\alpha$ is given by \eqref{eq:Def-Alpha}. 
Put
\ben
\pphi_\lambda(x):=\begin{pmatrix}\mathbbm{1}_{(-\infty,N-1]}(x)\lambda^{x-2N}b_{2N-1-x}\\\mathbbm{1}_{[1,+\infty)}(x)\lambda^{-x}a_{x}\end{pmatrix},
\een
where $a_n$ and $b_n$ are defined by \eqref{eq:Def-an} and by \eqref{eq:Def-bn}, respectively. We can easily see that $\pphi_\lambda$ is a resonant state associated with $\lambda$. 
\end{remark}

\begin{proof}
According to Lemma~\ref{lem:cut-off-Res}, the non-zero resonances of $U$ are the non-zero eigenvalues of a $2N\times 2N$-matrix. It is not difficult to see that the characteristic polynomial associated with this matrix is factorized as
\ben
\lambda^2(\lambda^{2N}-\alpha),
\een
with the constant $\alpha$ defined by \eqref{eq:Def-Alpha}. This implies the above proposition.
%obtain the above proposition by computing the roots of the characteristic polynomial associated with this matrix.
\end{proof}

As we see in Proposition~\ref{prop:DoubleBarrier}, non-zero resonances in the double barrier problem are simple. We give an example of multiple resonances in the triple barrier problem. 

\begin{proposition}
Assume that $C(x)=I_2$ holds for $x\in\{0,\pm1\}$. Then each non-zero resonance is a root of the equation
\be\label{eq:chara-poly-3}
\lambda^4-(c_{21}(0)c_{12}(1)+c_{21}(-1)c_{12}(0))\lambda^2-c_{21}(-1)c_{12}(1)\det C(0)=0,
\ee
where $c_{jk}(x)$ stands for the $(j,k)$-entry of $C(x)$. Its multiplicity coincides with that as a root. In particular, 
\ben
\pm\frac1{\sqrt2}(c_{21}(0)c_{12}(1)+c_{21}(-1)c_{12}(0))^{1/2},
\een
are non-zero resonances of multiplicity two if and only if
\be\label{eq:multicond1}
(c_{21}(0)c_{12}(1)+c_{21}(-1)c_{12}(0))^2+4c_{21}(-1)c_{12}(1)\det C(0)=0.
\ee
\end{proposition}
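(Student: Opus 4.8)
The plan is to reduce the statement to an eigenvalue computation for a finite matrix and then read off everything from its characteristic polynomial. Since $C(x)=I_2$ for $x\notin\{-1,0,1\}$, we have $\ope{chs}(C-I_2)\subset[-1,1]$, so I would take $J=[-1,1]$, for which $\left|J\right|_\Z=3$. By Corollary~\ref{cor:Evo0Res}, the non-zero resonances of $U$ are precisely the non-zero eigenvalues of the $6\times6$ matrix $\mathbbm{1}_JU\mathbbm{1}_J$ acting on the $6$-dimensional space $\{\psi:J\cap\Z\to\C^2\}$, and the resonance multiplicity $m(\lambda)$ equals the algebraic multiplicity of $\lambda$ as an eigenvalue of this matrix. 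Thus the whole proposition follows once the characteristic polynomial of $\mathbbm{1}_JU\mathbbm{1}_J$ is computed.

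Using the explicit formula $(U\psi)(x)=\begin{pmatrix}1&0\\0&0\end{pmatrix}C(x+1)\psi(x+1)+\begin{pmatrix}0&0\\0&1\end{pmatrix}C(x-1)\psi(x-1)$ from the proof of Lemma~\ref{lem:Outgoing-TimeEv}, I would write down $\mathbbm{1}_JU\mathbbm{1}_J$ in the basis ordered as $(\psi^L(-1),\psi^R(-1),\psi^L(0),\psi^R(0),\psi^L(1),\psi^R(1))$. The rows producing $\psi^R(-1)$ and $\psi^L(1)$ vanish, since they would require the values of $\psi$ at $-2$ and $2$, which lie outside $J$; the matrix is
\[
\mathbbm{1}_JU\mathbbm{1}_J=
\begin{pmatrix}
0&0&c_{11}(0)&c_{12}(0)&0&0\\
0&0&0&0&0&0\\
0&0&0&0&c_{11}(1)&c_{12}(1)\\
c_{21}(-1)&c_{22}(-1)&0&0&0&0\\
0&0&0&0&0&0\\
0&0&c_{21}(0)&c_{22}(0)&0&0
\end{pmatrix}.
\]
Expanding $\det(\mathbbm{1}_JU\mathbbm{1}_J-\lambda)$ along the second and fifth rows (each contributing a single entry $-\lambda$) peels off a factor $\lambda^2$ and leaves a $4\times4$ determinant; a short cofactor expansion then gives $\det(\mathbbm{1}_JU\mathbbm{1}_J-\lambda)=\lambda^2\bigl(\lambda^4-(c_{21}(0)c_{12}(1)+c_{21}(-1)c_{12}(0))\lambda^2-c_{21}(-1)c_{12}(1)\det C(0)\bigr)$. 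Since the factor $\lambda^2$ only accounts for the eigenvalue $0$, the non-zero eigenvalues are exactly the roots of the quartic \eqref{eq:chara-poly-3}, with matching algebraic multiplicities; by the reduction above this proves the first assertion.

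For the last claim I would set $\mu=\lambda^2$ and write the quartic as $\mu^2-B\mu-D$ with $B:=c_{21}(0)c_{12}(1)+c_{21}(-1)c_{12}(0)$ and $D:=c_{21}(-1)c_{12}(1)\det C(0)$; its four roots are $\pm\sqrt{\mu_1},\pm\sqrt{\mu_2}$, where $\mu_1,\mu_2$ solve the quadratic. A non-zero double root of the quartic can only arise from $\mu_1=\mu_2$, since a coincidence $\sqrt{\mu_i}=-\sqrt{\mu_j}$ with $\mu_i\neq\mu_j$ is impossible and $\sqrt{\mu}=-\sqrt{\mu}$ forces $\lambda=0$. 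Hence non-zero resonances of multiplicity two occur exactly when the quadratic has a double root, i.e. when its discriminant $B^2+4D$ vanishes, which is condition \eqref{eq:multicond1}; in that case $\lambda^2=B/2$, giving the two resonances $\pm\tfrac1{\sqrt2}B^{1/2}$ each of multiplicity two.

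The main obstacle is bookkeeping rather than conceptual: organizing the cofactor expansion so that the factorization $\lambda^2\cdot(\text{quartic})$ comes out with the correct coefficients, and, in the final step, carefully matching the substitution $\lambda\mapsto\lambda^2$ so that a double root of the quartic is correctly attributed to a double root of the quadratic (and noting that $B\neq0$ is needed for the two coincident resonances to be non-zero). Everything else is a direct application of Corollary~\ref{cor:Evo0Res}.
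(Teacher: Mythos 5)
Your proposal is correct and follows essentially the same route as the paper: identify the non-zero resonances with the non-zero eigenvalues of the $6\times6$ matrix $\mathbbm{1}_{[-1,1]}U\mathbbm{1}_{[-1,1]}$ via Lemma~\ref{lem:cut-off-Res} and Corollary~\ref{cor:Evo0Res}, and compute its characteristic polynomial $\lambda^2\cdot(\text{quartic})$. (Your matrix, with $c_{12}(0)$ and $c_{12}(1)$ in the $(1,4)$ and $(3,6)$ slots, is the one consistent with the formula for $U$ and with the stated quartic; the entries $c_{21}(0)$, $c_{21}(1)$ printed there in the paper's $E_J$ appear to be typos.)
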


\begin{proof}
This proposition is proved also by a direct computation of the roots of the characteristic polynomial.
Let $J=[-1,1]=\ope{chs}(C-I_2)$. Then $\mathbbm{1}_JU\mathbbm{1}_J$ is identified with the matrix 
\ben
E_J:=
\begin{pmatrix}
0&0&c_{11}(0)&c_{21}(0)&0&0\\
0&0&0&0&0&0\\
0&0&0&0&c_{11}(1)&c_{21}(1)\\
c_{21}(-1)&c_{22}(-1)&0&0&0&0\\
0&0&0&0&0&0\\
0&0&c_{21}(0)&c_{22}(0)&0&0
\end{pmatrix},
\een
in the sense that we have 
%$\mathbbm{1}_JU\mathbbm{1}_J\psi(x)=0$ for $x\in\Z\setminus\{0,\pm1\}$ and 
\ben
\begin{pmatrix}\mathbbm{1}_JU\mathbbm{1}_J\psi(-1)\\\mathbbm{1}_JU\mathbbm{1}_J\psi(0)\\\mathbbm{1}_JU\mathbbm{1}_J\psi(1)\end{pmatrix}=E_J\begin{pmatrix}\psi(-1)\\\psi(0)\\\psi(1)\end{pmatrix}\qtext{for any}\psi\in\cH_{\ope{loc}}.
\een
Then the characteristic polynomial of this matrix is $\lambda^2$ times \eqref{eq:chara-poly-3}. 
\end{proof}

\begin{remark}
Let us simplify the condition \eqref{eq:multicond1}. %If this equality holds, 
The modulus of the two terms necessarily coincide when the equality holds: %A necessary condition is given by
\ben
\left|c_{12}(0)\right|=\frac{2\sqrt{|c_{21}(-1)c_{12}(1)|}}{|c_{21}(-1)+c_{12}(1)(c_{21}(0)/c_{12}(0))|}.
\een
Hence, by putting $r_x:=\left|c_{12}(x)\right|=\left|c_{21}(x)\right|$, they need to satisfy
\ben
r_0\ge\frac{2\sqrt{r_{-1}r_1}}{r_{-1}+r_1}.
\een
It is required that $r_{-1}\neq r_1$ since the right hand side is equal to one when $r_{-1}=r_1$ and $r_0<1$. 

Let us restrict ourselves to the case with $c_{11}(x)=c_{22}(x)=\sqrt{1-r_x^2}$ and $c_{12}(x)=-c_{21}(x)=r_x$. Then the condition \eqref{eq:multicond1} turns into 
\ben
r_0=\frac{2\sqrt{r_{-1}r_1}}{r_{-1}+r_1}.
\een
For any $r_{-1}\neq r_1$, the right hand side is between $0$ and $1$. %there uniquely exists $0<r_0<1$ which satisfies this condition.  
The following gives an example:
\ben
r_{-1}=\frac{3}{4},\quad r_0=\frac{12}{13},\quad r_1=\frac{1}{3}.
\een
\end{remark}

\subsection{Symmetries}
The distribution of resonances is symmetric in the following sense.
\begin{proposition}\label{eq:Periodic-Supp}
If $\lambda$ is a resonance of $U$, then $-\lambda$ is also a resonance and $m(-\lambda)=m(\lambda)$. 
In general, for any $k\in\N\setminus\{0\}$ satisfying $\ope{supp}\,c_{12}%-I_2)
\subset k\Z$, we have 
\be
m(e^{il\pi/k}\lambda)=m(\lambda)\quad(l=0,1,2,\ldots,2k-1)
\ee 
for any $\lambda\in\C$. Moreover, if $\pphi_\lambda$ is a (generalized) resonant state associated with $\lambda\in\ope{Res}(U)\setminus\{0\}$, then $\pphi\in\cH_{\ope{loc}}$ given by 
\be
\pphi(x):=\begin{pmatrix}e^{il\pi x/k}&0\\0&e^{-il\pi x/k}\end{pmatrix}\pphi_\lambda(x)
\ee
is that associated with $e^{il\pi x/k}\lambda$. 
\end{proposition}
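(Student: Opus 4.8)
The plan is to realize the claimed correspondence through a single gauge (diagonal multiplication) operator that intertwines $U$ with a scalar multiple of itself. Fix $l$ and $k$ with $\ope{supp}\,c_{12}\subset k\Z$, write $\mu:=e^{il\pi/k}$, and define $G:\cH_{\ope{loc}}\to\cH_{\ope{loc}}$ by $(G\psi)(x)=\begin{pmatrix}\mu^{x}&0\\0&\mu^{-x}\end{pmatrix}\psi(x)$. The operator $G$ is an invertible diagonal multiplication operator, so it commutes with every cut-off $\mathbbm{1}_J$. The heart of the proof is the operator identity $G^{-1}UG=\mu U$ on $\cH_{\ope{loc}}$; membership in $\ope{Res}(U)$, equality of multiplicities, and the explicit form of the transformed resonant state will all be read off from it.

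First I would establish $G^{-1}UG=\mu U$ by a direct pointwise computation using the formula $(U\psi)(x)=\begin{pmatrix}1&0\\0&0\end{pmatrix}C(x+1)\psi(x+1)+\begin{pmatrix}0&0\\0&1\end{pmatrix}C(x-1)\psi(x-1)$ recalled in the proof of Lemma~\ref{lem:Outgoing-TimeEv}. Conjugating by $G$ leaves the diagonal entries $c_{11},c_{22}$ multiplied exactly by $\mu$, while the off-diagonal entries acquire extra phases: relative to the desired factor $\mu$, the $c_{12}(x+1)$-term is multiplied by $\mu^{-2(x+1)}$ and the $c_{21}(x-1)$-term by $\mu^{2(x-1)}$. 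Since $\mu^{2k}=1$, these spurious phases equal $1$ whenever $x\pm1\in k\Z$, i.e. precisely on the supports of $c_{12}$ and $c_{21}$. Here I would invoke unitarity of each $C(x)$, which forces $|c_{21}(x)|=|c_{12}(x)|$ and hence $\ope{supp}\,c_{21}=\ope{supp}\,c_{12}\subset k\Z$, so both off-diagonal contributions also pick up exactly $\mu$. The only delicate point of the whole argument is this bookkeeping of the phases $\mu^{\pm2(x\pm1)}$ and the recognition that unitarity is what upgrades the hypothesis on $c_{12}$ to the needed control of $c_{21}$; everything after the intertwining identity is formal.

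Given $G^{-1}UG=\mu U$, equivalently $UG=\mu GU$, the assertions on states follow quickly. If $\pphi_\lambda$ solves $(U-\lambda)\pphi_\lambda=0$, then $(U-\mu\lambda)G\pphi_\lambda=\mu G(U-\lambda)\pphi_\lambda=0$, so $\pphi:=G\pphi_\lambda$ — which is exactly the map displayed in the statement — solves the eigenequation at $\mu\lambda$. To see that $\pphi$ is again outgoing (hence a genuine resonant state by Proposition~\ref{thm:DefRes} (\ref{enu:Res-State})), I would use the asymptotics \eqref{eq:Outgoing-Cond}: on the far left $\pphi_\lambda(x)={}^t(c_-^\out\lambda^x,0)$, so $\pphi(x)={}^t(c_-^\out(\mu\lambda)^x,0)$ still has vanishing second component, and symmetrically the first component of $\pphi$ vanishes on the far right, so $G$ preserves the outgoing condition. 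Applying the same computation to a Jordan chain $(U-\lambda)\pphi_{\lambda,j}=\pphi_{\lambda,j-1}$ gives $(U-\mu\lambda)(\mu^{-j}G\pphi_{\lambda,j})=\mu^{-(j-1)}G\pphi_{\lambda,j-1}$, so $G$ carries generalized resonant states at $\lambda$ to generalized resonant states at $\mu\lambda$, which settles the ``moreover'' clause.

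For the multiplicities I would pass to the finite matrix picture. By Lemma~\ref{lem:cut-off-Res} and Corollary~\ref{cor:Evo0Res}, fix $J\supset\ope{chs}(C-I_2)$ and set $M:=\mathbbm{1}_JU\mathbbm{1}_J$; then $m(\lambda)$ is the algebraic multiplicity of $\lambda$ as an eigenvalue of $M$. Because $G$ commutes with $\mathbbm{1}_J$, restricting $G^{-1}UG=\mu U$ to $J$ gives $G^{-1}MG=\mu M$ with $G$ an invertible diagonal matrix on $J$. Hence $M$ and $\mu M$ are similar, so their Jordan structures agree: the multiplicity of $\mu\lambda$ as an eigenvalue of $M$ equals its multiplicity as an eigenvalue of $\mu M$, which equals the multiplicity of $\lambda$ as an eigenvalue of $M$. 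This is exactly $m(\mu\lambda)=m(\lambda)$, valid for every $\lambda\in\C$ (in the non-resonant case both sides vanish). The symmetry $m(-\lambda)=m(\lambda)$ is the special case $k=1$, $l=1$, whose support hypothesis $\ope{supp}\,c_{12}\subset\Z$ is automatic.
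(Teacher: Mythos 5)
Your proof is correct and follows essentially the same route as the paper: both hinge on the intertwining identity $UG=e^{il\pi/k}GU$ for the diagonal gauge operator $G$, proved by checking that $\ope{diag}(e^{il\pi x/k},e^{-il\pi x/k})$ commutes with $C(x)$ (using unitarity to control $c_{21}$) and tracking the phase shift through $S$. You additionally spell out the consequences (preservation of the outgoing condition, the rescaled Jordan chain, and the multiplicity count via similarity of the cut-off matrix $\mathbbm{1}_JU\mathbbm{1}_J$ with its scalar multiple) that the paper leaves implicit after stating ``it suffices to show'' the intertwining relation.
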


\begin{proof}
Let us suppose that $\ope{supp}\,c_{12}\subset k\Z$ with $k\in\N\setminus\{0\}$. It suffices to show
\be\label{eq:Commute-zeta}
U\begin{pmatrix}e^{il\pi x/k}&0\\0&e^{-il\pi x/k}\end{pmatrix}=e^{il\pi /k}U,
\ee
for any $l\in\{0,1,\ldots,2k-1\}$. The matrix $\ope{diag}(e^{i\pi x/k},e^{-i\pi x/k})$ commutes with $C(x)$ for any $x\in\Z$. Note that the latter matrix $C(x)$ is also a diagonal matrix for $x\notin k\Z$, and the former matrix is $I_2$ for $x\in k\Z$. Hence Identity \eqref{eq:Commute-zeta} follows from
\ben
S\begin{pmatrix}e^{i\pi x/k}&0\\0&e^{-i\pi x/k}\end{pmatrix}=e^{i\pi /k}S.
\een
\end{proof}

\begin{remark}
If each $C(x)$ is a (real) orthogonal matrix, then we have
\ben
m(\lambda)=m(\bar\lambda)
\een
for any $\lambda\in\C\setminus\{0\}$, and for a (generalized) resonant state $\pphi_\lambda$ associated with $\lambda\in\ope{Res}(U)\setminus\{0\}$, 
\ben
\pphi=\overline{\pphi_\lambda}
\een 
is that associated with $\bar{\lambda}$. These facts follows from
\ben
\overline{U\psi}=U\overline{\psi}
\een
for any $\psi\in\cH_{\ope{loc}}$.
\end{remark}

We can also define \textit{incoming resonances} as poles of meromorphic family of operators $(\mathbbm{1}_J(U-\lambda)^{-1}\mathbbm{1}_J)_{\lambda\in\C}$ extended from $\left|\lambda\right|<1$ to whole $\C$. They have similar properties with (outgoing) resonances. Especially, $\lambda\in\C$ is a incoming resonance if and only if there exists an incoming map $\pphi^\inc\in\cH_{\ope{loc}}$ satisfying \eqref{eq:EigenEquation}. 
Here, we say a map $\psi\in\cH_{\ope{loc}}$ is \textit{incoming} if there exists $r>0$ such that
\ben
\psi^L(-x)=\psi^R(x)=0
\een
holds for any $x>r$. We denote the multiplicity of an incoming resonance $\lambda$ by $m^\inc(\lambda)$.

\begin{proposition}
If $c_{11}(x)=c_{22}(x)$ holds for any $x\in\Z$, we have for $\lambda\in\C\setminus\{0\}$,
\be
m(\lambda)=m^\inc(\overline{\lambda}^{-1}).
\ee
Moreover, if $\pphi_\lambda$ is a (generalized) resonant state for an outgoing resonance $\lambda\in\ope{Res}(U)\setminus\{0\}$, then 
\ben
\pphi:=S\begin{pmatrix}0&1\\1&0\end{pmatrix}\overline{\pphi_\lambda}
\een
is a (generalized) resonant state for the incoming resonance $\overline{\lambda}^{-1}$. 
\end{proposition}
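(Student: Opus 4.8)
The plan is to realize the involution $\lambda\mapsto\overline{\lambda}^{-1}$ as an antilinear symmetry of $U$ that interchanges outgoing and incoming solutions, and then read off both assertions from it. Write $\sigma=\begin{pmatrix}0&1\\1&0\end{pmatrix}$, let $\Gamma$ be componentwise complex conjugation on $\cH_{\ope{loc}}$, and set $K:=\sigma\Gamma$ and $\mathcal J:=SK=S\sigma\Gamma$, so that $\mathcal J\pphi_\lambda$ is exactly the map $\pphi$ in the statement. First I would record two pointwise identities. Since $\sigma$ interchanges the two components while $S$ shifts them in opposite directions, a direct computation gives $\sigma S\sigma=S^{-1}=S^{*}$. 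For the coin, writing the entries of $C(x)$ as $c_{jk}(x)$, multiplication gives $\sigma\overline{C(x)}\sigma=\begin{pmatrix}\overline{c_{22}(x)}&\overline{c_{21}(x)}\\ \overline{c_{12}(x)}&\overline{c_{11}(x)}\end{pmatrix}$, and this equals $C(x)^{*}=C(x)^{-1}$ exactly when $c_{11}(x)=c_{22}(x)$; this is the only place the hypothesis is used. Because $\sigma$ and $S$ are real they commute with $\Gamma$, so these identities upgrade to the operator relations $KS=S^{*}K$ and $KC=C^{*}K$ on $\cH_{\ope{loc}}$.

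From these relations I extract two structural facts. Using $KS=S^{*}K$, $K^{2}=I$ and $SS^{*}=I$ on $\cH_{\ope{loc}}$, one gets $\mathcal J^{2}=SKSK=SS^{*}K^{2}=I$, so $\mathcal J$ is an antilinear involution. Next, since $KU=KSC=S^{*}C^{*}K=(CS)^{*}K$ and $U^{-1}=U^{*}=C^{*}S^{*}$ as operators on $\cH_{\ope{loc}}$, both $\mathcal J U=SKU=SS^{*}C^{*}K=C^{*}K$ and $U^{-1}\mathcal J=C^{*}S^{*}SK=C^{*}K$, whence the intertwining $\mathcal J U=U^{-1}\mathcal J$. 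I would also check directly on components that $\mathcal J$ sends outgoing maps to incoming maps: from $\pphi^{L}(x)=\overline{\pphi_\lambda^{R}(x+1)}$ and $\pphi^{R}(x)=\overline{\pphi_\lambda^{L}(x-1)}$ the vanishing of $\pphi_\lambda^{L}$ for $x\gg1$ and of $\pphi_\lambda^{R}$ for $x\ll-1$ turns into the incoming conditions for $\pphi$, and the converse holds by $\mathcal J^{2}=I$.

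Granting these, the ``moreover'' part is immediate: if $U\pphi_\lambda=\lambda\pphi_\lambda$, applying $\mathcal J$ and using antilinearity together with $\mathcal J U=U^{-1}\mathcal J$ gives $U^{-1}\pphi=\overline{\lambda}\,\pphi$, i.e. $U\pphi=\overline{\lambda}^{-1}\pphi$, while $\pphi=\mathcal J\pphi_\lambda$ is incoming; hence $\pphi$ is an incoming resonant state at $\overline{\lambda}^{-1}$, and by the incoming analogue of Proposition~\ref{thm:DefRes}~(\ref{enu:Res-State}) the point $\overline{\lambda}^{-1}$ is an incoming resonance. The same reasoning applied to a Jordan chain gives $\mathcal J(U-\lambda)^{N}=(U^{-1}-\overline{\lambda})^{N}\mathcal J$; since $U^{-1}-\overline{\lambda}=-\overline{\lambda}\,U^{-1}(U-\overline{\lambda}^{-1})$ with $U^{-1}$ invertible, the generalized eigenspace of $U$ at $\lambda$ is carried by $\mathcal J$ into that of $U$ at $\overline{\lambda}^{-1}$.

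For the equality of multiplicities I would argue that $\mathcal J$ restricts to an antilinear bijection from the space of outgoing generalized resonant states at $\lambda$ onto the space of incoming generalized resonant states at $\overline{\lambda}^{-1}$: the previous paragraph gives one inclusion, and applying it once more with $\mathcal J^{-1}=\mathcal J$ (which swaps incoming with outgoing and $U$ with $U^{-1}$) gives the reverse. By the incoming counterpart of Proposition~\ref{thm:DefRes}~(\ref{enu:Jor-Chain}) the dimensions of these two spaces are $m(\lambda)$ and $m^{\inc}(\overline{\lambda}^{-1})$, and an antilinear bijection preserves complex dimension, so $m(\lambda)=m^{\inc}(\overline{\lambda}^{-1})$. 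The main obstacle I anticipate is exactly this last step: one must ensure that $m^{\inc}$ really is the dimension of the incoming generalized-resonant-state space (that the incoming analogues of Proposition~\ref{thm:DefRes} hold, in particular that these states form a single Jordan chain), and that $\mathcal J$ transports the full generalized eigenspace rather than only the eigenvectors. Both are settled by the intertwining $\mathcal J U=U^{-1}\mathcal J$ together with the invertibility of $U$, but they are the points requiring care, whereas the pointwise matrix identities are routine.
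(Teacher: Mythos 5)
Your proposal is correct and follows essentially the same route as the paper: the key object in both is the antilinear map $\psi\mapsto SP\overline{\psi}$ with $P=\begin{pmatrix}0&1\\1&0\end{pmatrix}$, whose intertwining of $U$ with $U^{-1}=U^*$ rests on exactly the two identities you verify ($PS^*=SP$ and the coin identity, which is where $c_{11}=c_{22}$ enters). The only difference is presentational: the paper applies the symmetry directly to the inhomogeneous equation $(U-\lambda)\psi=f$, obtaining $(U-\overline{\lambda}^{-1})(SP\overline{\psi})=-\overline{\lambda}^{-1}USP\overline{f}$ and reading off the pole correspondence from there, while you track eigenvectors and Jordan chains explicitly; both are valid and yield the same multiplicity statement.
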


\begin{proof}
The assumption $c_{11}(x)=c_{22}(x)$ implies that 
\ben
\overline{PC(x)^*}=C(x)P
,\qtext{with} P:=\begin{pmatrix}0&1\\1&0\end{pmatrix},
\een
for each $x\in\Z$. Since we also have $PS^*=SP$, 
%\ben \begin{pmatrix}0&1\\1&0\end{pmatrix}S^*\begin{pmatrix}0&1\\1&0\end{pmatrix}=S,\een
it follows that
\ben
\overline{SPU^*\psi}
=S\overline{PC^*S^*\psi}
=SCPS^*\overline{\psi}
=USP\overline{\psi}.
\een
Hence, $(U-\lambda)\psi=f$ (equivalently $U^*\psi=\lambda^{-1}(U^*f-\psi)$) implies that
\ben
USP\overline{\psi}=SP\overline{U^*\psi}=\overline{\lambda}^{-1}(SP\overline{\psi}-USP\overline{f}),
\een
and $(U-\overline{\lambda}^{-1})(SP\overline{\psi})=-\overline{\lambda}^{-1}USP\overline{f}$.
%\ben (U-\overline{\lambda}^{-1})SP\overline{f}=SP\overline{(U^*-\lambda)f}\een
This ends the proof.
\end{proof}

\section{Generic simplicity of resonances}\label{sec:GenSimp}
We show Theorem~\ref{thm:GenSimp} which tells us that ``most of" quantum walks satisfying Assumption has only simple resonances except the zero resonance, that is, the image of the multiplicity function $m$ on $\ope{Res}(U)\setminus\{0\}$ is a subset of $\{1\}$. In the preliminaries for the proof, we also give other characterizations of resonances using the transfer matrix (Lemma~\ref{lem:Chara-T}) or the scattering matrix (Corollary~\ref{cor:Chara-S}).

%We identify two quantum walks determined by the sequence $(C(x))_{x\in\Z}$ of unitary matrices and by the one $(C'(x))_{x\in\Z}$ if they coincide with each other by a translation, i.e., there exists $x_0\in\Z$ such that $C'(x)=C(x+x_0)$ holds for any $x\in\Z$. 
For any $k\in\N$, let $\mathfrak{Q}_k$ be the set of equivalent classes of quantum walks satisfying Assumption and $\left|\ope{chs}(C-I_2)\right|_\Z=k$, where two quantum walks $U_1=SC_1$ and $U_2=SC_2$ are said to be equivalent if the coins are translation of each other, that is, there exist $y\in\Z$ such that $C_1(x)=C_2(x-y)$ holds for all $x\in\Z$. We introduce the group topology of $\mathfrak{C}^k$ ($\mathfrak{C}:=\{C\in \ope{U}(2);\,C_{11}\neq0\}$) to $\mathfrak{Q}_k$ by the trivial bijection. We define later the group operation and the topology here of $\mathfrak{C}$ which make $\mathfrak{C}$ a Hausdorff topological group. 

%The following theorem asserts that generically quantum walks have only simple non-zero resonances. 
\begin{theorem}\label{thm:GenSimp}
For any $k\in\N$, the set of $U$'s satisfying $m(\ope{Res}(U)\setminus\{0\})\subset\{1\}$ is dense in $\mathfrak{Q}_k$. % Assumption, $\ope{ch}(C)=k$ and .In other words, 
\end{theorem}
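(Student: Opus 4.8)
The plan is to reduce Theorem~\ref{thm:GenSimp} to a statement about zeros of a polynomial, using the characterization of non-zero resonances via the transfer matrix that the excerpt announces in Lemma~\ref{lem:Chara-T} and Proposition~\ref{prop:DoubleBarrier}. Recall that resonances are the non-zero eigenvalues of the finite matrix $\mathbbm{1}_J U\mathbbm{1}_J$ with $J=\ope{chs}(C-I_2)$ (Corollary~\ref{cor:Evo0Res}), and that the associated characteristic polynomial factors as $\lambda^{2}$ (the zero eigenvalue contribution of Remark~\ref{rem:dim-Zero-GES}) times a polynomial $P_C(\lambda)$ whose coefficients are the entries of the transfer matrix built from the coins $C(0),\dots,C(k-1)$. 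By Proposition~\ref{thm:DefRes}~(\ref{enu:Jor-Chain}) the geometric multiplicity of each non-zero resonance is one, so the algebraic multiplicity $m(\lambda)$ equals the multiplicity of $\lambda$ as a root of $P_C$. Hence $m(\ope{Res}(U)\setminus\{0\})\subset\{1\}$ exactly when $P_C$ has only simple non-zero roots, i.e.\ when the discriminant of $P_C$ (or of $P_C/\lambda^{\ope{ord}_0 P_C}$) does not vanish.

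First I would fix $k$ and write down $P_C$ explicitly as a polynomial in $\lambda$ whose coefficients are polynomial functions of the entries $c_{jk}(x)$ of the coins, using the transfer-matrix formalism of Lemma~\ref{lem:Chara-T}. The degree and the order of vanishing at $\lambda=0$ are controlled (Proposition~\ref{thm:DefRes}~(\ref{enu:Sum-Multi}) bounds the number of non-zero resonances by $2(k-1)$). Next I would form the relevant discriminant-type quantity $D(C)$, which is a polynomial in the matrix entries, such that $D(C)\neq0$ forces all non-zero roots of $P_C$ to be simple. The set of $U$ with only simple non-zero resonances then contains $\{C\in\mathfrak{C}^k;\,D(C)\neq0\}$. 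Since $D$ is a continuous (indeed polynomial) function on $\mathfrak{C}^k$, its non-vanishing set is open; the main point is to show it is also dense. For this I would invoke the Rouch\'e-type argument promised in Lemma~\ref{lem:Rouche} together with the classical fact that a non-identically-zero real-analytic (or polynomial) function on a connected manifold has nowhere-dense zero set: density of $\{D\neq0\}$ follows as soon as $D$ is not identically zero on $\mathfrak{C}^k$.

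Therefore the crux reduces to exhibiting \emph{one} quantum walk in $\mathfrak{Q}_k$ whose non-zero resonances are all simple, which certifies $D\not\equiv0$. Here I would use the double barrier example of Proposition~\ref{prop:DoubleBarrier}: placing two non-diagonal coins at the endpoints of an interval of the appropriate length yields $2N$ non-zero resonances that are the distinct $2N$-th roots of a nonzero constant $\alpha$, hence simple. One must check that such a configuration can be realized inside $\mathfrak{Q}_k$ with $\left|\ope{chs}(C-I_2)\right|_\Z=k$ (choosing the remaining intermediate coins diagonal with nonvanishing diagonal, as Assumption permits), so that the witness genuinely lies in the space under consideration and has the maximal number of distinct non-zero resonances. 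Because $\mathfrak{C}$ is connected, $\mathfrak{C}^k$ is connected, so the open dense conclusion propagates to the whole component.

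The main obstacle I anticipate is not the topological density argument, which is standard once $D\not\equiv0$, but rather the bookkeeping needed to legitimately pass from ``$P_C$ has a multiple non-zero root'' to ``$D(C)=0$'' uniformly in $k$: the order of vanishing of $P_C$ at $\lambda=0$ may itself vary with $C$, so one should work with the non-zero part $P_C/\lambda^{v(C)}$ and verify that its discriminant depends polynomially (or at least real-analytically) on $C$ on the stratum where $v(C)$ is locally constant, or else choose a discriminant formulation insensitive to the factor $\lambda^{v(C)}$ (for instance the resultant of $P_C$ and $P_C'$ divided by the appropriate power of $\lambda$). Handling this degeneracy cleanly, and confirming that the witness walk truly maximizes the count of simple non-zero resonances so that $D\not\equiv0$, is where the real care is required; everything else follows from analyticity plus connectedness of $\mathfrak{C}^k$.
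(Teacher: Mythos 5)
Your strategy is correct in its essentials but takes a genuinely different route from the paper. The paper argues locally and constructively: given any $U\in\mathfrak{Q}_k$, it perturbs the single coin $C(k)$ along an explicit two-parameter family $B(\vartheta,\e)$, computes the first-order term of the polynomial $\sigma_k(\lambda,\vartheta,\e)$ from Lemma~\ref{lem:Chara-T}, arranges the coefficient $\gamma(\vartheta,\lambda_0)$ to be non-zero simultaneously at every multiple resonance, and then invokes the Rouch\'e-type Lemma~\ref{lem:Rouche} to split each multiple zero into simple ones. You argue globally: the bad set is contained in the zero set of one non-trivial real-analytic function on the connected manifold $\mathfrak{C}^k$, with the double barrier of Proposition~\ref{prop:DoubleBarrier} (two non-diagonal coins at distance $k-1$, identity in between, which lies in $\mathfrak{Q}_k$ and has $2k-2$ distinct non-zero resonances) as the witness. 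Your route bypasses the perturbative computation and Lemma~\ref{lem:Rouche} entirely and yields the slightly stronger conclusion that the good set \emph{contains an open} dense set; the paper's route needs no global discriminant and tells you explicitly which coin to move and how.

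The one repair needed concerns the obstacle you flag at the end, which is real for the formulation ``simple non-zero roots $\Leftrightarrow$ discriminant $\neq0$'' but dissolves once you use only one implication. Fix $J=\ope{chs}(C-I_2)$ and let $P_C(\lambda)$ be the characteristic polynomial of the $2k\times2k$ matrix representing $\mathbbm{1}_JU\mathbbm{1}_J$ divided by $\lambda^2$; by Remark~\ref{rem:dim-Zero-GES} the eigenvalue $0$ always has algebraic multiplicity at least two, so $P_C$ is a monic polynomial of degree $2k-2$ whose coefficients depend polynomially on the coin entries, and $D(C):=\ope{disc}(P_C)$ is globally real-analytic on $\mathfrak{C}^k$. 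The equivalence you state is false when $0$ is a root of $P_C$ of multiplicity $\ge2$, but you only need: $D(C)\neq0$ implies all roots of $P_C$ are simple, hence (by Corollary~\ref{cor:Evo0Res}, $m(\lambda)$ being the algebraic multiplicity) all non-zero resonances are simple; and $D\neq0$ at the double-barrier witness, where the characteristic polynomial is $\lambda^2(\lambda^{2k-2}-\alpha)$ with $\alpha\neq0$. No stratification by the order of vanishing at $0$, and no ``resultant divided by a power of $\lambda$,'' is required. (For $k\le1$ the statement is vacuous by Proposition~\ref{prop:no-res}.)
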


We introduce the topology %  another characterization of resonances and their multiplicity 
by using the transfer matrices. Let $\mathfrak{T}$ be the set of $2\times2$ matrices given by
\ben
\mathfrak{T}=\left\{e^{i\theta}T;\,T\in\ope{SL}(2),\ T_{11}=\overline{T}_{22},\ T_{21}=\overline{T}_{12},\ \theta\in[0,\pi)\right\}.
\een
Here we denote the $(j,k)$-entry of a matrix $T$ by $T_{jk}$. 
Then $\mathfrak{T}$ and $\mathfrak{C}$ are isomorphic to $\mathfrak{G}:=\left(\{(z,w)\in\C^2;|z|^2-|w|^2=1\}\times(\R/2\pi\Z)\right)/\sim$, where we define an equivalence relation $(p,q,\theta)\sim(-p,-q,\theta-\pi)$. The isomorphisms from $\mathfrak{G}$ to $\mathfrak{T}$ and to $\mathfrak{C}$ are given respectively by 
\ben
(p,q,\theta)\mapsto T_{p,q,\theta}:=
e^{i\theta}
\begin{pmatrix}
p&\bar{q}\\q&\bar{p}
\end{pmatrix}
\qtext{and}
(p,q,\theta)\mapsto C_{p,q,\theta}:=
\bar{p}^{-1}
\begin{pmatrix}
e^{i\theta}&\bar{q}\\-q&e^{-i\theta}
\end{pmatrix}.
\een
These are well-defined and one-to-one after divided by $\sim$. We consider the topology induced by %$\left(\{(z,w)\in\C^2;|z|^2-|w|^2=1\}\times(\R/2\pi\Z)\right)/\sim$. 
$\mathfrak{G}$. On $\mathfrak{T}$, we consider the usual multiplication of matrices. Then the group product $*$ on $\mathfrak{C}$ is induced by that on $\mathfrak{T}$ through the isomorphism $\mathcal{M}:\mathfrak{T}\ni T_{p,q,\theta}\mapsto C_{p,q,\theta}\in\mathfrak{C}$.

\subsection{Other characterization of non-zero resonances}
We discuss on the other characterization of non-zero resonances.

\begin{lemma}\label{lem:Chara-T}
Let $\lambda_0\in\C\setminus\{0\}$, $(\kappa_0\in\C/2\pi\Z)$. Under Assumption, $\lambda_0$ is a resonance of $U$ if and only if 
\be\label{eq:Chara-T}
%\left\langle
\begin{pmatrix}1&0\end{pmatrix}
\mathbb{T}(\lambda_0)
\begin{pmatrix}1\\0\end{pmatrix}
%,\begin{pmatrix}1\\0\end{pmatrix}%\right\rangle
=0,
\ee
where we put
\ben
\mathbb{T}(\lambda)=T_\lambda(x^+%\max\ope{chs}(C-I_2)
)T_\lambda(%\max\ope{chs}(C-I_2)
x^+-1)\cdots T_\lambda(x^-),%\min\ope{chs}(C-I_2)),
\quad[x^-,x^+]:=\ope{chs}(C-I_2),
\een
and $T_\lambda(x)$ for each $x\in\Z$ is the analytic continuation of 
\be\label{eq:T-kappa}
\begin{aligned}
&T_\lambda(x)=\mathcal{M}^{-1}(\lambda^{-1}C(x))
%=\frac1{c_{11}(x)}\begin{pmatrix}e^{-i\kappa}&-c_{12}(x)\\c_{21}(x)&e^{i\kappa}\det C(x)\end{pmatrix}
=
e^{i\theta(x)}\begin{pmatrix}\lambda p(x)&\overline{q(x)}\\q(x)&\lambda^{-1}\overline{p(x)}\end{pmatrix},
%\mathcal{M}^{-1}(e^{i\kappa}C(\max\ope{chs}(C-I_2)))\mathcal{M}^{-1}(e^{i\kappa}C(\max\ope{chs}(C-I_2)-1))\cdots\mathcal{M}^{-1}(e^{i\kappa}C(\min\ope{chs}(C-I_2))).
\\
&p(x)=\frac{e^{-i\phi(x)}}{|c_{11}(x)|},\quad
q(x)=\frac{e^{-i\phi(x)}c_{21}(x)}{|c_{11}(x)|},
\\
&\theta(x)=\frac12\ope{arg}\left(\frac{c_{22}(x)}{c_{11}(x)}\right),\quad
\phi(x)=\frac{\ope{arg}(c_{11}(x)c_{22}(x))}2,
\end{aligned}
\ee
defined for $\left|\lambda\right|=1$.
Moreover, $(1,0)\mathbb{T}(\lambda)({}^t(1,0))$ is rational function with respect to $\lambda$, and
\ben
\sigma(\lambda):=\lambda^{|\ope{chs}(C-I_2)|_{\Z}}\begin{pmatrix}1&0\end{pmatrix}
\mathbb{T}(\lambda)
\begin{pmatrix}1\\0\end{pmatrix}
\een
%the $(1,1)$-entry of $\lambda^{|\ope{chs}(C-I_2)|_{\Z}}\mathbb{T}(\lambda)$ 
is a polynomial of degree $(2|\ope{chs}(C-I_2)|_{\Z})$. The multiplicity as a zero of this polynomial coincides with that as a resonance.
\end{lemma}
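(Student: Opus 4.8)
The plan is to translate the eigenequation $(U-\lambda)\psi=0$ into a first-order transfer recurrence, read off resonances from an outgoing boundary condition, and finally match the order of vanishing of $\sigma$ with the Jordan-chain length supplied by Proposition~\ref{thm:DefRes}. \emph{Step 1 (transfer recurrence).} Setting $\phi=C\psi$, the eigenequation $SC\psi=\lambda\psi$ reads $(S\phi)(x)=\lambda C(x)^{-1}\phi(x)$, which splits into
\begin{align*}
\phi^L(x+1)&=\lambda\big(\overline{c_{11}(x)}\,\phi^L(x)+\overline{c_{21}(x)}\,\phi^R(x)\big),\\
\phi^R(x-1)&=\lambda\big(\overline{c_{12}(x)}\,\phi^L(x)+\overline{c_{22}(x)}\,\phi^R(x)\big).
\end{align*}
Solving for the pair $\eta(x):={}^t\!\big(\phi^L(x+1),\phi^R(x)\big)$ yields $\eta(x)=M_\lambda(x)\,\eta(x-1)$ with $M_\lambda(x)$ depending only on $C(x)$. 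Using $C(x)\in\ope{U}(2)$ (so that $\overline{\det C(x)}/\overline{c_{22}(x)}=1/c_{11}(x)$ and $|c_{11}(x)|^2+|c_{21}(x)|^2=1$), a direct computation identifies $M_\lambda(x)$ with $T_\lambda(x)$ of \eqref{eq:T-kappa} and gives $\det T_\lambda(x)=e^{2i\theta(x)}\neq0$; invertibility of each factor needs $c_{11}(x)\neq0$, i.e.\ the Assumption. Outside $\ope{chs}(C-I_2)=:[x^-,x^+]$ one has $C(x)=I_2$, hence $T_\lambda(x)=\operatorname{diag}(\lambda,\lambda^{-1})$, and the free solutions $\Psi_L,\Psi_R$ of \eqref{eq:Outgoing-Cond} correspond to the eigenlines $\C\,{}^t(1,0)$ and $\C\,{}^t(0,1)$.

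\emph{Step 2 (characterization).} By Proposition~\ref{thm:DefRes}~(\ref{enu:Res-State}), $\lambda_0$ is a resonance iff \eqref{eq:Eigen-Equation} has a nontrivial outgoing solution. Being outgoing on the left forces $\psi\propto\Psi_L$ for $x<x^-$, hence $\eta(x^--1)$ is a nonzero multiple of ${}^t(1,0)$; being outgoing on the right forces $\eta(x^+)\in\C\,{}^t(0,1)$. Iterating the recurrence gives $\eta(x^+)=\mathbb{T}(\lambda)\,\eta(x^--1)$, and its top entry reads $0=\big(1\;0\big)\mathbb{T}(\lambda)\,{}^t(1,0)\cdot(\text{nonzero scalar})$. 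Thus a nontrivial outgoing solution exists iff $\big(1\;0\big)\mathbb{T}(\lambda_0)\,{}^t(1,0)=0$, which is \eqref{eq:Chara-T}.

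\emph{Step 3 (rationality and degree).} Each entry of $T_\lambda(x)$ is a monomial in $\lambda$ of degree in $\{-1,0,1\}$, so every entry of $\mathbb{T}(\lambda)$ is a Laurent polynomial and $\big(1\;0\big)\mathbb{T}(\lambda)\,{}^t(1,0)$ is rational. Tracking $\lambda$-powers through the $k=|\ope{chs}(C-I_2)|_\Z=x^+-x^-+1$ factors, the highest power $+k$ comes from the all-$(1,1)$ path and the lowest power $2-k$ from switching to index $2$ once (using $k-2$ diagonal $(2,2)$ entries) and back. Hence $\sigma(\lambda)=\lambda^{k}\big(1\;0\big)\mathbb{T}(\lambda)\,{}^t(1,0)$ has only powers between $\lambda^2$ and $\lambda^{2k}$, so it is a genuine polynomial; its leading coefficient $\prod_{x=x^-}^{x^+}e^{i\theta(x)}p(x)$ is nonzero exactly because $c_{11}(x)\neq0$, so $\deg\sigma=2k$.

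\emph{Step 4 (multiplicities, the main obstacle).} The delicate point is to match $\operatorname{ord}_{\lambda_0}\sigma$ with the algebraic multiplicity $m(\lambda_0)$. I would define, for $\lambda$ near $\lambda_0\neq0$, the solution $\psi(\lambda,\cdot)$ of $(U-\lambda)\psi(\lambda,\cdot)=0$ normalized to equal $\Psi_L(\lambda,\cdot)$ for $x<x^-$ and propagated by the $T_\lambda(x)$; it is analytic in $\lambda$ and outgoing on the left for every $\lambda$. For $x>x^+$ it reads $\psi(\lambda,x)=a(\lambda)\Psi_R(\lambda,x)+b(\lambda)\Psi_L(\lambda,x)$, and Step~2 shows $b(\lambda)=\lambda^{\,x^--x^+-1}\big(1\;0\big)\mathbb{T}(\lambda)\,{}^t(1,0)$ vanishes to the same order $d$ as $\sigma$ at $\lambda_0$. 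Differentiating $(U-\lambda)\psi(\lambda,\cdot)=0$ in $\lambda$ gives $(U-\lambda_0)\partial_\lambda^{\,j}\psi(\lambda_0,\cdot)=j\,\partial_\lambda^{\,j-1}\psi(\lambda_0,\cdot)$, so the derivatives form a Jordan chain; each $\partial_\lambda^{\,j}\psi(\lambda_0,\cdot)$ is automatically outgoing on the left, and is outgoing on the right iff $b,b',\dots,b^{(j)}$ all vanish at $\lambda_0$, i.e.\ iff $d\ge j+1$. Hence $\psi(\lambda_0,\cdot),\dots,\partial_\lambda^{\,d-1}\psi(\lambda_0,\cdot)$ are precisely the generalized resonant states and form a chain of length $d$ that cannot be extended, since the only left-outgoing homogeneous solutions are multiples of $\pphi_{\lambda_0}$ (which carry no $\Psi_L$-part and cannot cancel the nonzero $b^{(d)}(\lambda_0)$ term). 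As the geometric multiplicity is one by Proposition~\ref{thm:DefRes}~(\ref{enu:Jor-Chain}), this single block has size $d$, giving $m(\lambda_0)=d=\operatorname{ord}_{\lambda_0}\sigma$. The hard part will be exactly this last bookkeeping: controlling the right-hand asymptotics of the $\lambda$-derivatives and ruling out a longer chain.
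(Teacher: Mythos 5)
Your proposal is correct and follows essentially the same route as the paper: the transfer recurrence for the pair $(\psi^L(x-1),\psi^R(x))$ (your $\eta$), the outgoing condition read off as the vanishing of $\begin{pmatrix}1&0\end{pmatrix}\mathbb{T}(\lambda)\begin{pmatrix}1\\0\end{pmatrix}$, and the $\lambda$-derivatives of the left-normalized analytic solution serving as the Jordan chain whose maximal length equals $\operatorname{ord}_{\lambda_0}\sigma$. The only difference is presentational: the paper carries out your Step~4 bookkeeping via an explicit inhomogeneous transfer-matrix identity (Lemma~\ref{lem:inhomo}) rather than by tracking the non-cancellable coefficient $b^{(d)}(\lambda_0)$, but both arguments ultimately rest on the uniqueness of the Jordan chain from Proposition~\ref{thm:DefRes}~(\ref{enu:Jor-Chain}).
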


We call each matrix $T_\lambda (x)$ or their product $\mathbb{T}(\lambda)$ transfer matrix.

\begin{remark}
The scattering matrix $S(\lambda)$ is expressed as
\ben
S(\lambda)=\begin{pmatrix}\lambda^{-(x^--1)}&0\\0&\lambda^{x^++1}\end{pmatrix}\mc{M}(\mathbb{T}(\lambda))\begin{pmatrix}\lambda^{x^+}&0\\0&\lambda^{-x^-}\end{pmatrix}
=\frac1{t_{11}(\lambda)}\begin{pmatrix}1&-t_{12}(\lambda)\\t_{21}(\lambda)&\varDelta\end{pmatrix},
\een
where $t_{jk}(\lambda)$ stands for the $(j,k)$-entry of  $\mathbb{T}(\lambda)$, $[x^-,x^+]=\ope{chs}(C-I_2)$, and $\varDelta=\det\mathbb{T}(\lambda)$ is independent of $\lambda$ (since each factor $\det T_\lambda(x)$   is independent of $\kappa$ by definition \eqref{eq:T-kappa}). 
Recall that the scattering matrix is a unitary $2\times2$-matrix which can be introduced as the linear relationship 
\ben
\left(\til{\Psi}_L^-(\lambda),\til{\Psi}_R^+(\lambda)\right)S(\lambda)=\left(\til{\Psi}_L^+(\lambda),\til{\Psi}_R^-(\lambda)\right),
\een
between the bases %$(\til{\Psi}_L^-(\lambda),\til{\Psi}_R^-(\lambda))$ and $(\til{\Psi}_L^+(\lambda),\til{\Psi}_R^+(\lambda))$ 
consist of Jost solutions to the equation $(U-\lambda)g=0$ characterized by
\ben
\til{\Psi}_L^\pm(\lambda,x)=\Psi_L(\lambda,x),\quad
\til{\Psi}_R^\pm(\lambda,x)=\Psi_R(\lambda,x),\qtext{for} \pm x\gg1.
\een

\end{remark}

\begin{corollary}\label{cor:Chara-S}
Suppose that $\varDelta\neq-1$ (a sufficient condition is given for example by $c_{11}(x)=c_{22}(x)$  $\forall x\in\Z$). 
Then non-zero resonances including their multiplicity coincide with poles of the meromorphic continuation of $\ope{tr}\, S(\lambda)$. In particular, we have
\ben
m(\lambda)=\frac1{2\pi i}\oint_{\lambda}\ope{tr}\left((\p_\mu S(\mu))S(\mu)^{-1}\right)d\mu.
\een
\end{corollary}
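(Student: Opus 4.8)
The plan is to reduce everything to the explicit rational form of the scattering matrix recorded in the Remark following Lemma~\ref{lem:Chara-T}. From that formula one reads off $\ope{tr}S(\lambda)=(1+\varDelta)/t_{11}(\lambda)$ directly, and, using $\varDelta=t_{11}t_{22}-t_{12}t_{21}$, one finds $\det S(\lambda)=(\varDelta+t_{12}t_{21})/t_{11}^2=t_{22}(\lambda)/t_{11}(\lambda)$. Since the entries $t_{jk}$ of $\mathbb{T}(\lambda)$ are Laurent polynomials in $\lambda$, both $\ope{tr}S$ and $\det S$ are meromorphic on $\C\setminus\{0\}$ with poles located only at the zeros of $t_{11}$, and by Lemma~\ref{lem:Chara-T} the non-zero resonances are exactly the zeros of $t_{11}$ (equivalently of $\sigma$), the order as a zero equalling the multiplicity $m(\lambda)$. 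For the first assertion I would then argue that, because $\varDelta$ is constant with $\varDelta\neq-1$, the numerator $1+\varDelta$ of $\ope{tr}S$ is a non-zero constant; hence the poles of $\ope{tr}S$ coincide with the zeros of $t_{11}$ and each pole has order equal to the order of the corresponding zero. Combined with Lemma~\ref{lem:Chara-T} this identifies the poles of the meromorphic continuation of $\ope{tr}S$, with multiplicity, with the non-zero resonances.

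For the multiplicity formula I would invoke Jacobi's formula, $\ope{tr}\bigl((\p_\mu S)S^{-1}\bigr)=\p_\mu\log\det S=(\det S)'/\det S$, so that the contour integral is governed by the argument principle applied to $\det S=t_{22}/t_{11}$. The substantive point is to pin down the order of $\det S$ at a resonance $\lambda_0$: I would show that $t_{22}(\lambda_0)\neq0$, so that $\det S$ has a pole at $\lambda_0$ of order exactly $\ope{ord}_{\lambda_0}t_{11}=m(\lambda_0)$, and not a smaller order caused by a simultaneous vanishing of $t_{22}$. This non-vanishing I obtain from a reflection symmetry of the transfer matrix. For $\left|\lambda\right|=1$ one has $\mathbb{T}(\lambda)\in\mathfrak{T}$, whence $t_{22}=\varDelta\,\overline{t_{11}}$ on the unit circle; continuing the Schwarz reflection analytically gives the functional equation $t_{22}(\lambda)=\varDelta\,\overline{t_{11}(1/\bar\lambda)}$ on all of $\C\setminus\{0\}$ by the identity theorem. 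Consequently $t_{22}(\lambda_0)=0$ if and only if $1/\bar\lambda_0$ is a resonance; but every resonance lies in the open unit disk (cf. $\Lambda_0<1$ in Corollary~\ref{cor:EstiSurvProb}), so $\left|1/\bar\lambda_0\right|>1$ cannot be one, and therefore $t_{22}(\lambda_0)\neq0$.

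With this in hand the argument principle returns the order $m(\lambda_0)$ of the pole of $\det S$ at $\lambda_0$, which is the asserted formula. The hard part is exactly this non-vanishing of $t_{22}$ at resonances: without the reflection symmetry one cannot rule out a cancellation in $t_{22}/t_{11}$ that would make the residue count undershoot $m(\lambda_0)$, and it is the combination of the functional equation $t_{22}(\lambda)=\varDelta\,\overline{t_{11}(1/\bar\lambda)}$ with the localization of resonances strictly inside the unit circle that makes the count exact. I would also take care with the orientation in the argument principle, since the resonance is a pole (not a zero) of $\det S$; this is what fixes the normalization of the stated formula.
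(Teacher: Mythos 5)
Your proposal is essentially the argument the paper intends: the corollary is stated without a separate proof, as an immediate consequence of the explicit formula $S(\lambda)=t_{11}(\lambda)^{-1}\begin{pmatrix}1&-t_{12}\\ t_{21}&\varDelta\end{pmatrix}$ in the preceding remark together with Lemma~\ref{lem:Chara-T}, which identifies the zeros of $t_{11}$ away from the origin (equivalently of $\sigma$) with the non-zero resonances, multiplicities included. Your computation $\ope{tr}S=(1+\varDelta)/t_{11}$ with $1+\varDelta$ a non-zero constant gives the first assertion exactly as intended. Where you go beyond the paper is in the justification of the multiplicity formula: the identity $\det S=t_{22}/t_{11}$, the functional equation $t_{22}(\lambda)=\varDelta\,\overline{t_{11}(1/\bar\lambda)}$ obtained by reflection from the $\mathfrak{T}$-structure of $\mathbb{T}(\lambda)$ on the unit circle, and the localization of resonances strictly inside the unit disk together rule out a simultaneous vanishing of $t_{22}$ at a resonance. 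The paper leaves this cancellation issue entirely implicit, and your treatment of it is correct and worth making explicit. (For the localization you could argue directly that $\left|t_{11}(\lambda)\right|\ge1$ on $\left|\lambda\right|=1$ because $\mathbb{T}(\lambda)\in\mathfrak{T}$ there, rather than citing Corollary~\ref{cor:EstiSurvProb}, whose statement asserts $\Lambda_0<1$ but is itself derived from the expansion.)

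One point you should not leave to ``taking care with the orientation.'' With the counterclockwise convention used everywhere else (cf.\ Definition~\ref{def:Res}), Jacobi's formula and the argument principle give
\ben
\frac1{2\pi i}\oint_{\lambda_0}\p_\mu\log\det S(\mu)\,d\mu
=\bigl(\text{order of zero of }t_{22}\text{ at }\lambda_0\bigr)-\bigl(\text{order of zero of }t_{11}\text{ at }\lambda_0\bigr)=-m(\lambda_0),
\een
precisely because the resonance is a pole, not a zero, of $\det S$. So the displayed identity holds only if the contour is traversed clockwise, or equivalently with an overall minus sign (or with $\ope{tr}\bigl((\p_\mu S(\mu)^{-1})S(\mu)\bigr)$ in the integrand). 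Your write-up acknowledges the issue but does not resolve it; state the orientation explicitly, since for the standard counterclockwise contour the sign as written is wrong.
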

\begin{remark}
Under the same assumption with the above corollary, incoming resonances  including their multiplicity coincide with poles of the meromorphic continuation of $\ope{tr} (S(\lambda)^{-1})$.
\end{remark}

In order to use the transfer matrices, we introduce a unitary operator $Q:\cH\to\cH$ (naturally extended to $\cH_{\ope{loc}}\to\cH_{\ope{loc}}$) by
\ben
Q\psi(x)=\begin{pmatrix}1&0\\0&0\end{pmatrix}\psi(x-1)+\begin{pmatrix}0&0\\0&1\end{pmatrix}\psi(x)
=\begin{pmatrix}\psi^L(x-1)\\\psi^R(x)\end{pmatrix}.
\een
It is clear  (and known) that for any $\psi\in\cH_{\ope{loc}}$ and $\lambda\in\C\setminus\{0\}$, the equality $(U-\lambda)\psi=0$ is equivalent to that 
\be\label{eq:Transfer-equiv}
Q\psi(x+1)
%:=\begin{pmatrix}\psi^L(x)\\\psi^R(x+1)\end{pmatrix}=T_\lambda(x)\begin{pmatrix}\psi^L(x-1)\\\psi^R(x)\end{pmatrix}
=T_\lambda(x)Q\psi(x),
\ee
holds for every $x\in\Z$. 
In particular, it follows that
\be\label{eq:T-lambda-psi}
Q\psi(x^++1)=\mathbb{T}(\lambda)Q\psi(x^-),
\quad [x^-,x^+]:=\ope{chs}(C-I_2).
\ee

\begin{proof}[Proof of Lemma~\ref{lem:Chara-T}]
By definition, $\psi$ is outgoing if and only if
\be\label{eq:perp}
%\begin{pmatrix}\psi^L(x^--1)\\\psi^R(x^-)\end{pmatrix}
\begin{pmatrix}0&1\end{pmatrix}Q\psi(x^-)%\perp \begin{pmatrix}0\\1\end{pmatrix}
\qtext{and}
%\begin{pmatrix}\psi^L(x^+)\\\psi^R(x^++1)\end{pmatrix}
\begin{pmatrix}1&0\end{pmatrix}Q\psi(x^++1).%\perp \begin{pmatrix}1\\0\end{pmatrix}.
\ee
Then the characterization \eqref{eq:Chara-T} for each non-zero resonance follows from \eqref{eq:T-lambda-psi} and \eqref{eq:perp}. 

Let $\pphi_\lambda\in\cH_{\ope{loc}}$ be defined inductively by
\ben
%\begin{pmatrix}\pphi^L(x-1)\\\psi^R(x)\end{pmatrix}
Q\pphi_\lambda(x)
:=T_\lambda(x)^{-1}Q\pphi_\lambda(x+1)
%\begin{pmatrix}\pphi^L(x)\\\psi^R(x+1)\end{pmatrix}
\een
for $x<x^-$ and
\ben
%\begin{pmatrix}\pphi^L(x)\\\psi^R(x+1)\end{pmatrix}
Q\pphi_\lambda(x)
:=T_\lambda(x)Q\pphi_\lambda(x-1)
%\begin{pmatrix}\pphi^L(x-1)\\\psi^R(x)\end{pmatrix}
\een
for $x>x^-$ with
\ben
%\begin{pmatrix}\psi^L(x^--1)\\\psi^R(x^-)\end{pmatrix}
Q\pphi_\lambda(x^-):=\begin{pmatrix}1\\0\end{pmatrix}.
\een
This is well defined as a solution to $(U-\lambda)\pphi=0$ for any $\lambda\in\C\setminus\{0\}$, and is outgoing if and only if \eqref{eq:Chara-T} holds. Since $T_\lambda(x)$ is analytic with respect to $\lambda$ away from $\lambda=0$, we have for $k\ge1$,
\ben
\p_\lambda^k(U-\lambda)\pphi_\lambda=(U-\lambda)\p_\lambda^k\pphi_\lambda-\p_\lambda^{k-1}\pphi_\lambda.
\een
By definition, $\p_\lambda^k\pphi$ is outgoing if and only if $\sigma^{(k)}(\lambda)$ vanishes. %$k$-th derivative with respect to $\lambda$ of the $(1,1)$-entry of $\mathbb{T}(\lambda)$ vanishes. 
Moreover, a Jordan chain $(\pphi_1,\ldots,\pphi_{k+1})$ is constructed by
\be\label{eq:Const-Jor-Ch}
\pphi_{l+1}:=\frac{1}{l!}\p_\lambda^{l}\pphi_\lambda\big|_{\lambda=\lambda_0}\quad(l=0,1,\ldots,k).
\ee
This with Proposition~\ref{thm:DefRes} (\ref{enu:Jor-Chain}) implies that the multiplicity of the zero of the polynomial $\sigma(\lambda)$ at $\lambda=\lambda_0\in\C\setminus\{0\}$ is bounded by $m(\lambda_0)$. 

It suffices to show for each non-zero resonance $\lambda_0\in\ope{Res}(U)\setminus\{0\}$ that $\sigma^{(l)}(\lambda_0)=0$ for $l=0,1,\ldots,m(\lambda_0)-1$ (note that $l=0$ has been already shown). 
We prove it by a mathematical induction. Assume for an $m'\in\{0,1,\ldots,m(\lambda_0)-1\}$ that $\sigma^{(l)}(\lambda_0)=0$ holds for $l=0,1,\ldots,m'-1$. We prove $\sigma^{(m')}(\lambda_0)=0$ under this induction hypothesis. 

From the hypothesis, there exists the Jordan chain $(\pphi_1,\ldots,\pphi_{m'})$ defined by \eqref{eq:Const-Jor-Ch} up to $m'$. 
%Let us assume that $m(\lambda_0)$ is greater than the multiplicity $m'$ of zero at $\lambda_0\in\C\setminus\{0\}$ of the $\sigma(\lambda)$. 
The definition is rewritten as 
\ben
Q\pphi_l(x)=\frac1{(l-1)!}\left(\p_\lambda^{l-1}\mc{T}_\lambda(x-1)\right)\big|_{\lambda=\lambda_0}e_1,
\een
where we put
\ben
\mc{T}_\lambda(x)=\left\{
\begin{aligned}
&T_\lambda(x)T_\lambda(x-1)\cdots T_\lambda(x^-)&&x>x^-,\\
&T_\lambda(x+1)^{-1}T_\lambda(x+2)^{-1}\cdots T_\lambda(x^-)&&x<x^-,
\end{aligned}\right.
\quad
e_1=\begin{pmatrix}1\\0\end{pmatrix}.
\een
From the argument above, $\sigma^{(m')}(\lambda_0)=0$ is equivalent to say that $\til{\psi}\in\cH_{\ope{loc}}$ defined by 
\be\label{eq:Aim-Cond}
Q\til{\psi}(x)=\frac1{m'!}\left(\p_\lambda^{m'}\mc{T}_\lambda(x-1)\right)\big|_{\lambda=\lambda_0}e_1
\ee
is outgoing.

According to Proposition~\ref{thm:DefRes} (\ref{enu:Jor-Chain}), there exists a generalized resonant state $\psi\in\cH_{\ope{loc}}$ such that 
\ben
(U-\lambda)^k\psi\neq0\quad(k=0,\ldots,m'),\quad(U-\lambda)^{m'+1}\psi=0.
\een
Furthermore, we can take such $\psi$ that 
%By the uniqueness of the Jordan chain (Proposition~\ref{thm:DefRes} (\ref{enu:Jor-Chain})), we can assume that 
\ben
(U-\lambda)\psi=\pphi_{m'},\quad
Q\psi(x^-)=0.
\een
This is due to the uniqueness of the Jordan chain (Proposition~\ref{thm:DefRes} (\ref{enu:Jor-Chain})) and to the fact that $Q\psi(x^-)$ is parallel to $Q\pphi_1(x^-)$.
Then our aim \eqref{eq:Aim-Cond} is reduced to proving $\psi=\til{\psi}$, namely the identity
\be\label{eq:Aim-Cond2}
Q\psi(x)=\frac{1}{m'!}\left(\p_\lambda^{m'}\mc{T}_\lambda(x-1)\right)\big|_{\lambda=\lambda_0}e_1.
%\p_\lambda^{m'}\mc{T}_\lambda(x-1)Q\pphi_1(x^-).
\ee

In an almost same way as \eqref{eq:Transfer-equiv}, we see for any $u,f\in\cH_{\ope{loc}}$ that $(U-\lambda)u=f$ is equivalent to 
\be\label{eq:ind-inhomo}
Qu(x+1)=T_\lambda(x)Qu(x)-A_\lambda(x)\left[\begin{pmatrix}1&0\\0&0\end{pmatrix}Qf(x)+\begin{pmatrix}0&0\\0&1\end{pmatrix}Qf(x+1)\right],\quad\forall x\in\Z,
\ee
with
\ben
A_\lambda(x)=\frac1{c_{11}(x)\lambda}\begin{pmatrix}c_{11}(x)&0\\c_{21}(x)&-\lambda\end{pmatrix}.
\een 
By Lemma~\ref{lem:inhomo}, it follows from \eqref{eq:ind-inhomo} and the definition of $\pphi_{m'}$ that 
\be\label{eq:conc-inhomo}
Qu(x)=\mc{T}_\lambda(x-1)Qu(x^-)+\frac{1}{m'!}\left(\p_\lambda^{m'}\mc{T}_\lambda(x-1)\right)\big|_{\lambda=\lambda_0}e_1.
\ee
Then $Q\psi(x^-)=0$ implies the identity \eqref{eq:Aim-Cond2}.
\end{proof}

\begin{lemma}\label{lem:inhomo}
If $u\in\cH_{\ope{loc}}$ satisfies $(U-\lambda)u=\pphi_{m'}$, then the identity \eqref{eq:conc-inhomo} is true.
\end{lemma}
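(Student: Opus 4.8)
The plan is to read \eqref{eq:conc-inhomo} as the solution formula for the inhomogeneous transfer recurrence \eqref{eq:ind-inhomo}: the first summand is the homogeneous propagation of the data $Qu(x^-)$, while the second is one fixed particular solution forced by the source $\pphi_{m'}$. Concretely, I would set
\[
\til\psi:=\frac1{m'!}\,\p_\lambda^{m'}\pphi_\lambda\big|_{\lambda=\lambda_0},
\]
so that, since $Q\pphi_\lambda(x)=\mc{T}_\lambda(x-1)e_1$, its image $Q\til\psi(x)=\frac1{m'!}(\p_\lambda^{m'}\mc{T}_\lambda(x-1))|_{\lambda_0}e_1$ is exactly the second term in \eqref{eq:conc-inhomo}. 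The lemma then reduces to two facts about $\til\psi$ together with the uniqueness of solutions to the recurrence once $Qu(x^-)$ is prescribed.

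First I would check that $\til\psi$ is a particular solution, i.e. $(U-\lambda_0)\til\psi=\pphi_{m'}$. The family $\pphi_\lambda$ is analytic in $\lambda\neq0$ and solves $(U-\lambda)\pphi_\lambda=0$ identically; since $U$ acts locally and is independent of $\lambda$, differentiation commutes with $U$, and applying the Leibniz rule $m'$ times gives
\[
0=\p_\lambda^{m'}\big[(U-\lambda)\pphi_\lambda\big]=(U-\lambda)\p_\lambda^{m'}\pphi_\lambda-m'\,\p_\lambda^{m'-1}\pphi_\lambda .
\]
Dividing by $m'!$ and setting $\lambda=\lambda_0$ yields $(U-\lambda_0)\til\psi=\frac1{(m'-1)!}\p_\lambda^{m'-1}\pphi_\lambda|_{\lambda_0}$, which is $\pphi_{m'}$ by the construction \eqref{eq:Const-Jor-Ch} of the Jordan chain. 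I would also record the boundary value: since the normalization $Q\pphi_\lambda(x^-)=\mc{T}_\lambda(x^--1)e_1=e_1$ is independent of $\lambda$, one has $Q\til\psi(x^-)=\frac1{m'!}\p_\lambda^{m'}\big(\mc{T}_\lambda(x^--1)e_1\big)|_{\lambda_0}=0$ for $m'\geq1$.

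Next, for an arbitrary $u$ with $(U-\lambda_0)u=\pphi_{m'}$, I would put $w:=u-\til\psi$, so that $(U-\lambda_0)w=0$. By the equivalence \eqref{eq:Transfer-equiv} between the eigenequation and the transfer recurrence, iterated from $x^-$ in both directions (each $T_\lambda(x)$ is invertible for $\lambda\neq0$), every homogeneous solution obeys $Qw(x)=\mc{T}_{\lambda_0}(x-1)Qw(x^-)$ for all $x\in\Z$. The boundary identity then gives $Qw(x^-)=Qu(x^-)-Q\til\psi(x^-)=Qu(x^-)$, and adding $Q\til\psi(x)$ back reproduces precisely \eqref{eq:conc-inhomo}.

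The Leibniz step and the propagation of homogeneous solutions are routine; the one place that needs care is the boundary bookkeeping $Q\til\psi(x^-)=0$, which is what makes the homogeneous correction carry the prescribed data $Qu(x^-)$ unshifted, and which rests on the $\lambda$-independence of the normalization $Q\pphi_\lambda(x^-)=e_1$. A more computational alternative is to substitute the right-hand side of \eqref{eq:conc-inhomo} directly into \eqref{eq:ind-inhomo} and match terms; there the main obstacle is to identify the derivatives $\p_\lambda^j T_\lambda(x)$ produced by the Leibniz expansion of $\p_\lambda^{m'}\big(T_\lambda(x)\mc{T}_\lambda(x-1)\big)$ with the source coefficient $A_\lambda(x)$ applied to $Q\pphi_{m'}$, which is considerably less transparent than the reduction above.
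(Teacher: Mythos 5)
Your argument is correct, but it proves the lemma by a genuinely different route from the paper. The paper proceeds by brute force: it first verifies the two matrix identities \eqref{eq:Lem1}--\eqref{eq:Lem2} relating the source coefficient $A_\lambda(x)$ to the $\lambda$-derivatives of $T_\lambda(x)$, assembles them into \eqref{eq:Lem3}, and then proves \eqref{eq:conc-inhomo} by induction on $x$, recognizing at each step the Leibniz expansion of $\p_\lambda^{m'}\bigl(T_\lambda(x)\mc{T}_\lambda(x-1)\bigr)$ --- exactly the ``more computational alternative'' you flag as less transparent at the end of your proposal. Your superposition argument instead exhibits the particular solution $\til{\psi}=\frac{1}{m'!}\p_\lambda^{m'}\pphi_\lambda|_{\lambda=\lambda_0}$ directly by differentiating the analytic family $\pphi_\lambda$ (your Leibniz identity $(U-\lambda)\p_\lambda^{m'}\pphi_\lambda=m'\,\p_\lambda^{m'-1}\pphi_\lambda$ carries the correct factor $m'$, which the paper's displayed version of this identity omits), records $Q\til{\psi}(x^-)=0$ from the $\lambda$-independence of the normalization $Q\pphi_\lambda(x^-)=e_1$, and reduces the rest to the homogeneous propagation $Qw(x)=\mc{T}_{\lambda_0}(x-1)Qw(x^-)$, which is just \eqref{eq:Transfer-equiv} iterated. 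What your route buys is conceptual clarity and economy: the identities \eqref{eq:Lem1}--\eqref{eq:Lem2} and the induction on $x$ disappear, and it becomes evident \emph{why} $\p_\lambda^{m'}\mc{T}_\lambda$ must appear. What the paper's route buys is independence from the identification $Q\pphi_\lambda(x)=\mc{T}_\lambda(x-1)e_1$ and from commuting $\p_\lambda$ past $U$ and $Q$ (both of which are, however, immediate from the transfer-matrix construction, so nothing essential is lost). The only bookkeeping point worth making explicit in your write-up is that $m'\ge1$ throughout (for $m'=0$ the boundary computation $Q\til{\psi}(x^-)=0$ fails and the statement degenerates to the homogeneous case), which is automatic in the context where the lemma is invoked.
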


\begin{proof}
By a straightforward computation, we see
\begin{align}
&\label{eq:Lem1}
-A_\lambda(x)\left[\begin{pmatrix}1&0\\0&0\end{pmatrix}+\begin{pmatrix}0&0\\0&1\end{pmatrix}T_\lambda(x)\right]=\p_\lambda T_\lambda(x),\\
&\label{eq:Lem2}
-A_\lambda(x)\begin{pmatrix}0&0\\0&1\end{pmatrix}\p_\lambda^l T_\lambda(x)
=\frac1{l+1}\p_\lambda^{l+1} T_\lambda(x)\quad(l\ge1).
\end{align}
Then it follows that 
\be\label{eq:Lem3}
\begin{aligned}
&-A_\lambda(x)\left[\begin{pmatrix}1&0\\0&0\end{pmatrix}Q\pphi_{m'}(x)+\begin{pmatrix}0&0\\0&1\end{pmatrix}Q\pphi_{m'}(x+1)\right]
\\&
\qquad=-\frac1{(m'-1)!}A_\lambda(x)\left[\begin{pmatrix}1&0\\0&0\end{pmatrix}\left(\p_\lambda^{m'-1}\mc{T}_\lambda(x-1)\right)+\begin{pmatrix}0&0\\0&1\end{pmatrix}\left(\p_\lambda^{m'-1}\mc{T}_\lambda(x)\right)\right]e_1\big|_{\lambda=\lambda_0}\\&
%\qquad=\frac1{m'!}\left[m'(\p_\lambda T_\lambda(x))\left(\p_\lambda^{m'-1} \mc{T}_\lambda(x)\right)+\sum_{l=1}^{m'-1}\begin{pmatrix}m'\\l+1\end{pmatrix}(\p_\lambda^{l+1}T_\lambda(x))\left(\p_\lambda^{m'-1-l}\mc{T}_\lambda(x-1)\right)\right]e_1\big|_{\lambda=\lambda_0}\\&
\qquad=\frac1{m'!}\sum_{l=0}^{m'-1}\begin{pmatrix}m'\\l+1\end{pmatrix}(\p_\lambda^{l+1}T_\lambda(x))\left(\p_\lambda^{m'-1-l}\mc{T}_\lambda(x-1)\right)e_1\big|_{\lambda=\lambda_0}.
\end{aligned}
\ee
%\ben\begin{pmatrix}m'-1\\l\end{pmatrix}\frac1{l+1}=\frac1{m'} \begin{pmatrix}m'\\l+1\end{pmatrix}\een

We prove \eqref{eq:conc-inhomo} by using the mathematical induction with respect to $x$. We here only show it for $x\ge x^-$. For $x=x^-$, the identity is just a convention $\mc{T}(x^--1)=I_2$. 
Assume that \eqref{eq:conc-inhomo} holds for an $x\ge x^-$. Then for $x+1$, we have
\ben
\begin{aligned}
Qu(x+1)=
&T_\lambda(x)\mc{T}_\lambda(x-1)Qu(x^-)+\frac1{m'!}T_\lambda(x)\left(\p_\lambda^{m'}\mc{T}_\lambda(x-1)\right)\big|_{\lambda=\lambda_0}e_1\\
&-A_\lambda(x)\left[\begin{pmatrix}1&0\\0&0\end{pmatrix}Q\pphi_{m'}(x)+\begin{pmatrix}0&0\\0&1\end{pmatrix}Q\pphi_{m'}(x+1)\right].
\end{aligned}
\een
The required formula is obtained by substituting \eqref{eq:Lem3} to this.
\end{proof}

\subsection{Proof of the generic simplicity}
The characterization due to the previous subsection shows that the resonances are zeros of a polynomial. 
%To prove Proposition~\ref{thm:GenSimp}, we apply the following 
The following lemma gives the instability of zeros of a holomorphic function under a small perturbation. % \cite[Lemma 2.26]{DyZw}.
\begin{lemma}[Theorem 2.26 of \cite{DyZw}]
\label{lem:Rouche}
Let $\e\mapsto f_\e(z)$ be a family of functions holomorphic in a complex disc $D(0,r_0)=\{z\in\C;\,|z|\le r_0\}$ satisfying 
\ben
f_\e(z)=z^m-\e+\ord(\e^2)+\ord(\e|z|),\quad \left|z\right|\le r_0.
\een
Then for $\e$ sufficiently small, $f_\e(z)$ has exactly $m$ simple zeros in $D(0,r_0)$.
\end{lemma}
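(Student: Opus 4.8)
The plan is to deduce the statement from two applications of Rouch\'e's theorem, comparing $f_\e$ with the model function $g(z):=z^m-\e$. The zeros of $g$ are the $m$ distinct $m$-th roots $\zeta_0,\dots,\zeta_{m-1}$ of $\e$, all simple, each of modulus $|\e|^{1/m}$, with adjacent roots at distance $2|\e|^{1/m}\sin(\pi/m)$. A first comparison on the outer circle $|z|=r_0$ will count the total number of zeros of $f_\e$ in $D(0,r_0)$, and a second comparison on small circles clustered around the $\zeta_j$ will localize these zeros and force each of them to be simple.

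First I would count the zeros in the whole disc. On $|z|=r_0$ the hypothesis reads $f_\e(z)=z^m+\bigl(-\e+\ord(\e^2)+\ord(\e r_0)\bigr)=z^m+\ord(\e)$, so that $|f_\e(z)-z^m|=\ord(\e)$ while $|z^m|=r_0^m$ is a fixed positive constant. Hence for $\e$ small enough $|f_\e(z)-z^m|<|z^m|$ on $|z|=r_0$, and Rouch\'e's theorem shows that $f_\e$ and $z^m$ have the same number of zeros in $D(0,r_0)$ counted with multiplicity, namely $m$.

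Next I would localize these zeros and establish simplicity. Fix a constant $c\in(0,\sin(\pi/m))$ and set $\delta:=c|\e|^{1/m}$; then the discs $D(\zeta_j,\delta)$ are pairwise disjoint and, since $|\zeta_j|+\delta=(1+c)|\e|^{1/m}\to0$, contained in $D(0,r_0)$ for $\e$ small. On each boundary circle $|z-\zeta_j|=\delta$ the factorization $g(z)=\prod_k(z-\zeta_k)$ gives the lower bound $|g(z)|\ge\delta\,\prod_{k\ne j}\bigl((2\sin(\pi/m)-c)|\e|^{1/m}\bigr)\gtrsim|\e|$, whereas the hypothesis gives $|f_\e(z)-g(z)|=\ord(\e^2)+\ord(\e|z|)=\ord(|\e|^{1+1/m})$ because $|z|=\ord(|\e|^{1/m})$ there. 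Thus $|f_\e-g|<|g|$ on each such circle once $\e$ is small, and Rouch\'e's theorem yields exactly one zero of $f_\e$, counted with multiplicity, inside each $D(\zeta_j,\delta)$. Since these $m$ disjoint discs lie inside $D(0,r_0)$ and already account for $m$ zeros, they contain all the zeros of $f_\e$ in $D(0,r_0)$; each disc holds a single zero of multiplicity one, so all $m$ zeros are simple.

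The main obstacle is the competition of scales near the cluster of roots: the $\zeta_j$ sit at radius $|\e|^{1/m}$, which is exactly where $|g|$ collapses to order $|\e|$, so a naive comparison on a fixed-radius circle would fail. The crux is to take the comparison circles at the correct radius $\sim|\e|^{1/m}$ and to extract the sharp bound $|g|\gtrsim|\e|$ from the product of the distances to the roots; once the two governing scales $|\e|$ (the size of $|g|$) and $|\e|^{1+1/m}$ (the size of the perturbation) are correctly identified, both Rouch\'e estimates are routine.
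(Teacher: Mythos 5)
Your argument is correct: the outer Rouch\'e comparison with $z^m$ on $|z|=r_0$ counts $m$ zeros in total, and the inner comparisons with $z^m-\e$ on circles of radius $c|\e|^{1/m}$ about the $m$-th roots of $\e$ (where $|z^m-\e|\gtrsim|\e|$ dominates the $\ord(\e^{1+1/m})$ error) pin down one simple zero in each, which exhausts the count. The paper does not prove this lemma itself --- it imports it from Dyatlov--Zworski --- and your proof is essentially the standard argument given in that reference; the only cosmetic gap is that for $m=1$ the interval $(0,\sin(\pi/m))$ is empty, but in that case simplicity already follows from the outer count alone.
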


\begin{proof}[Proof of Proposition~\refup{thm:GenSimp}]
Let $U\in\mathfrak{Q}_k$. We assume for simplicity that $\ope{chs}(C-I_2)=[1,k]$. 
Let $\{U_{\vartheta,\e};\,\vartheta\in(\R/2\pi\Z),\,0<\e\le\e_0\}$ be a family of quantum walks defined by modifying $U$ in the following way. We replace the coin matrix $C(k)$ by 
\ben
B(\vartheta,\e):=C_{p(\vartheta,\e),q(\vartheta,\e),0}*C(k),\quad
p(\vartheta,\e)=\frac{1+\e e^{i\vartheta}}{\sqrt{1+2\e\cos\vartheta}},\quad
q(\vartheta,\e)=\frac{\e e^{i\vartheta}}{\sqrt{1+2\e\cos\vartheta}}.
\een
Then $B(\vartheta,\e)\to C(k)$ as $\e\to0^+$ uniformly with respect to $\vartheta$. 
Hence, it suffices to show that for any $\e_0>0$, there exist $0<\e_1\le\e_0$ and $\vartheta_1\in\R/2\pi\Z$ such that every non-zero resonance of $U_{\vartheta_1,\e_1}$ is simple.
Put
\ben
\begin{aligned}
\sigma_k(\lambda):=\lambda^k\begin{pmatrix}1&0\end{pmatrix}\mathbb{T}(\lambda)\begin{pmatrix}1\\0\end{pmatrix}
&=\left(\begin{pmatrix}\lambda&0\end{pmatrix}T_\lambda(k)\right)\left(T_\lambda(k-1)\cdots T_\lambda(1)\begin{pmatrix}\lambda^{k-1}\\0\end{pmatrix}\right)
\\&=:e^{i\theta(k)}\begin{pmatrix}\lambda^2p(k)&\lambda\overline{q(k)}\end{pmatrix}\begin{pmatrix}\sigma_{k-1}(\lambda)\\\tau_{k-1}(\lambda)\end{pmatrix}.
\end{aligned}
\een
Then $\sigma_k$, $\sigma_{k-1}$, and $\tau_{k-1}$ are polynomials of degree $2k$, $2k-2$, and $2k-3$, respectively.
For each perturbed quantum walk $U_{\vartheta,\e}$, the transfer matrix $\mathbb{T}_{\vartheta,\e}(\lambda)$ is given by replacing $T_\lambda(k)$ by the analytic continuation of $\mc{M}^{-1}(e^{i\kappa}B(\vartheta,\e))$, and we have
\ben
\sigma_k(\lambda,\vartheta,\e):=\lambda^k\begin{pmatrix}1&0\end{pmatrix}\mathbb{T}_{\vartheta,\e}(\lambda)\begin{pmatrix}1\\0\end{pmatrix}
=\sigma_k(\lambda)+\e e^{i\theta(k)}\begin{pmatrix}\lambda^2\alpha(\vartheta)&\lambda\beta(\vartheta)\end{pmatrix}\begin{pmatrix}\sigma_{k-1}(\lambda)\\\tau_{k-1}(\lambda)\end{pmatrix}+\ord(\e^2),
\een
with
\ben
\alpha(\vartheta)=i(\sin\vartheta)p(k)+e^{i\vartheta}q(k),\quad
\beta(\vartheta)=i(\sin\vartheta)\overline{q(k)}+e^{-i\vartheta}\overline{p(k)}.
\een
Note that the vector ${}^t(\sigma_{k-1}(\lambda),\tau_{k-1}(\lambda))$ is the first column vector of the product $\lambda^{k-1}T_\lambda(k-1)\cdots T_\lambda(1)$. It does not equal ${}^t(0,0)$ for any $\lambda$ since the determinant of these matrices never vanishes. 
%There are at most finite number of $\vartheta$ for each $\lambda\neq0$ such that $(\lambda^2\alpha(\vartheta),\lambda\beta(\vartheta))=(0,0)$.

Let $\lambda_0\in\ope{Res}(U)\setminus\{0\}$ be a multiple resonance: $m(\lambda_0)\ge2$. 
According to Lemma~\ref{lem:Chara-T}, $\lambda_0$ is a zero of multiplicity $m(\lambda_0)$ of $\sigma_k(\lambda)$.
By the Taylor expansion near $\lambda_0$, we have
\ben
\sigma_k(\lambda,\vartheta,\e)=c(\lambda-\lambda_0)^{m(\lambda_0)}-\e\gamma(\vartheta,\lambda_0)+\ord(|\lambda-\lambda_0|^{m(\lambda_0)+1})+\ord(\e^2)+\ord(\e|\lambda-\lambda_0|),
\een
for some constant $c\neq0$ and 
\ben
\gamma(\vartheta,\lambda_0)=e^{i\theta(k)}(\lambda_0^2\alpha(\vartheta)\sigma_{k-1}(\lambda_0)+\lambda_0\beta(\vartheta)\tau_{k-1}(\lambda_0)).
\een
There are at most finite number of $\vartheta$ for each $\lambda\neq0$ such that $\gamma(\vartheta,\lambda_0)=0$. Remember that the number of non-zero resonances is at most finite, and hence there exists $\vartheta_0\in\R/2\pi\Z$ such that $\gamma(\vartheta_0,\lambda)\neq0$ for any $\lambda\in\ope{Res}(U)\setminus\{0\}$ with $m(\lambda)\ge2$.
Then Lemma~\ref{lem:Rouche} implies that there are only simple zeros of $\sigma_k(\lambda,\vartheta_0,\e)$ for any small $\e>0$ near each $\lambda_0$. This ends the proof.
\end{proof}

%\begin{remark}The proof shows that the multiplicity depends on\end{remark}

\section*{Acknowledgements}
K.H. is grateful to the Grant-in-Aid for JSPS Fellows (Grant No.~JP22J00430). 
H.M. was supported by the Grant-in-Aid for Young Scientists Japan Society for the Promotion of Science (Grant No.~20K14327). 
E.S. acknowledges financial supports from the Grant-in-Aid of 
Scientific Research (C) Japan Society for the Promotion of Science (Grant No.~19K03616) and Research Origin for Dressed Photon.

Kenta Higuchi, 
Graduate School of Science and Engineering,
Ehime University/ 
Bunkyocho 3, Matsuyama, Ehime, 
790-8577,  Japan

\textit{E-mail address}: higuchi.kenta.vf@ehime-u.ac.jp

Hisashi Morioka, 
Graduate School of Science and Engineering, 
Ehime University/ 
Bunkyo-cho 3, Matsuyama, Ehime, 
790-8577, Japan

\textit{E-mail address}: morioka@cs.ehime-u.ac.jp

Etsuo Segawa, 
Graduate School of Environment Information Sciences, 
Yokohama National University/ 
Hodogaya, Yokohama, 
240-8501, Japan

\textit{E-mail address}: segawa-etsuo-tb@ynu.ac.jp

\end{document}